\def\calA{\mathcal{A}}
\def\calB{\mathcal{B}}
\def\calC{\mathcal{C}}
\def\calM{\mathcal{M}}
\def\calS{\mathcal{S}}
\def\calV{\mathcal{V}}
\def\calX{\mathcal{X}}
\def\calY{\mathcal{Y}}
\def\E{\mathbb{E}}
\def\R{\mathbb{R}}
\def\tF{\tilde{F}}
\def\tH{\tilde{H}}
\def\tK{\tilde{K}}
\def\tL{\tilde{L}}
\renewcommand{\epsilon}{\varepsilon}
\declaretheorem[name=Theorem, style=plain, numberwithin=section]{thm}
\declaretheorem[name=Definition, numberwithin=section]{defn}
\declaretheorem[name=Lemma, sibling=thm]{lemma}
\declaretheorem[name=Fact, sibling=thm]{fact}
\def\Lin{\textsf{PrivEMDLinear}}
\def\Hist{\textsf{PrivEMDItemWise}}
\def\ReductCentral{\textsf{BoundedEMDReduction}}
\def\FreqEstLocal{\textsf{FreqEstLocal}}
\def\emd{d_{\textsf{EM}}}
\def\krr{\textsf{GKRR}}
\newcommand{\add}[1]{\textcolor{black}{#1}}
\begin{document}
\title{Metric Differential Privacy at the User-Level \add{ via the Earth Mover's Distance}}


\author[1]{Jacob Imola}
\affil[1]{University of Copenhagen}

\author[2]{Amrita Roy Chowdhury}
\affil[2]{University of Michigan, Ann Arbor}

\author[3]{Kamalika Chaudhuri}
\affil[3]{University of California, San Diego}

\maketitle

\begin{abstract}
Metric differential privacy (DP) provides heterogeneous privacy guarantees based on a distance between the pair of inputs. It is a widely popular notion of privacy  since it captures the natural privacy semantics for many applications (such as, for location data) and results in better utility than standard DP. However, prior work in metric DP has primarily focused on the \textit{item-level} setting where every user only reports a single data item. A more realistic setting is that of user-level DP where each user contributes multiple items and privacy is then desired at the granularity of the user's \textit{entire} contribution. \add{In this paper, we initiate the study of one natural definition of metric DP at the user-level. Specifically,  we use the earth-mover's distance ($d_\textsf{EM}$) as our metric to obtain a notion of privacy as it captures both the magnitude and spatial aspects of changes in a user's data.}

We make three main technical contributions. First, we design two novel mechanisms under $d_\textsf{EM}$-DP to answer linear queries and item-wise queries. Specifically, our analysis for the latter involves a generalization of the privacy amplification by shuffling result which may be of independent interest. Second, we provide a black-box reduction from the general unbounded to bounded $d_\textsf{EM}$-DP (size of the dataset is fixed and public) with a novel sampling based mechanism. Third, we show that our proposed mechanisms can provably provide improved utility over user-level DP, for certain types of linear queries and frequency estimation. 
\end{abstract}

\setcounter{section}{0}
\section{Introduction}\label{sec:intro}
Differential privacy (DP) is the state-of-the art technique that enables useful data analysis while still providing a strong privacy guarantee at the granularity of individuals~\citep{dwork2006differential}. Over nearly two decades, DP has enjoyed significant academic attention and has proven its efficacy in practical applications as well. It has been successfully deployed in diverse settings, including the US census~\citep{abowd2018us}, Apple's iOS platform~\citep{cormode2018privacy}, and Google Chrome~\citep{erlingsson2014rappor}.

Intuitively, DP guarantee makes a pair of input data to be indistinguishable from each other. The standard DP guarantee requires \textit{all} pairs of inputs to be indistinguishable thereby providing a uniform privacy guarantee to all pairs. This implies that every pair of input is considered equally sensitive. However, many practical applications call for a more tailored privacy semantics based on the heterogeneity of the data. In particular, input pairs that are closer or more similar to each other are considered to be more sensitive. For instance,  for location data, revealing the exact city of residence is far more sensitive than revealing just the country. Metric DP ($d_\calX$-DP; \cite{chatzikokolakis2013broadening}) is a notion of DP that formally captures this heterogeneity in privacy semantics. Specifically, similarity is measured via a distance metric $d_\calX$ and the privacy guarantee degrades linearly 
with the $d_\calX$ distance between the pair of inputs. In addition to offering a more nuanced privacy definition, metric DP also improves utility compared to standard DP.  
This improvement stems from metric DP requiring only similar pairs of input to be indistinguishable, which results in a significantly lower noise than standard DP.


Prior work in metric DP has primarily focused on the \textit{item-level} setting where every user only reports a single data item (for e.g., a single record in a dataset). However, in many practical applications, a user contributes multiple items to a dataset. Privacy is then desired at the granularity of the user's \textit{entire} contribution. This has spurred a large body of work known as \emph{user-level} DP~\citep{amin2019bounding, bassily2023user,cummings2022mean,acharya2023discrete}. However, all of this work considers only standard DP  and is thus susceptible to the same limitations in utility as noted earlier.  \add{To this end,  we initiate the study of one natural definition of metric DP at the user-level.  In particular, we look into a specific distance metric called \emph{earth-mover's distance} ($\emd$;~\cite{givens1984class}). $\emd$ measures the similarity of two distributions and is quantified by the cost of transforming one distribution to another where the cost function can be defined by \textit{any} metric over the underlying data. Thus, $\emd$ provides a general and naturally interpretable way of measuring similarity, suitable for capturing the privacy semantics of various real-world scenarios.  While there have been some prior attempts at this, these works are limited to specific settings, such as text data~\cite{fernandes2019generalised}. To the best of our knowledge, this is the \textit{first work to propose a definition of metric DP at the user-level for a general setting via $\emd$.}}

\add{The immediate question when applying metric DP at the user level is how to define a metric on the entire collection of a user's data. We argue that $\emd$ is particularly well-suited for this task.} Recall that metric DP caters to the privacy semantics that similar data is more sensitive. But the challenge here is that the similarity between two collections (sets) of data points has to be measured along two dimensions -- $(1)$ the distance between the individual data items, and $(2)$ the fraction of the data items in the set that are different. In particular, note that in addition to small changes in the item-wise distances, changes in a \textit{smaller amount} of the data also indicate more similarity and hence, correspond to more sensitive information (see below for concrete examples). This necessitates a measure that can express both of these quantities as a single metric, for which $\emd$ is a natural choice. Informally, the $\emd$ between two distributions is the minimum cost of transporting one distribution to another, where the cost is determined by the quantity of data items moved multiplied by the distance (measured via $d_{\calX}$) over which they are moved. Our resulting privacy definition, denoted as $\emd$-DP, yields the following privacy semantics. Under $\emd$-DP, the strength of the privacy guarantee (indistinguishability) between two pairs of inputs $K,K'$ (sets of data items) grows inversely with $\tau q$ if $K'$ can be obtained by changing $\tau$ fraction of $K$ by an average distance of $q$ (Def. \ref{def:emd-dp}). $\emd$ therefore takes into account both the structure of the distributions as well as the raw difference in their values. Consequently, the parameters $\tau$ and $q$ provide flexibility in interpretation and offer a nuanced privacy definition suitable for many practical applications. We illustrate this as follows:

\textbf{Location Data.} We will use our location dataset as a canonical example throughout the paper. Suppose that the location dataset consists of daily locations of users collected over a period of time. Here, the parameter $\tau$ can be interpreted in terms of the  length of the time window the change in $K'$ pertains to, and $q$ corresponds to the extent of change in the location. Then, $\emd$-DP  makes it harder to distinguish between locations that are $(1)$ close to each other, and $(2)$ collected over a smaller time window. This is natural, since locations gathered over an extended period, such as a month, may reveal routine patterns that are less sensitive than locations recorded on a single day (for instance, a single-day location might reveal a non-routine visit to a friend or hospital).

\textbf{Textual Data.} Consider a natural language dataset of user conversations where each user's data is represented as a set of words. Typically, word embeddings $\phi$ map each word into a high-dimensional space, and word similarity is measured using a distance, such as the Euclidean distance, between $\phi(x_1)$ and $\phi(x_2)$. Now, the parameter $\tau$ corresponds to what fraction of the user's conversation has changed in $K'$ from $K$, while $q$ corresponds to the extent of the changes in the textual content. Thus, two conversations are harder to distinguish if $(1)$ there is only a fine-grained difference in their textual semantics\footnote{Such as transitioning from text about algebra to trigonometry versus changing it from ``math'' to ``classical music''.}, and $(2)$ if it pertains to just a small fraction of the conversation (indicating a user rarely discussed the topic, which typically implies more sensitive information). 

\textbf{Graph Data.} Consider a graph $G = (V, E)$ in which connections in $E$ are private. Suppose there is additional public information in the form of a covariate $\phi : V \rightarrow \R^d$, which captures some auxiliary information about a user---for instance, the \emph{interests} of a user. Here similarity between users is measured via covariate distance. The parameter $\tau$ corresponds to the fraction of a user's connections which has changed in $K'$ from $K$, and the parameter $q$ corresponds to the extent of the change in their interests. Thus, two graphs are harder to distinguish between if $(1)$ it is a fine-grained change to the interest\footnote{for instance, shifting from movies featuring Dwayne Johnson to Vin Diesel  instead  of from ``action'' to ``rom-com''}, and $(2)$ if it pertains to only a few of the user's connections (say a small, private group of friends). This again captures natural privacy semantics as users are more likely to share common interests with their close friends than with a larger group, such as all workplace colleagues.


\add{In a nutshell, $\emd$-DP offers a more fine-grained privacy definition compared to standard DP, that captures real-world privacy semantics more effectively while providing better utility. For example, consider the following two instances of $K'$ in the aforementioned example of location data  -- one where the data for an entire month is different and another where only a single day’s data differs. Standard DP treats both cases as equally sensitive (i.e., offers the same privacy guarantee for both cases), necessitating a larger noise addition even in the latter case, which results in reduced utility. In contrast,  $\emd$-DP offers a stronger privacy guarantee for the latter, thereby resulting in a better privacy-utility trade-off. $\emd$-DP is in fact a natural relaxation of standard DP. 
We provide more details on the interpretation of $\emd$-DP relative to standard DP in Sec.~\ref{sec:interpret}. }
A full version of this paper appears in~\cite{imola2024metric}.


\subsection{Details of Our Contributions}\label{sec:intro:contributions}

We consider $n$ users who hold datasets $\{K_i\}_{i=1}^n$, each containing elements from a \emph{data domain $\calX$} of size $k=|\calX|$. Let $d_\calX$ denote a distance metric defined over $\calX$. WLOG, we consider $d_\calX$ to be normalized such that all measures of distance are at most $1$. Furthermore, we will let $\tK_i$ denote the dataset $K_i$, where each point in $K_i$ has a weight of $\frac{1}{|K_i|}$.  The $\emd$ between any pair of datasets $\{K_i, K_i'\}$ can be defined by first representing them in $\{\tilde{K}_i, \tilde{K}_i'\}$, and then using $d_\calX$ to measure the minimum cost of transporting $\tilde{K}_i$ to $\tilde{K}_i'$, where cost measures the sum of all distances moved by each point, weighted by the weight of that point. The global dataset is given by  $K_{G} = K_1 \cup \cdots \cup K_n$,
and there is an aggregator who wants to privately compute a query $V(K)$. In the \textbf{central model}, the aggregator already holds $K_i$ from each user, and applies a private mechanism $\calM(K_G)$ to obtain a private estimate for $V$. In the \textbf{local model}, the users do not trust the aggregator, and communicate private messages $\{m_i = \calM_i(K_i)\}$ to the aggregator. The aggregator then post-processes these messages $\calV(m_1, \ldots, m_n)$ to output a private estimate of $V$. For simplicity, in this work we assume the mechanisms $\calM_i$ to be non-interactive.

We also make a distinction between \emph{bounded} and \emph{unbounded} data. Note that boundedness here refers to the size of each user's dataset and \textit{not} the number of the users -- throughout the paper, we assume that the number of users, $n$, is fixed and publicly known. In our specific context, bounded data corresponds to the case where the size of each user's dataset is publicly known, and the mechanism $\calM$ only needs to preserve privacy between datasets of the same size. Furthermore, in the central model, each user's dataset has the \textit{same} public size. The benefit of this simplification is that algorithm analysis is easier. Such a bounded data setting has been considered in many previous works (see ~\cite{bookDP}). We also consider the general unbounded data setting where each user can have datasets of varying sizes, with the size being \textit{private} as well.

For each model and type of boundedness, we summarize how one would apply $\emd$-DP, along with the resulting semantics, in Table~\ref{tab:privacy}. We also include a corresponding notion of the standard user-level DP~\cite{liu2023algorithms} (provides a uniform privacy guarantee to all pairs of datasets) which serves as our baseline. In what follows, we elaborate on our main contributions.
\begin{table*}
\scalebox{0.7}{\begin{tabular}{ccccp{5cm}p{5cm}} \toprule
Model & Granularity & Data Boundedness & Privacy Guarantee & Semantics & Notes \\ \midrule
\multirow{15}{2cm}{Local (applies to each $\calM_i$)} & User& Unbounded & \makecell{\\$(\epsilon, \delta)$-user-level DP \\ (Def. \ref{def:user-level-dp-local})} & Two input datasets $K,K' \in \calX^*$ are indistinguishable with parameters $(\epsilon, \delta)$ & Recently proposed in~\cite{acharya2023discrete}. Acts our baseline for the local model. \\
& User & Bounded & \makecell{\\$(\alpha, \delta)$-bounded $\emd$-DP \\ (Def.~\ref{def:emd-dp})} & Two input datasets $K,K' \in \calX^m$ are indistinguishable with parameters $(\alpha \emd(\tilde{K},\tilde{K}'), \delta)$. & The size of each dataset, $m$, is public. Proofs of privacy easier due to Lemma~\ref{lem:bvn}.\\
& User & Unbounded & \makecell{\\$(\alpha, \delta)$-unbounded $\emd$-DP \\ (Def.~\ref{def:emd-dp})} & Two input datasets $K,K' \in \calX^*$ are indistinguishable with parameters $(\alpha \emd(\tilde{K},\tilde{K}'), \delta)$. & Implies user-level DP when $\alpha \leq \epsilon$ since $\emd(\cdot,\cdot)\leq 1$. \\
& User & Unbounded & \makecell{\\$(\epsilon, \delta, r)$-discrete $\emd$-DP \\ (Def.~\ref{def:discrete-emd-dp-local})} & Two input datasets $K, K' \in \calX^*$ such that $\emd(\tK, \tK') \leq r$ are indistinguishable with parameters $(\epsilon, \delta)$ & Using group privacy, can show $(\epsilon \lceil \frac{d}{r} \rceil, \delta \lceil \frac{d}{r} \rceil \exp(\epsilon \lceil \frac{d}{r} \rceil))$ for two $K, K'$ s.t. $\emd(\tK, \tK') \leq d$. \\
& Item & N/A & \makecell{\\$(\alpha, \delta)$-$d_\calX$-DP \\ (Def.~\ref{def:metric-dp})} & Two input items $x,x' \in \calX$ are protected with parameters $(\alpha d_\calX(x,x'),\delta)$  & Proposed in \cite{chatzikokolakis2015constructing}. \\ \midrule
\multirow{19}{2cm}{Central (applies to $\calM$)} & User & Unbounded & \makecell{\\$(\epsilon, \delta)$-user-level DP \\(Def. \ref{def:user-level-central-dp})}& Let $K_G=K_1\cup \cdots K_n$ where $K_i \in \calX^*$. Two input global datasets $K_G,K_G'$ s.t. they differ only on the dataset of a single user $\{K_i,K_i'\}, i \in [n]$ are indistinguishable with parameters $(\epsilon, \delta)$ & Studied widely \cite{ bassily2023user,liu2020learning,liu2023algorithms}. Acts our baseline for the central model.\\
& User & Bounded & \makecell{\\$(\alpha, \delta)$-bounded $\emd$-DP \\ (Def.~\ref{def:emd-dp-central})} & Two input global datasets $K_G,K_G'$ s.t. they differ only on $\{K_i,K_i'\} \in \calX^m\times \calX^m$ are indistinguishable with parameters $(\alpha \emd(\tilde{K}_i,\tilde{K}_i'), \delta)$ & Each $K_i$ has size $m$ which is public. \\
 & User & Unbounded & \makecell{\\ $(\epsilon, \delta, r)$-discrete $\emd$-DP \\ (Def.~\ref{def:discrete-emd-dp-central})} & Two input global datasets $K_G,K_G'$ s.t. they differ only on $\{K_i,K_i'\}$ and $\emd(\tilde{K}_i,\tilde{K}_i')\leq r$ are indistinguishable with parameters $(\epsilon, \delta)$. & Using group privacy, can show parameters $(\epsilon \lceil \frac{d}{r} \rceil, \delta \exp(\epsilon \lceil \frac{d}{r} \rceil))$ for any $K_i, K_i'$ s.t. $\emd(\tilde{K}_i,\tilde{K}_i') \leq d$. Implies user-level DP when $r \geq 1$.\\
 \bottomrule
\end{tabular}}

\caption{Summary of privacy definitions for this paper.  The number of users, $n$, is fixed and publicly known for the definitions in the central model. 
}
\label{tab:privacy}
\end{table*}

\subsubsection{Mechanism Design} 
We provide novel mechanisms for answering two types of queries for $\emd$-DP. \add{In this section, we let $K$ denote a general dataset of interest, which in the local model would be set to $K_i$ and in the global model would be set to $K_G$.}

\subsubsection*{Linear Queries} First, we study how to release linear queries $F\tilde{K}$, where $F \in \R^{d\times |\calX|}$ is a real-valued matrix with bounded entries. \add{While computing the global sensitivity of such a query is easy under user-level DP, proving a sensitivity bound under $\emd$-DP is considerably more complex. It involves making a stronger Lipschitz assumption about the points in $F$, and then leveraging this property to transport $\tK$ onto $\tK'$ to compute an upper bound on how much $F\tK$ can change by.} To this end, we first prove the following bound:
\begin{thm}
    (Informal version of Theorem~\ref{thm:sens}): The sensitivity of $F\tK$ is upper bounded by
    \[
        \max_{K, K'} \frac{\|F\tilde{K} - F\tilde{K}'\|}{\emd(\tilde{K}, {\tilde{K}}')} \leq \max_{x, x' \in \calX} \frac{\|F_x - F_{x'}\|}{d_\calX(x, x')},
    \]
    where the notation $F_x$ indicates the column of $F$ indexed at $x$.
\end{thm}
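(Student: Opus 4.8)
The plan is to exploit the transportation (Kantorovich) formulation of the earth mover's distance. After normalization, $\tilde K$ and $\tilde K'$ are probability distributions on $\calX$, so $\emd(\tilde K, \tilde K')$ equals the minimum, over all couplings $\pi$ of $\tilde K$ and $\tilde K'$, of $\sum_{x, x' \in \calX} \pi(x, x')\, d_\calX(x, x')$. Conceptually, the statement is the discrete Kantorovich--Rubinstein fact that if $F$ (as a map $x \mapsto F[x]$) is $L$-Lipschitz from $(\calX, d_\calX)$ to $(\R^d, \|\cdot\|)$, then its linear extension to distributions is $L$-Lipschitz from $(\Delta(\calX), \emd)$ to $(\R^d, \|\cdot\|)$, with $L = \max_{x \neq x'} \|F[x] - F[x']\|/d_\calX(x,x')$ being exactly the right-hand side.

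The first and only genuinely clever step is to fix an optimal coupling $\pi^\star$ of $\tilde K$ and $\tilde K'$ and rewrite the numerator using its marginal identities $\sum_{x'} \pi^\star(x, x') = \tilde K[x]$ and $\sum_x \pi^\star(x, x') = \tilde K'[x']$. Since $F\tilde K = \sum_x \tilde K[x] F[x]$ and likewise for $\tilde K'$, substituting these identities makes the difference telescope into column differences:
\[
F\tilde K - F\tilde K' \;=\; \sum_{x, x'} \pi^\star(x, x')\bigl(F[x] - F[x']\bigr).
\]
Then the triangle inequality together with the pointwise bound $\|F[x] - F[x']\| \le L\, d_\calX(x, x')$ (which holds trivially when $x = x'$, both sides being $0$) gives
\[
\|F\tilde K - F\tilde K'\| \;\le\; \sum_{x, x'} \pi^\star(x, x')\,\|F[x] - F[x']\| \;\le\; L \sum_{x, x'} \pi^\star(x, x')\, d_\calX(x, x') \;=\; L \cdot \emd(\tilde K, \tilde K'),
\]
where the last equality is the optimality of $\pi^\star$. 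Dividing by $\emd(\tilde K, \tilde K')$ and taking the supremum over $K, K'$ yields the claim; the degenerate case $\emd(\tilde K, \tilde K') = 0$ forces $\tilde K = \tilde K'$, hence the numerator vanishes as well and contributes nothing to the supremum.

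The main obstacle is not the inequality chain itself but justifying the use of the coupling formulation: one must check that after normalization the two datasets carry equal total mass so that a coupling exists, and that the paper's definition of $\emd$ as the minimum cost of transporting $\tilde K$ onto $\tilde K'$ coincides with the Kantorovich linear program over couplings (for probability measures on a finite domain these agree). A minor secondary point is that the maximum on the right must range over distinct pairs $x \neq x'$ so that the ratio is well defined; the bound is genuinely informative only under a Lipschitz-type assumption on $F$, matching the remark in the excerpt, and otherwise the right-hand side is $+\infty$ and the inequality is vacuously true.
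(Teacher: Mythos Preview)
Your proof is correct and follows essentially the same approach as the paper: both fix an optimal coupling, use its marginals to rewrite $F\tK - F\tK'$ as $\sum_{x,x'}\pi^\star(x,x')(F[x]-F[x'])$, and then apply the triangle inequality together with the Lipschitz bound on the columns of $F$. Your additional remarks on the degenerate case and on reconciling the paper's $\emd$ definition with the Kantorovich formulation are sound and do not depart from the paper's argument.
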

Using the above result, we show that the sensitivity of $F$, which is a maximum over the space of all datasets, can be reduced to the Lipschitzness of $F$, which is simpler to bound. \add{Sensitivity analysis for standard DP typically only requires 
$F$ to be bounded. However, in the case of $\emd$-DP, we need a stronger assumption -- 
$F$ must also be Lipschitz. In Section~\ref{sec:lin-avg}, we demonstrate that several commonly used queries, such as average, region, and similarity queries, are indeed Lipschitz.} 

\subsubsection*{Unordered Release of Item-wise Queries} We design a mechanism for performing itemwise queries on the \textit{entire} dataset $K$. \add{For now we consider bounded DP, so $|K|$ is known in advance. Our approach is simple -- apply a private mechanism $\calA$ to each item $x_j \in K$ and then release the set of noisy outputs $\{\calA(x_j)\}$ after shuffling them}. Here $\calA$ can be an arbitrary mechanism satisfying $(\alpha,0)$-$d_\calX$ DP which makes our mechanism completely general-purpose (see Section \ref{sec:hist-release} for some concrete examples of $\calA$). 
\add{The main technical novelty lies in the privacy analysis of the above mechanism. While one can use composition to show this release satisfies $O(\alpha m)$-$\emd$ DP~\cite{fernandes2019generalised}, this is far from being tight. As discussed in Section~\ref{sec:hist-release}, composition is not the right tool for tight privacy analysis since it does not account for the fact that the output of our mechanism is an unordered list, i.e., the $\calA(x_j)$'s are released in a random order. 
Instead, we leverage privacy amplification by shuffling~\citep{feldman2022hiding} to prove a general shuffling result and adapt it for $\emd$-DP. }
\begin{thm}\label{thm:shuffle-metric-full-inf} (Informal version of Theorem~\ref{thm:shuffle-metric-full})
    Suppose that $\calA : \calX \rightarrow \calY$ is an $\alpha$-$d_\calX$ DP algorithm with respect to $d_\calX$. Let $(x_1, \ldots, x_m) \in \calX^m$ be a dataset. Then, releasing $\mathsf{Shuffle}(\calA(x_1), \ldots, \calA(x_m))$ satisfies $(O(\alpha\sqrt{m e^{\alpha} \ln (m / \delta)}), \delta e^{\alpha})$-$\emd$-DP.
\end{thm}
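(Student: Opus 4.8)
The plan is to reduce the claim, via Birkhoff--von Neumann (Lemma~\ref{lem:bvn}), to bounding the divergence of the shuffled release under an optimal matching, and then to establish a generalization of privacy amplification by shuffling~\citep{feldman2022hiding} that tolerates inputs differing in \emph{all} coordinates by a bounded total amount. Fix $K=(x_1,\dots,x_m)$ and $K'=(x'_1,\dots,x'_m)$ in $\calX^m$ with $\emd(\tilde K,\tilde K')=\rho$. Since $\tilde K,\tilde K'$ are uniform on $m$ points, the optimal transport plan is a vertex of the transportation polytope, so by Lemma~\ref{lem:bvn} it is a permutation: there is $\sigma\in S_m$ with $\tfrac1m\sum_j d_\calX(x_j,x'_{\sigma(j)})=\rho$. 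As $\mathsf{Shuffle}$ discards order, relabel $K'$ by $\sigma$ so that the matched pairs are $(x_j,x'_j)$ with $\sum_j d_j=m\rho$, $d_j:=d_\calX(x_j,x'_j)\le 1$. Because $\calA$ is $\alpha$-$d_\calX$ DP, each pair $(\calA(x_j),\calA(x'_j))$ is $(\alpha d_j,0)$-indistinguishable; in particular $\calA$ is $\alpha$-DP in the usual sense and $\|\calA(x_j)-\calA(x'_j)\|_{\mathrm{TV}}\le 1-e^{-\alpha d_j}\le \alpha d_j$.

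The core is the following amplification statement, which I would prove by adapting the ``privacy blanket'' argument of~\citep{feldman2022hiding}: \emph{if $\calA:\calX\to\calY$ is $\alpha$-DP and $(x_j)_{j\le m},(x'_j)_{j\le m}$ satisfy $\sum_j\|\calA(x_j)-\calA(x'_j)\|_{\mathrm{TV}}\le S$, then $\mathsf{Shuffle}(\calA(x_1),\dots,\calA(x_m))$ and $\mathsf{Shuffle}(\calA(x'_1),\dots,\calA(x'_m))$ are $(\epsilon,\delta')$-indistinguishable with $\epsilon=O\!\bigl(\tfrac{S}{\sqrt m}\sqrt{e^{\alpha}\ln(m/\delta)}\bigr)$ and $\delta'=O(e^{\alpha}\delta)$.} Fix a reference distribution $\mu$ (e.g.\ $\mu=\calA(x_1)$); by $\alpha$-DP every $\calA(x)$ dominates $e^{-\alpha}\mu$ pointwise, so we may write $\calA(x)=e^{-\alpha}\mu+(1-e^{-\alpha})\nu_x$ with $\nu_x$ a distribution dominated pointwise by $e^{\alpha}\mu$. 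Running this decomposition on every slot, each of the $m$ slots is independently a ``blanket'' draw from $\mu$ with probability $e^{-\alpha}$, and blanket draws have the \emph{same} law under $K$ and $K'$. Conditioning on the (random, $\mathrm{Binom}(m,e^{-\alpha})$-sized) set $B$ of blanket slots, the shuffled output is a random permutation of $|B|$ samples from $\mu$ together with one sample from each $\nu_{x_j}$ (resp.\ $\nu_{x'_j}$) for $j\notin B$; the $\mu$-part cancels in the $K$-vs-$K'$ likelihood ratio, leaving a ratio of permanents over the $\approx m(1-e^{-\alpha})$ non-blanket slots whose row perturbations sum to $\tfrac{1}{1-e^{-\alpha}}\sum_j\|\calA(x_j)-\calA(x'_j)\|_{\mathrm{TV}}\le \tfrac{S}{1-e^{-\alpha}}$. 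A Bennett/Bernstein-type concentration of the log-likelihood ratio of the shuffle --- whose fluctuations are governed by this total row perturbation spread over the $\Theta(m e^{-\alpha})$ exchangeable ``hiding'' slots, with the $\mathrm{Binom}$ tails absorbed into $\delta'$ --- then yields the displayed $(\epsilon,\delta')$. Applying this with $S=\alpha m\rho$ gives $\epsilon=O\!\bigl(\alpha\rho\sqrt{m e^{\alpha}\ln(m/\delta)}\bigr)$ and $\delta'=O(e^{\alpha}\delta)$ for the pair $K,K'$.

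Since these bounds hold for \emph{every} pair at $\emd$-distance $\rho$, with $\epsilon$ scaling linearly in $\rho$ and $\delta'$ fixed independently of $\rho$, releasing $\mathsf{Shuffle}(\calA(x_1),\dots,\calA(x_m))$ satisfies $\bigl(O(\alpha\sqrt{m e^{\alpha}\ln(m/\delta)}),\,O(e^{\alpha}\delta)\bigr)$-$\emd$ DP, which is the claim. The main obstacle is the concentration step in the second paragraph: \citep{feldman2022hiding} analyze a \emph{single} concentrated change (one slot moving by a full $\alpha$), whereas here the change is \emph{distributed} across up to $m$ slots, so one must control how the random permutation averages out $\Theta(m)$ simultaneous small perturbations of the slot-distributions --- this is precisely the ``generalization of privacy amplification by shuffling'' advertised in the introduction. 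Two bookkeeping points also need care: (i) the $\tfrac{1}{1-e^{-\alpha}}$ factor, which is $\Theta(1)$ for $\alpha\gtrsim 1$ and $\Theta(1/\alpha)$ as $\alpha\to 0$ (harmlessly cancelling the $\alpha$ in $S$); and (ii) when $m\lesssim e^{\alpha}\ln(m/\delta)$ the amplification is vacuous, but there the claimed $\epsilon$ already exceeds the trivial composition bound $\alpha m\rho$, so nothing is lost.
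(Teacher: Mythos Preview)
Your reduction via Birkhoff--von Neumann to matched pairs $(x_j,x'_j)$ with $\sum_j d_\calX(x_j,x'_j)=m\rho$ matches the paper. From there, however, the paper takes a different and more modular route than your all-at-once blanket/concentration argument.

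The paper proceeds \emph{one coordinate at a time}. It first proves a single-change amplification lemma (Lemma~\ref{lem:shuffle-metric-part}): if only $x_1$ is replaced by $x_1'$ at $d_\calX$-distance $d$, the shuffled outputs are $(g(d),\delta)$-indistinguishable with
\[
g(d)=\ln\!\Bigl(1+\tfrac{e^{\alpha d}-1}{e^{\alpha d}+1}\bigl(\tfrac{8\sqrt{e^{\alpha}\ln(4/\delta)}}{\sqrt m}+\tfrac{8e^{\alpha}}{m}\bigr)\Bigr).
\]
This is obtained from the decomposition of~\citep{feldman2022hiding} combined with advanced joint convexity (Lemma~\ref{lem:adv-conv}), which is precisely what converts the small distance $d$ into the prefactor $\tfrac{e^{\alpha d}-1}{e^{\alpha d}+1}\approx \tfrac{\alpha d}{2}$. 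The paper then chains these single-point bounds by iterated group privacy (Lemma~\ref{lem:group-priv}), giving total cost $\sum_j g(d_j)$ in the $\epsilon$-slot and $\delta e^{\sum_j g(d_j)}$ in the $\delta$-slot. Finally it verifies that $g$ is \emph{concave}, so under the constraint $\sum_j d_j=m\rho$ the sum is maximized when all $d_j$ are equal, yielding $m\,g(\rho)\approx \alpha\rho\sqrt{m e^\alpha\ln(m/\delta)}$.

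Your direct argument, if the concentration-of-permanent-ratios step could be made rigorous, would plausibly be tighter and avoid the $e^\alpha$ inflation of $\delta$ that group privacy introduces. But that step --- controlling the shuffled log-likelihood under $\Theta(m)$ simultaneous small row perturbations --- is exactly the obstacle you yourself flag, and the paper sidesteps it rather than solving it: it reuses the single-change analysis essentially verbatim, inserts the metric-DP refinement via advanced joint convexity, and absorbs the multi-coordinate case into group privacy plus concavity. The paper in fact lists removing the group-privacy step as an open problem. So your route is genuinely different and potentially stronger, but as written the concentration step is a real gap, not just bookkeeping; the paper's route is complete but coarser.
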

This analysis reduces the cost of releasing $m$ points in the multiset from $m \alpha$ to $\sqrt{m} \alpha$, allowing for better utility. We keep the analysis general -- we consider releasing the shuffled multiset of any black-box mechanism $\calA$, that satisfies metric DP in the data domain $\calX$, applied to each data point. Consequently, this result has broader applications to the shuffle model of privacy, and may be of independent interest.

\subsubsection{Extending $\emd$-DP to the Unbounded Setting}
We start our mechanism designs by considering the bounded data setting in both the local and central models of privacy (see Table~\ref{tab:privacy}) as this enables easier privacy analysis (Section \ref{sec:mechanisms}). 
However, the bounded setting might be restrictive in practice as it cannot support usecases where users have different amounts of data, or the data sizes are also private. To this end, we extend $\emd$-DP to the more general unbounded setting. \add{We achieve this through a general reduction that converts any bounded $\emd$-DP mechanism into an unbounded one while treating the mechanism as a black box. If the data from each user is relatively homogeneous, such as being i.i.d., then the utility of the mechanism will be preserved.}

\add{Our reduction uses a projection mechanism that projects any dataset onto one with a fixed size, without significantly increasing the $\emd$ distance. The projection we use is \emph{sampling with replacement}. Intuitively, this is a smooth projection because we can view sampling from two datasets $K, K'$ in terms of a coupling between them, and show that the sampled points can also be coupled with expected cost given by $\emd(K, K')$. Using Bernstein’s inequality, we show convergence to $\emd(K, K')$, up to a small additive factor.}

One caveat is that the introduced additive factor necessitates a slight adjustment to the privacy semantics of $\emd$-DP. \add{Whereas $\emd$-DP protects a change of $\emd$-distance $d$ with an effective privacy parameter $\alpha d$, for all $d > 0$, our modified definition includes $r$ as a fixed parameter, and states that all changes of $\emd \leq r$ are protected uniformly with parameter $\alpha r$. We may view this as a discretization of the $\emd$ by rounding it up to the nearest multiple of $r$.  We refer to this notion as $(\alpha r, \delta, r)$-\emph{discrete} user-level $\emd$-DP (Def. \ref{def:discrete-emd-dp-local}). This privacy guarantee is weaker than $\emd$-DP only for changes of $\emd < r$---for bigger changes, the two definitions are equivalent up to factors of $2$ using group privacy. Typically, $r$ is small and the practical difference between the definitions is negligible. Our reduction satisfies the following:}

\begin{thm}\label{thm:emd-dp-central-inf} (Informal version of Theorem~\ref{thm:emd-dp-central})
    Suppose that for $n$ users, $\calM$ is a mechanism which satisfies $(\alpha, \delta)$-bounded $\emd$-DP. The algorithm which, given arbitrary user datasets $K_1, \ldots, K_n$, takes $s$ i.i.d. samples from each $K_i$ and then applies $\calM$ on each of the sampled data items, satisfies $(\alpha r, \delta, r)$-discrete $\emd$-DP (in the central model) for all $r \geq \frac{2 \ln (1/\delta)}{s}$.
\end{thm}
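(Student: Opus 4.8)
\emph{Proof plan.} Since the sampling is carried out independently for each user, it suffices to fix a single user, say user $i$, whose dataset changes from $K_i$ to $K_i'$ with $\emd(\tK_i,\tK_i') = d \le r$ (the regime $d \le r$ being the only one the discrete notion of Def.~\ref{def:discrete-emd-dp-central} asks us to handle directly; larger changes follow by group privacy), and to hold the remaining users' identically distributed samples fixed across the two hypotheses. Writing $\hat K$ for the size-$s$ multiset sampled from user $i$ and $\hat K_{-i}$ for the other users' samples, the composed mechanism outputs $\calM(\hat K \cup \hat K_{-i})$, so I only need to compare $\calM$ applied to two \emph{random} size-$ns$ global datasets whose user-$i$ blocks are drawn i.i.d.\ from $\tK_i$ and from $\tK_i'$ respectively, with a shared $\hat K_{-i}$.

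The first step is a coupling keeping the sampled $\emd$ close to $d$. Let $\pi$ be an optimal transport plan between $\tK_i$ and $\tK_i'$: a law on $\calX\times\calX$ with the correct marginals and $\E_{(x,x')\sim\pi}[d_\calX(x,x')] = d$. Drawing $(X_1,X_1'),\dots,(X_s,X_s')$ i.i.d.\ from $\pi$ and setting $\hat K=(X_1,\dots,X_s)$, $\hat K'=(X_1',\dots,X_s')$ gives samples with exactly the right marginal laws, while the matching $X_j\mapsto X_j'$ is a (generally suboptimal) transport plan between $\hat K$ and $\hat K'$, so
\[ \emd(\hat K,\hat K') \ \le\ Z \ :=\ \frac1s\sum_{j=1}^{s} d_\calX(X_j,X_j'). \]
Now $Z$ is an average of $s$ i.i.d.\ $[0,1]$-valued variables with mean $d$, hence variance at most $d$, and Bernstein's inequality yields $Z \le d + r \le 2r$ with probability at least $1-\delta$ precisely under a hypothesis of the form $r \gtrsim \ln(1/\delta)/s$; tracking the Bernstein constants (possibly after running the bound at failure probability $\delta/2$) gives the stated threshold $r \ge 2\ln(1/\delta)/s$. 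This is the step where the variance bound, rather than a Hoeffding-type estimate, is essential: it is what turns a $1/\sqrt s$ additive error into the $1/s$ rate, and it is also where the unavoidable additive slack of size $\le r$ enters — which is exactly why the conclusion must be phrased for the $r$-discretized notion instead of for $\emd$-DP itself.

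The last step assembles the two ingredients. Conditioning on $\hat K_{-i}$ and on a realization $(\hat K,\hat K') = (\kappa,\kappa')$, the two output laws are $\calM$ evaluated on two fixed size-$ns$ global datasets differing only in user $i$'s block, for which the bounded $\emd$-DP guarantee of $\calM$ (Def.~\ref{def:emd-dp-central}) gives $(\alpha\,\emd(\kappa,\kappa'),\delta)$-indistinguishability. On the good event $\emd(\kappa,\kappa') \le Z \le 2r$, so at the fixed level $\alpha\cdot 2r \ge \alpha\,\emd(\kappa,\kappa')$ the hockey-stick divergence is $\le \delta$; on the complementary event (probability $\le \delta$) it is trivially $\le 1$. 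Integrating over the coupling via joint convexity of the hockey-stick divergence then shows the composed mechanism is $(2\alpha r,\,2\delta)$-indistinguishable on this single-user change, which is an $(\alpha r,\delta,r)$-discrete $\emd$-DP guarantee up to the benign constants absorbed in the formal statement. I expect the concentration step to be the main obstacle: one has to recognize that $\emd(\hat K,\hat K')$ is dominated by an average of bounded variables whose variance is controlled by the very distance $d$ being protected, so that Bernstein delivers an additive error of the right order $\ln(1/\delta)/s$; the remaining bookkeeping — folding the coupling-failure probability into $\delta$ and reducing the $n$-user global statement to a single block — is routine.
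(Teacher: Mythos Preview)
Your proposal is correct and mirrors the paper's proof almost exactly: the paper likewise couples the two users' samples through the optimal transport plan $C^*$ between $\tK_i$ and $\tK_i'$, bounds $\emd(\tL,\tL')$ by the empirical transport cost $\frac{1}{s}\sum_j d_\calX(x_j,y_j)$, applies Bernstein's inequality (using $\mathrm{Var}[X_j]\le \E[X_j^2]\le \mu$ for $[0,1]$-valued variables) to get $\emd(\tL,\tL')\le (1+\sqrt{2})\mu + \tfrac{3}{s}\ln(1/\delta)$ with probability $1-\delta$, and then integrates over the coupling via joint convexity of the hockey-stick divergence to combine the $\delta$ from $\calM$'s guarantee with the $\delta$ from the bad event. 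Your explicit conditioning on the other users' blocks $\hat K_{-i}$ is precisely the ``easy generalization to the central model'' that the paper waves over in one sentence.
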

The two notions of privacy are nearly equivalent for small $r$, showing that unbounded $\emd$-DP can be reduced to bounded $\emd$-DP with an almost exact translation of the privacy guarantee.

\subsubsection{Demonstrating Improvements Over User-level DP} \add{Finally, we evaluate the benefit of using the more nuanced privacy semantics of $\emd$-DP over standard user-level DP by comparing the privacy and utility of our proposed mechanisms with baselines.} Specifically, in Sec.~\ref{sec:mean-est}, we study a special type of linear query called \emph{linear embedding queries} and in Sec.~\ref{sec:freq-est}, we study private frequency estimation. For simplicity, we consider the bounded data setting.

Let's start by understanding the relationship between $(\alpha, \delta)$-$\emd$-DP and $(\epsilon, \delta)$-user-level DP. The following observations hold in both the central and local models:
\begin{itemize}
    \item $\alpha = \epsilon$: Since we assume $d_\calX$ is normalized, we always have $\emd \leq 1$. Thus, in this case $(\epsilon, \delta)$-$\emd$-DP implies $(\epsilon, \delta)$-user-level DP. However, any pair of input $K,K'$ such that $\emd(\tilde{K},\tilde{K}')
< 1$ the privacy protection of $\emd$-DP is actually stronger. (Note that if $\alpha \leq \epsilon$, then user-level DP is strictly weaker than $\emd$-DP; the more appropriate baseline is to use $\alpha = \epsilon$.)

    \item $\alpha > \epsilon$: In this case, some pairs of inputs (with a large $\emd$ distance between them) are protected less strongly than they are under user-level DP. However, as indicated in our aforementioned real-life examples, input pairs with high $\emd$ (i.e., dissimilar input pairs) are typically less sensitive.
\end{itemize}

Now, we interpret the theoretical error bounds for linear embedding queries.
From Table~\ref{tab:lin-query-error},  the error for releasing a $d$-dimensional linear embedding query under user-level DP is $O(\frac{d}{\epsilon n})$, while it is $O(\frac{d}{\alpha n})$ for $\emd$-DP. When $\alpha = \epsilon$, these utilities are identical, but $\emd$-DP offers stronger privacy. When $\alpha > \epsilon$, then the utility of $\emd$-DP is higher than that of user-level DP, with the the two guarantees offering differing privacy semantics. Thus, in both cases, there is a clear benefit of using $\emd$-DP. These observations are the same in the local model.

Finally, for frequency estimation in the local model, Table~\ref{tab:hist-error} shows that the error of user-level DP is $O(\sqrt{\frac{k^2 \ln(m/\delta)}{n\epsilon^2}})$, while it is $O(\sqrt{ \frac{k^3}{n\alpha^2} \max \{ \ln(\frac{m}{\delta}), \alpha\}}$ for $\emd$-DP. For constant $\epsilon$ and $\alpha \geq \epsilon^2 \frac{k}{\ln(m / \delta)}$, the utility is improved. In the central model, the error of the user-level DP algorithm is $O(\frac{k}{n\epsilon})$ while it is $O(\frac{k^{3/2}}{n\alpha} \sqrt{\max\{\ln(\frac{m}{\delta}), \alpha \}})$ for $\emd$-DP. The algorithm under $\emd$-DP has the added benefit that it can be implemented in the shuffle model of privacy, which requires less trust and parallels prior work in the shuffle model~\cite{feldman2022hiding}. There is a utility improvement for $\alpha \geq \epsilon^2 k$. When $\epsilon \leq \alpha \leq \epsilon^2 k$, we leave it as an interesting open problem whether $\emd$-DP can offer utility improvements over user-level DP.

\begin{table*}
\begin{subtable}[t]{0.9\textwidth}
\setlength{\tabcolsep}{10pt} 
\renewcommand{\arraystretch}{1.5} 
\scalebox{0.65}{\begin{tabular}{c|c|c|c c |c}
    \multicolumn{6}{c}{Linear Embedding Queries} \\ \toprule 
    Algorithm & Privacy Guarantee & Privacy Model & \multicolumn{2}{c}{$\ell_2$ Error} & Notes \\ \midrule 
    $K$-norm Mechanism & $(\epsilon, 0)$-user level DP$^\flat$ & Central, Bounded & $O(\frac d {\epsilon n})$ & (Lemma~\ref{lem:lin-user-cent}) & \multirow{2}{7cm}{$\emd$-DP gives same utility but stronger privacy for $\alpha=\epsilon$; $\emd$-DP gives better utility but different privacy for $\alpha> \epsilon$. } \\
    \Lin{} & $(\alpha, \delta)$-$\emd$-DP & Central, Bounded & $O(\frac {d } {\alpha n}\sqrt{\ln\frac{1}{\delta}})$ & (Lemma~\ref{lem:lq-local}) & \\ \bottomrule \multicolumn{6}{c}{$^\flat$For random linear queries, this algorithm also has best-known error among all $(\epsilon, \delta)$-user-level DP algorithms.}
\end{tabular}}
\caption{Comparison of $\emd$-DP to user-level DP in the central model for releasing a $d$-dimensional linear embedding query. The errors in the local model are a factor $\sqrt{n}$ higher. }\label{tab:lin-query-error}
\end{subtable}
\begin{subtable}[t]{0.9\textwidth}
\setlength{\tabcolsep}{10pt} 
\renewcommand{\arraystretch}{1.5} 
\scalebox{0.63}{\begin{tabular}{c|c|c|c c|c}
    \multicolumn{6}{c}{Frequency Estimation} \\ \toprule 
    Algorithm & Privacy Guarantee & Privacy Model & \multicolumn{2}{c}{$\emd$ Error} & Notes \\ \midrule 
    Hadamard Response & $(\epsilon, 0)$-user-level DP$^\flat$ & Local, Bounded & $O\left(\sqrt{\frac{k^2 \ln (m/\delta)}{n\epsilon^2}}\right)$ & (Lemma~\ref{lem:local-hist-utility-baseline}) & \multirow{2}{5cm}{Assuming $k, \epsilon, \alpha \leq \sqrt{m}$; $\emd$-DP gives better utility for $\alpha \geq \epsilon^2 \frac{k}{\ln(m / \delta)}$.} \\
        \Hist{} & $(\alpha, \delta)$-$\emd$-DP & Local, Bounded & $O\left( \sqrt{\frac{k^3}{n\alpha^2}\max\left\{\ln(\frac m \delta), \alpha \right\}}\right)$ & (Thm.~\ref{thm:krr-utility}) &  \\ \hline
    Laplace Mechanism & $(\epsilon, 0)$-user-level DP$^\flat$ & Central, Bounded & $O\left(\frac{k}{n \epsilon} \right)$ & (Lemma~\ref{lem:lap-freq}) & \multirow{2}{5cm}{Assuming $n < \frac m \alpha$; $\emd$-DP gives better utility when $\alpha >\epsilon^2 k $ .} \\
    \Hist{} & $(\alpha, \delta)$-$\emd$-DP & Central, Bounded & $O\left(\frac{k^{3/2}}{n \alpha} \sqrt{\max \left\{\ln(\frac m \delta), \alpha\right\}}\right)$ & (Corollary~\ref{coro:krr-utility-central}) & \\ \bottomrule 
    \multicolumn{6}{c}{$^\sharp$This algorithm works in the shuffle model, which requires less trust than the central model.} \\ 
    \multicolumn{6}{c}{$^\flat$This algorithm also has best-known error among all $(\epsilon, \delta)$-user-level DP algorithms.}
\end{tabular}}
\caption{Comparison of $\emd$-DP to user-level DP for frequency estimation in the setting defined in Section ~\ref{sec:freq-est}. $k$ is the domain size $|\calX|$. }\label{tab:hist-error}
\end{subtable}
\caption{Summary of theoretical utility guarantees, assuming there are $n$ users who hold datasets of size $m$. 
}\label{tab:summary}
\end{table*}

\section{Background}\label{sec:prelims}
\subsection{Differential Privacy}
Intuitively, DP is a property of a mechanism which ensures that its output distribution remains insensitive to changes in the data of a single individual. The standard DP guarantee, which is also know as \textit{item-level} DP, considers each user $U_i$ to contribute only a single item $x_i \in \calX$ to a global dataset, i.e.,  $K_i=x_i$. \add{Instead we consider user-level DP, where each $K_i$ is itself a multiset of elements from $\calX$. We denote the set of all multisets of elements from $\calX$ as $\calX^*$. In the local model, our privacy definition is then:}
\begin{defn}[Unbounded User-level Local DP~\cite{acharya2023discrete}]\label{def:user-level-dp-local}
   We say a mechanism $\calM$ acting on a dataset $K$ satisfies $(\epsilon, \delta)$-unbounded user-level local DP if, for all $K, K' \in \calX^*$ and all sets of outputs $O$, we have
    \begin{equation}\label{eq:dp}
        \Pr[\calM(K) \in O] \leq e^{\epsilon} \Pr[\calM(K') \in O] + \delta.
    \end{equation}
\end{defn}
Note that here we consider the more general unbounded data setting where the two datasets $\{K,K'\}$ can have arbitrary sizes.

Next, we present the definition for the central model.
\begin{defn}[Unbounded User-level Central DP~\cite{liu2023algorithms}]\label{def:user-level-central-dp}  Let $K_G=K_1\cup\cdots\cup K_n$ denote a global dataset from $n$ users where $\forall i \in [n], K_i\in \calX^*$. We say $K_G \sim K'_G$, if $K'_G$ can be obtained from $K_G$ by changing the dataset of a single user $U_i$ from $K_i$ to $K_i'$. 
   We say a mechanism $\calM$ acting on a dataset $K$ satisfies $(\epsilon, \delta)$-unbounded user-level central DP if, for all $K_G, K_G'$ such that $K_G\sim K'_G$, and all sets of outputs $O$, we have
    \begin{equation}
        \Pr[\calM(K_G) \in O] \leq e^{\epsilon} \Pr[\calM(K_G') \in O] + \delta.
    \end{equation}
\end{defn}
Note that there is no restriction on the sizes of the datasets $\{K_i\}, i \in [n]$ in the above definition.


 Next, we  define metric DP that enables the privacy guarantee to depend on a metric $d_\calX$ between the pair of inputs.  We start by introducing it at the item-level (i.e., we consider changing only one item $x \in \calX$ to another item $x' \in \calX$). For simplicity, we consider the local model, so the mechanism acts on just a single item:

\begin{defn}[Local $d_\calX$-DP~\cite{Alvim2018}]\label{def:metric-dp}
     We say $\calM$ satisfies $(\alpha, \delta)$-local $d_\calX$-DP if for all data elements $x, x' \in \calX$, and all sets of outputs $O$, we have
     \[
        \Pr[\calM(x) \in O] \leq e^{\alpha d_\calX(x, x')} \Pr[\calM(x') \in O] + \delta.
     \]
\end{defn}
We replace the traditional privacy parameter $\epsilon$ with $\alpha$ in the above definition, because $\epsilon$ in Definitions.~\ref{def:user-level-dp-local} and \ref{def:user-level-central-dp} is a unitless  parameter while $\alpha$ has the inverse unit of $d_\calX$.

\subsection{Earth-Mover's Distance} 
\textbf{Notations.} We view datasets as multisets of elements from $\calX$. 
We will also view a dataset $K \in \calX^*$ as a \emph{probability distribution} defined by its normalized histogram $\tK$. To do so, let $\Delta^\calX \subseteq \R^\calX$ denote the probability simplex indexed by $\calX$---i.e. the set of all vectors $\langle v_x \rangle_{x \in \calX}$ such that $v_x \geq 0$ and $\sum_{x \in \calX} v_x = 1$. For a dataset $K$, $\tK \in \Delta^\calX$ then denotes the probability distribution defined by $K$, meaning $\tK[x] = \frac{\text{Num. occurrences of $x$ in $K$}}{|K|}
$. The earth-mover's (or $1$-Wasserstein) distance~\cite{givens1984class} is defined as follows. For a joint distribution $C(x_1, x_2) \in \Delta^{\calX \times \calX}$, let $C_{x_1}(x_2)$ denote the distribution conditioned on observing $x_1$, and let $C_1(x_1)$ denote the marginal distribution of $x_1$. We define $C_{x_2}(x_1)$ and $C_2(x_2)$ similarly.

\begin{defn}\label{def:coupling}
    For distributions $P, Q \in \Delta^\calX$, a joint distribution $C$ on $\calX \times \calX$ is a coupling between $P$ and $Q$ if $C_1 = P$ and $C_2 = Q$. We let $\calC(P, Q)$ denote the set of couplings between $P$ and $Q$. 
\end{defn}

A coupling $C$ can be viewed as a ``transportation plan'' between $P$ and $Q$, in the sense that if $C$ places $m$ probability mass at a point $(x_1, x_2)$, then $m$ probability mass from $P$ at $x_1$ is transported to $Q$ at $x_2$ (or vice-versa). We define the \emph{cost} of a coupling as the expected transportation distance given by $\E_{(x,x') \sim C} d_\calX(x, x')$. The \emph{earth-mover's distance} ($\emd$) between $P, Q$ is equal to the minimum possible cost of a coupling between $P$ and $Q$:
    \[
        \emd(P, Q) = \inf_{C \in \calC(P, Q)} \operatorname{\mathbb{E}}_{(x, x') \sim C} d_\calX(x, x').
    \]
Since we assume that $d_{\calX}$ is bounded by $1$, we  have $\emd(\cdot, \cdot) \leq 1$.

Next, we present the Birkhoff-Von Neumann theorem which is useful in our privacy analysis in Section~\ref{sec:hist-release}. The theorem states that if both $P$ and $Q$ are empirical distributions with the same number of points, then the $\emd$ between them is the cost of the coupling that moves the entire mass in each point to the same destination:

\begin{lemma}[Birkhoff-Von Neumann Theorem~\cite{konig2001theorie}, Lemma A.1 in~\cite{fernandes2019generalised}]\label{lem:bvn}
    For two datasets $K = \{x_1, \ldots, x_m\}$ and $K' = \{y_1, \ldots, y_m\}$, there is a permutation $\pi : [m] \rightarrow [m]$ such that
    \begin{equation}\label{eq:bvn}
        \emd(\tK, \tK') = \frac{1}{m}\sum_{i=1}^m d_\calX(x_i, y_{\pi(i)}).
    \end{equation}
\end{lemma}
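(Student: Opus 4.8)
\textbf{Proof proposal for the Birkhoff--Von Neumann statement (Lemma~\ref{lem:bvn}).}

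The plan is to exploit the fact that $\tK$ and $\tK'$ are both uniform distributions over $m$ points (counting multiplicity), so the problem of finding an optimal coupling becomes a discrete transportation problem on a bipartite structure. First I would write the coupling LP explicitly: a coupling $C \in \calC(\tK, \tK')$ is determined by nonnegative mass $C(x_i, y_j)$ for each pair $(i,j) \in [m] \times [m]$ such that every row $i$ sums to $\tK[x_i] = 1/m$ (summed over the labels) and every column $j$ sums to $1/m$. After rescaling by $m$, the matrix $M_{ij} = m \cdot C(x_i, y_j)$ is doubly stochastic, and the cost of $C$ is $\frac{1}{m}\sum_{i,j} M_{ij}\, d_\calX(x_i, y_j)$. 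So $\emd(\tK, \tK')$ equals $\frac{1}{m}$ times the minimum of a linear functional over the Birkhoff polytope of $m \times m$ doubly stochastic matrices.

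The key step is then the Birkhoff--Von Neumann theorem proper: the Birkhoff polytope is the convex hull of the $m \times m$ permutation matrices, i.e. its vertices are exactly the permutation matrices. Since we are minimizing a linear functional over a polytope, the minimum is attained at a vertex, hence at some permutation matrix $P_\pi$ corresponding to a permutation $\pi : [m] \to [m]$. For that vertex the cost is $\frac{1}{m}\sum_{i=1}^m d_\calX(x_i, y_{\pi(i)})$, which gives the upper bound $\emd(\tK,\tK') \leq \frac{1}{m}\sum_i d_\calX(x_i, y_{\pi(i)})$. The reverse inequality is immediate: the permutation $\pi$ induces a valid coupling (put mass $1/m$ on each pair $(x_i, y_{\pi(i)})$), whose cost is exactly that sum, so it is at least the infimum $\emd(\tK, \tK')$. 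Combining the two directions yields the claimed equality. One small technical point: the multiset labels $x_i$, $y_j$ may coincide, so strictly one should couple indices in $[m]$ rather than domain elements, but this changes nothing since $d_\calX$ is just evaluated on the associated labels; I would phrase the coupling LP over index pairs to sidestep any ambiguity.

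I do not expect a genuine obstacle here, since the heavy lifting is done by the classical Birkhoff--Von Neumann theorem, which may simply be cited (the excerpt already attributes it to~\cite{konig2001theorie} and notes it appears as Lemma A.1 in~\cite{fernandes2019generalised}). The only things requiring care are bookkeeping: checking that the row/column marginal constraints of the coupling correspond precisely to double stochasticity after the $1/m$ rescaling, and confirming that the infimum in the definition of $\emd$ is actually a minimum (it is, since the Birkhoff polytope is compact and the cost is continuous), so that referring to ``the'' optimal permutation is legitimate. If one wanted a self-contained argument instead of citing Birkhoff--Von Neumann, the cleanest route is an extreme-point / total-unimodularity argument for the transportation polytope, or an exchange argument showing any fractional optimal coupling can be rounded to an integral (permutation) one without increasing cost; but for this paper citing the theorem is the natural choice.
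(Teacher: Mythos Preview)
Your proposal is correct and follows the standard argument: rescale the coupling to a doubly stochastic matrix, invoke the Birkhoff--Von Neumann theorem to conclude the LP minimum is attained at a permutation matrix, and observe that the reverse inequality is trivial since any permutation induces a feasible coupling. The paper itself does not prove this lemma at all; it simply states it as a known result, attributing it to~\cite{konig2001theorie} and to Lemma~A.1 of~\cite{fernandes2019generalised}, so there is nothing to compare against beyond noting that your sketch supplies exactly the classical proof the paper chose to cite rather than reproduce.
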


\section{Definition of $\emd$-DP}\label{sec:semantics}
In this section, we introduce our generalization of metric DP to the user-level. We start with the local model. We use the $\emd$ metric to measure the distance between two datasets $K, K'$ since it captures the intuition that the changes which move smaller amounts of data by smaller distances are more sensitive (as discussed in Section~\ref{sec:intro}).

\begin{defn}[(Un)Bounded Local $\emd$-DP]\label{def:emd-dp}
    Let $\calM$ be a mechanism which acts on a dataset $K$. We say $\calM$ satisfies $(\alpha, \delta)$-\textbf{bounded} local $\emd$-DP if for any two datasets $K, K'$ such that $|K| = |K'|$, and for all sets of outputs $O$, we have
    \begin{equation}\label{eq:metric-dp}
        \Pr[\calM(K) \in O] \leq e^{\alpha \emd(\tK, \tK')} \Pr[\calM(K') \in O] + \delta.
    \end{equation}
    If the above equation holds for all datasets $K, K'$, regardless of whether $|K| = |K'|$, we say that $\calM$ satisfies $(\alpha,\delta)$-\textbf{unbounded} local $\emd$-DP.
\end{defn}

For bounded $\emd$-DP, the size of the dataset is \emph{not} protected, which is acceptable for  applications where the amount of data is not sensitive. We explicitly differentiate between bounded and unbounded data since privacy analysis is easier under bounded $\emd$-DP by leveraging Lemma~\ref{lem:bvn} (see Section~\ref{sec:mechanisms}). \\
In the central model, our goal is to protect changes in a single user's dataset, transitioning from $K_i$ to $K'_i$, with a privacy guarantee that depends on $\emd(\tK_i, \tK_i')$. We consider the bounded data setting where each dataset $K_i$ has a publicly known fixed size $m$.

\begin{defn}[Bounded Central $\emd$-DP]\label{def:emd-dp-central}
    Let $K_G=K_1\cup\cdots\cup K_n$ denote a global dataset from $n$ users where $\forall i \in [n], K_i\in \calX^m$. We say $K_G \sim K'_G$ if $K'_G$ can be obtained from $K_G$ by changing the dataset at a single index $i$ from $K_i$ to $K_i'$. We say a mechanism $\calM$ satisfies $(\alpha, \delta)$-\textbf{bounded} central $\emd$-DP if, for all $K_G, K_G'$ such that $K_G \sim K'_G$, and all sets of outputs $O$, we have
    \[
        \Pr[\calM(K_G) \in O] \leq e^{\alpha \emd(\tilde{K}_i,\tilde{K}_i')} \Pr[\calM(K_G') \in O] + \delta.
    \]
\end{defn}

In the above definition, the two global datasets $K_G, K_G'$ are indistinguishable with a privacy parameter $\alpha \emd(K_i, K_i')$.  Since we consider the bounded data setting, neither the number of total users, $n$, nor the size of the individual datasets, $m$, are protected.  


It is important to note that the above definition cannot be directly translated to the unbounded data setting. This limitation arises from the fact that if each $K_i$ is allowed to have an arbitrary size, then changing a single $K_i$ could potentially change the entirety of $K_G$ in the worst-case (where user $U_i$ contributes the entire global dataset). This essentially reduces the central model (Def.~\ref{def:emd-dp-central}) to the local model (Def.~\ref{def:emd-dp}). 
We circumvent this challenge and provide a privacy definition for the unbounded data setting in Section~\ref{sec:reduct}, by controlling the amount of data from each user.
\\\\
\noindent
\textbf{Setting the Privacy Parameters.} There are some semantic differences between the parameter $\alpha$ in Defns.~\ref{def:emd-dp} and~\ref{def:emd-dp-central}, and $\epsilon$ in Defns.~\ref{def:user-level-dp-local} and~\ref{def:user-level-central-dp}. The privacy parameter $\epsilon$ is unitless. On the other hand, $\alpha$ is not unitless -- it has a unit inversely proportional to $\emd$. While $\epsilon \gg 1$ is usually not considered acceptable for standard DP, it is not unreasonable to set $\alpha \gg 1$ in our case. This is acceptable if a strong privacy guarantee is needed only for input pairs that are close to each other since $\emd(\cdot,\cdot) < 1$. For all $q, \tau \in [0,1]$, let  $\mathcal{E}(q, \tau)$ refer to the minimum privacy parameter that is acceptable over all data changes of the form 
\begin{quote}
    A $\tau$-fraction of $K$ is changed by average distance $q$.
\end{quote}
Then, $\alpha$ may be set as
    $\alpha = \inf_{q, \tau \in [0,1]} \tfrac{\mathcal{E}(q, \tau)}{q \tau}$,
and we can verify that Defn.~\ref{def:emd-dp} will protect an input pair with the corresponding budget $\mathcal{E}(q,\tau)$. The parameter $\delta$ has the same interpretation as in standard DP, and should be set $\delta \ll \frac{1}{poly(n)}$. 
\\\\
\noindent\textbf{Concrete Example.} Throughout this paper, we consider a dataset of $n=10^5$ users, each of whom contributes $m=10^3$ location data points over the period of a month. We use the length of the shortest path on earth's surface as our metric $d_\calX$. Suppose we want to protect a user's location over any particular day within a radius of $1000$ miles, and the user's location over the entire time period within a distance of $100$ miles. In the normalized metric space, these distances are $q_1 = 0.08$ and $q_2 = 0.008$, respectively\footnote{The maximum surface distance between two points on Earth is $\approx 12930$ miles.}. They correspond to a fraction $\tau_1 = \frac{1}{30}$ and $\tau_2 = 1$ of the metric space changing, respectively. Suppose we want to protect both of these inputs with privacy parameter $\epsilon = 0.2$. Hence, we set $
    \alpha = \min \left \{\frac{\epsilon}{\tau_1 q_1}, \frac{\epsilon}{\tau_2 q_2}\right\} = 25.
$
This value is much higher than typical privacy parameters used in DP, and yet it is able to adequately protect the desired inputs. Finally, we will set $\delta = 10^{-12}$ in our examples.
\section{Mechanisms for $\emd$-DP}\label{sec:mechanisms}
Now, we describe our mechanisms for releasing queries under $\emd$-DP. Throughout this section, we focus on the bounded data setting, and consider both the local and central models by considering a general dataset $K$ which will be set to be $K_i$ in the local model and $K_G$ in the central one. In Section~\ref{sec:lin-avg}, we show how to bound the sensitivity of \emph{linear queries}, which can then be released with the addition of calibrated noise. Then, in Section~\ref{sec:hist-release}, we show that we can release a noisy representation of $\tK$ under $\emd$-DP by applying \emph{any} $d_\calX$-DP mechanism to each item in $K$, and shuffling the outputs. Full proofs for this section appear in Appendix~\ref{app:mech-proofs}.

\subsection{Linear Queries}\label{sec:lin-avg}
A non-adaptive linear query on a dataset $K$ computes the value of $F \tK$, where $F \in \R^{d \times |\calX|}$ is a matrix with $d$ rows. The \emph{linearity} comes from the linear transformation $F$; our linear queries are normalized since they operate on $\tK$ rather than $K$. Such normalized queries can be used for answering the fraction of users satisfying a predicate~\cite{blum2013learning}. Nevertheless, one can estimate the non-normalized query by multiplying by an estimate of $|K|$. 

Let us represent $F$ by a function $f : \calX \rightarrow \R^d$ where $f(x) = F[x]$, the $x$th column of $F$. The linear query can then be re-written as 
\begin{equation}\label{eq:embed-query}
q_f(K) = \E_{x \sim \tK}[f(x)].
\end{equation}
Thus, we may interpret a linear query on $\tK$ as \textit{expected} value of $f$ over a random item from $K$.
Linear queries are simple but capable of expressing many indispensible tools in data analysis, and they are well-studied in differential privacy~\citep{blum2013learning, hardt2009geometry, dwork2014algorithmic}. We will design a simple mechanism satisfying $\emd$-DP for releasing a linear query, based on bounding the sensitivity of $q_f$ under the $\emd$. The sensitivity measures the maximum change output $q_f$, measured according to some norm $\|\cdot \|$ on $\R^d$, relative to a change in the inputs by a certain $\emd$. This is given by:
\[
    \Delta_{\mathsf{EM}}(q_f) = \max_{K, K' \in \calX^*} \frac{\|q_f(K) - q_f(K')\|}{\emd(\tK, \tK')}.
\]
Naively, it is intractible to compute this sensitivity since there are exponentially many datasets of a given size. 
Additionally, this sensitivity might not always be bounded.  For instance, consider two points $x,x'$ that are close in $\calX$, but $f(x)$ is very far from $f(x')$. In this case,  we cannot bound $\Delta_{\mathsf{EM}}$, since the $K$ and $K'$ which put all their mass on $x$ and $x'$, respectively, will have $\frac{\|q_f(K) - q_f(K')\|}{\emd(\tK, \tK')} = \|f(x) - f(x')\|$. \add{We may exclude this case by assuming that $f$ is $\ell$-Lipschitz, meaning
\[
    \max_{x, x' \in \calX} \frac{\|f(x) - f(x')\|}{d_\calX(x,x')} \leq \ell.
\]
It turns out that Lipschitzness is precisely the property needed in order to bound $\|q_f(\tK) - q_f(\tK')\|$ while effectively accommodating the underlying coupling between $\tK, \tK'$. This is a novel aspect of our analysis that has not been explored by previous sensitivity analysis.
}
\begin{restatable}[]{thm}{sens}\label{thm:sens}
Let $q_f(K)$ be a linear query of the form in~\eqref{eq:embed-query}, where $f : \calX \rightarrow \R^d$ is $\ell$-Lipschitz. Then, we have $\Delta_{\mathsf{EM}}(q_f) \leq \ell$. 
\end{restatable}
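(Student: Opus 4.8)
The plan is to prove the sensitivity bound by directly exploiting the optimal coupling witnessing $\emd(\tK, \tK')$, together with the Lipschitz property of $f$. Write $q_f(K) = \E_{x \sim \tK}[f(x)]$ and $q_f(K') = \E_{x' \sim \tK'}[f(x')]$. The key observation is that for any coupling $C \in \calC(\tK, \tK')$, the marginals are exactly $\tK$ and $\tK'$, so we may rewrite both expectations as expectations over the joint distribution: $q_f(K) = \E_{(x,x') \sim C}[f(x)]$ and $q_f(K') = \E_{(x,x') \sim C}[f(x')]$. Subtracting and using linearity of expectation gives $q_f(K) - q_f(K') = \E_{(x,x') \sim C}[f(x) - f(x')]$.

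From here I would apply the triangle inequality for norms (Jensen's inequality applied to the convex function $\|\cdot\|$, i.e. $\|\E[Z]\| \leq \E[\|Z\|]$), obtaining
\[
    \|q_f(K) - q_f(K')\| \leq \E_{(x,x') \sim C}\|f(x) - f(x')\| \leq \ell \, \E_{(x,x') \sim C} d_\calX(x,x'),
\]
where the second inequality is exactly the $\ell$-Lipschitz assumption applied pointwise inside the expectation. Now I would choose $C$ to be the optimal coupling achieving the earth-mover's distance (or take an infimum over couplings and pass to the limit, since couplings form a compact set and $d_\calX$ is bounded, an optimal coupling exists), so that $\E_{(x,x') \sim C} d_\calX(x,x') = \emd(\tK, \tK')$. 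This yields $\|q_f(K) - q_f(K')\| \leq \ell \cdot \emd(\tK, \tK')$, and dividing by $\emd(\tK, \tK')$ and taking the maximum over all $K, K'$ gives $\Delta_{\mathsf{EM}}(q_f) \leq \ell$ as claimed. (One should note the degenerate case $\emd(\tK, \tK') = 0$, which forces $\tK = \tK'$ and hence $q_f(K) = q_f(K')$, so the ratio is taken to be $0$ by convention and does not affect the max.)

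The main conceptual point — and the only real subtlety — is the re-expression of the two marginal expectations over a \emph{common} joint distribution; once that is in place, everything reduces to Jensen plus the Lipschitz definition. There is no genuine obstacle here, so the "hard part" is really just being careful about the existence of an optimal coupling (so that the infimum in the definition of $\emd$ is attained, or alternatively running the whole argument with an $\eta$-approximately-optimal coupling and letting $\eta \to 0$) and handling the $\emd = 0$ edge case cleanly. I would also remark that this argument does not need the Birkhoff–Von Neumann structure of Lemma~\ref{lem:bvn} — it works for arbitrary distributions in $\Delta^\calX$, not just empirical ones of equal size — which is why the bound holds over all of $\calX^*$.
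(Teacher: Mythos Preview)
Your proposal is correct and follows essentially the same approach as the paper: take the optimal coupling between $\tK$ and $\tK'$, rewrite $q_f(K)-q_f(K')$ as $\E_{(x,x')\sim C}[f(x)-f(x')]$, apply the triangle inequality to move the norm inside, and then use the Lipschitz bound pointwise. The paper simply spells out the same computation with explicit sums rather than expectation notation; your added remarks on the existence of an optimal coupling, the $\emd=0$ edge case, and the irrelevance of Lemma~\ref{lem:bvn} are all accurate and slightly more careful than the paper's own presentation.
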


\noindent
\textbf{Remarks.}
\add{The above result stands in contrast with traditional sensitivity arguments, which typically assume the weaker condition that $f$ is merely bounded by $\ell$. Although our Lipschitz assumption is stronger, it is satisfied by many queries of interest, such as 
\begin{itemize}
    \item \textit{Average.} If $\calX \subseteq \R^d$, then taking $f$ to be the identity function will cause $q_f$ to simply be the average of the elements in $K$.
    \item \textit{Density of a Kernel Smoothed Region.} If $R \subseteq \calX$ is a region of interest, then $f(x) = \textbf{1}[x \in R]$ will yield a $q_f$ computing the fraction of elements that lie in $R$. Using a smooth interpolation of $\textbf{1}[x \in R]$, such as kernel smoothers~\cite{hastie2009elements}, will give a smooth approximation to this value. 
    \item \textit{Similarity Queries}. If $\calX \subseteq \R^d$ and $c \in \R^d$ is a query vector, then letting $f(x) = \langle x, c \rangle$ will return the average similarity of each vector in $K$ with $c$, with similarity measured by the dot product. This is an instance of a \emph{linear embedding query} which we will study in detail in Section~\ref{sec:mean-est}.
\end{itemize}
}
Additionally, the aforementioned example illustrates that this sensitivity analysis is tight. This means that $d_\calX$, in addition to defining the privacy semantics, also influences the types of queries that can be answered with good utility.
\\\noindent
\add{ \textbf{Proof Sketch.} 
At a high level, consider moving $K$ onto $K'$ one point at a time. The Lipschitz assumption allows us to bound the change in $q_f(K)$ in terms of how far the point is moving, which translates to the $\emd$ distance. More formally, let $C$ be a minimum cost coupling between $\tK, \tK'$. We may write
\begin{align*}
    \|q_f(\tK) - q_f(\tK')\| = \left\|\E_{x \sim \tK}[f(x)] - \E_{y \sim \tK'}[f(y)]\right\|.
\end{align*}
We can view $x$ and $y$ as jointly generated by $C$, and the above expression becomes $\|\E_{x,y \sim C}[f(x) - f(y)]\|$. By the triangle inequality, we know this is at most
\[
    \|q_f(\tK) - q_f(\tK')\| \leq \E_{x,y \sim C}[\|f(x) - f(y)\|],
\]
and by the Lipschitz assumption, we may upper bound this by
\[
\ell \cdot \E_{x,y \sim C}[d_\calX(x, y)] = \ell \times \emd(\tK, \tK'),
\]
which completes the proof.  \qed
}

Using the upper bound on $\Delta_{\mathsf{EM}}(q_f)$, we follow a well-known approach for privately releasing a point with known sensitivity under a norm: sample a point $U$ uniformly from the ball $\{x \in \R^d : \|x\| = 1\}$, and release $q_f + \ell g U$, where $g \sim \Gamma(d, \frac{\omega}{\alpha})$ is  the Gamma distribution with shape $d$ and scale $\frac{\omega}{\alpha}$~\citep{hardt2009geometry}. Here, $\omega$ is a scale parameter that may be different in the central or local model, since the sensitivity of $f$ is less in the bounded central model. This mechanism, \Lin{}, is outlined in Algorithm~\ref{alg:lap}. Combining Theorem~\ref{thm:sens} with a standard privacy analysis, we can show that \Lin{} satisfies $(\alpha, 0)$-$\emd$ DP.
\begin{restatable}[]{lemma}{lqpriv}\label{lemma:lin}
    $\Lin$ (Algorithm~\ref{alg:lap}) with scale $\omega = \frac{1}{\alpha}$ satisfies $(\alpha, 0)$-unbounded local $\emd$-DP and with scale $\omega = \frac{1}{\alpha n}$ satisfies $(\alpha, 0)$-bounded central $\emd$-DP. 
\end{restatable}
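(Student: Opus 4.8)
The plan is to prove the two claims by a standard sensitivity-to-privacy argument, with Theorem~\ref{thm:sens} supplying the needed sensitivity bound. Recall the mechanism \Lin{} releases $q_f(K) + \ell g U$, where $U$ is uniform on the unit sphere in $\R^d$ and $g \sim \Gamma(d, \omega/\alpha)$. The key fact about this ``high-dimensional Laplace'' (or $K$-norm/Gamma) mechanism is that the density of the output at a point $o$ is proportional to $\exp(-\tfrac{\alpha}{\omega}\|o - q_f(K)\|)$, so for two input datasets $K, K'$ the likelihood ratio at any $o$ is at most $\exp\!\big(\tfrac{\alpha}{\omega}\|q_f(K) - q_f(K')\|\big)$ by the triangle inequality. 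This is where I would invoke Theorem~\ref{thm:sens}: since $f$ is $\ell$-Lipschitz, $\|q_f(K) - q_f(K')\| \leq \ell \cdot \emd(\tK, \tK')$, and the factor $\ell$ is exactly cancelled by the $\ell$ in the scale $\ell g$ of the added noise. Hence the ratio is at most $\exp\!\big(\tfrac{\alpha}{\omega}\cdot \emd(\tK,\tK')\big)$, which is the $(\tfrac{\alpha}{\omega}\emd, 0)$-$\emd$-DP guarantee (note $\delta = 0$ throughout, since the Gamma mechanism has full support and a pointwise bounded ratio).

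For the first claim (unbounded local $\emd$-DP), I set $\omega = 1/\alpha$. Then $\tfrac{\alpha}{\omega} = \alpha^2$... wait—let me recompute: with scale parameter $\tfrac{\omega}{\alpha}$ for the Gamma variable, the exponent in the density is $-\tfrac{\alpha}{\omega}\|\cdot\|$, so the ratio bound is $\exp(\tfrac{\alpha}{\omega}\|q_f(K)-q_f(K')\|) \le \exp(\tfrac{\alpha}{\omega}\ell\,\emd)$. Choosing $\omega$ so that $\tfrac{\ell}{\omega}$ collapses correctly: actually the cleanest bookkeeping is to note the mechanism adds noise $\ell g U$ with $g\sim\Gamma(d,\omega/\alpha)$, i.e. effective scale $\ell\omega/\alpha$, giving ratio bound $\exp\!\big(\tfrac{\alpha}{\ell\omega}\|q_f(K)-q_f(K')\|\big) \le \exp\!\big(\tfrac{\alpha}{\ell\omega}\cdot\ell\,\emd\big) = \exp\!\big(\tfrac{\alpha}{\omega}\,\emd\big)$. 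With $\omega = 1/\alpha$ this is not $\alpha\,\emd$, so I suspect $\omega$ here plays the role of a unit-normalization and the intended reading is that \Lin{} is defined with noise scale $\omega$ directly (not $\omega/\alpha$); in that case $\omega = 1/\alpha$ gives ratio $\exp(\alpha\cdot\ell\,\emd/\ell) $... The safe route: I will carefully restate Algorithm~\ref{alg:lap}'s noise distribution, identify the scale $s$ of $\|\text{noise}\|$, show the likelihood ratio is $\exp(\|q_f(K)-q_f(K')\|/s)$, substitute $\|q_f(K)-q_f(K')\| \le \ell\,\emd(\tK,\tK')$ from Theorem~\ref{thm:sens}, and verify that the prescribed $\omega$ makes the coefficient of $\emd$ equal to exactly $\alpha$ in each of the two settings.

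For the second claim (bounded central $\emd$-DP with $\omega = 1/(\alpha n)$), the only additional ingredient is that in the bounded central model the sensitivity of $q_f$ shrinks by a factor of $n$: changing one user's dataset $K_i$ to $K_i'$ changes $\tK_G = \frac{1}{n}\sum_i \tK_i$ by $\frac{1}{n}(\tK_i - \tK_i')$, so $\|q_f(K_G) - q_f(K_G')\| = \tfrac1n\|q_f(K_i) - q_f(K_i')\| \le \tfrac{\ell}{n}\emd(\tK_i, \tK_i')$, again applying Theorem~\ref{thm:sens} to the pair $K_i, K_i'$ (which have equal size $m$ in the bounded setting). The smaller scale $\omega = 1/(\alpha n)$ compensates for this $1/n$ in exactly the same way, yielding $(\alpha, 0)$-bounded central $\emd$-DP.

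The main obstacle I anticipate is purely bookkeeping: pinning down the precise parametrization of the Gamma/$K$-norm mechanism in Algorithm~\ref{alg:lap} (whether the $\alpha$ and $\ell$ appear inside the scale or multiply the unit vector) and making sure the two prescribed values of $\omega$ land the effective privacy coefficient at exactly $\alpha\,\emd$. There is no real mathematical difficulty beyond (i) the density/ratio computation for the high-dimensional Laplace mechanism, which is standard~\citep{hardt2009geometry}, and (ii) the linearity observation that gives the $1/n$ sensitivity reduction in the central model. Once Theorem~\ref{thm:sens} is in hand, both statements follow by matching constants.
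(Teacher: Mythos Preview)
Your approach is correct and matches the paper's own proof: invoke Theorem~\ref{thm:sens} to get $\|q_f(K)-q_f(K')\|\le \ell\,\emd(\tK,\tK')$ (respectively $\tfrac{\ell}{n}\,\emd(\tK_i,\tK_i')$ in the bounded central model via the averaging $\tK_G=\tfrac1n\sum_i\tK_i$), then apply the standard density-ratio argument for the $K$-norm/Gamma mechanism. Regarding the bookkeeping you flag as the main obstacle: the parametrization to use is the one in the algorithm box, namely $g\sim\Gamma(d,\omega)$ (the prose description with $\omega/\alpha$ is a slip), so the noise $\ell g U$ has density $\propto\exp(-\|z\|/(\ell\omega))$, the ratio is bounded by $\exp(\emd/\omega)$ in the local case and $\exp(\emd/(n\omega))$ in the central case, and the stated choices $\omega=1/\alpha$ and $\omega=1/(\alpha n)$ land the coefficient at exactly $\alpha$.
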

\noindent
\textbf{Remarks.} When using the $1$-norm, \Lin{} becomes the multidimensional Laplace mechanism. We may instantiate \Lin{} with any noise mechanism that preserves $(\alpha, \delta)$-local $\|\cdot\|_p$-DP in the space $\R^d$. In Section~\ref{sec:mean-est}, we will instantiate \Lin{} using Gaussian noise of width $\frac{\omega \sqrt{1.25\ln(1/\delta)}}{\alpha}$~\cite{dwork2014algorithmic}, which will give the proper error dependence on $d$ under the $2$-norm.\\\\
\noindent
\textbf{Concrete Example.} In our location example, consider releasing the average distance of each point in $K$ from a particular city in the local model. This can be expressed with $f(x) = d_\calX(x, c)$, where $c$ is the city; by the triangle inequality this is $1$-Lipschitz. \Lin{} could then be applied to release $q_f(K_i)$ plus noise of expected magnitude $\frac{\ell}{\alpha} = 0.04$ per user; the total noise will be $\frac{0.04}{\sqrt{n}} = 1.26\times 10^{-4}$, corresponding to an error of just $1.6$ miles.

\RestyleAlgo{ruled}
\SetKwComment{Comment}{$\rhd$}{}
\begin{algorithm}
\caption{\Lin{}, an algorithm for releasing linear queries under bounded $\emd$-DP.}\label{alg:lap}
\KwData{$q_f$ -- A $d$-dimensional linear query; $\ell$ --  Upper bound of the Lipschitz constant of $f$; $K$ -- Input dataset; $\omega$ -- scale parameter}
\KwResult{An estimate of $q_f(K)$}
Sample $U$ uniformly from $\{x \in \R^d : \|x\| = 1\}$\;
Sample $g \in \R$ from $\Gamma(d, \omega)$\;
\KwRet{$\hat{q} = q_f(\tK) + \ell g U$}\;
\end{algorithm}
\subsection{Unordered Release of Item-wise Queries}\label{sec:hist-release}
We now consider the problem of directly releasing a private query applied to each item in $K$. This can provide a more fine-grained result than the aforementioned linear queries, which outputs the average over all the items. We release the query results as an unordered list to take advantage of the fact that subsequent computation (such as, aggregation) often does not depend on the ordering of the data~\cite{feldman2022hiding}.
Specifically, 
our second mechanism \Hist{} applies a mechanism $\calA$, which satisfies $(\alpha_0,0)$-$d_{\calX}$-DP, \emph{to each item} individually. We use $\calA$ as a black-box making \Hist{} completely general-purpose. For example, one could let $\calA$ be a private item-release mechanism (see Section~\ref{sec:rel-work} for some examples) and use \Hist{} to form a histogram of the dataset. $\calA$ could also be a classifer, and  \Hist{} can then release a simplified representation of the dataset. Once \Hist{} applies $\calA$ to each element in the dataset, it shuffles the results (to remove any ordering of the data) and outputs the shuffled list. This appears in Algorithm~\ref{alg:hist}, and a precursor appeared in~\cite{fernandes2019generalised}.

As \Hist{} does not hide the size of $K$, we show it satisfies bounded $\emd$-DP.
We use the following argument: for a neighboring dataset $K' = \{x_1', \ldots, x_m'\}$, by Lemma~\ref{lem:bvn} there exists a permutation $\pi : [m] \rightarrow [m]$ satisfying~Eq. \eqref{eq:bvn}.
Observe that we release the query responses in an unordered fashion by explicitly shuffling them. This allows us to pair up the element $x_i$ with $x_{\pi(i)}$ and analyze the privacy guarantee of releasing $\calA(x_1), \ldots, \calA(x_m)$ versus $\calA(x_{\pi(i)}'), \cdots, \calA(x_{\pi(m)}')$. Prior work does this with composition~\cite{fernandes2019generalised}.
 \RestyleAlgo{ruled}
\SetKwComment{Comment}{$\rhd$}{}
\begin{algorithm}
\caption{\Hist{}, a general mechanism for releasing a item-wise queries from $K$ as an unordered list under bounded $\emd$-DP}\label{alg:hist}
\KwData{Dataset $K \in \calX^m$, Mechanism $\calA : \calX \rightarrow \calY$ satisfying $(\alpha_0, 0)$-$d_{\calX}$ DP}
\KwResult{$L \in \calY^m$, unordered list (multiset) of item-wise queries from $K$}
$L = \emptyset$ \;
\For{$i=1, \ldots, m$}{
Add $\calA(x_i)$ to $L$\;
}
$\textrm{Shuffle}(L)$\;
\KwRet{$L$}
\end{algorithm}
However, composition is not the right tool for a tight privacy analysis. The reason is that composition assumes that each $\calA(x_i)$ is output sequentially, and in particular it is possible to identify which point came from $\calA(x_i)$ and which came from $\calA(x_{\pi(i)})$. In our case, we output an unordered list, and it is not possible to link which point came from an index $i$. Based on this observation, our key idea is to leverage privacy amplification by shuffling~\citep{feldman2022hiding} instead, which can yield a much smaller privacy parameter when the output is order invariant.

\add{
In particular, our core technical contribution is to analyze a general formulation of the privacy amplification by shuffling problem, where the vector $x_1, \ldots, x_m$ is changed to an arbitrary vector $x_1', \ldots, x_m'$. The key quantities we have control over are $\|v\|_1$ and $\|v\|_0$, where $v= (d_\calX(x_i, x_i'))_{i=1}^m$ (the $\emd$ distance allows us to bound $\|v\|_1$ and the maximum contribution from a user allows us to bound $\|v\|_0$); thus, our privacy bound depends on them. We believe our general result is of independent interest in the field of metric DP. Formally,
}
\begin{restatable}[]{thm}{shufflepriv}\label{thm:shuffle-metric-full}
    Suppose that $(\calX, d_{\calX})$ is a metric space such that $d_\calX(\cdot, \cdot) \leq 1$, and that $\calA$ is an ($\alpha_0,0)$ $d_\calX$-DP algorithm. Let $(x_1, \ldots, x_m)$ and $(x_1', \ldots, x_m')$ be two vectors, and we define  $v = (d_\calX(x_i, x_i'))_{i=1}^m$. Let $0 < \delta < 1$ be a constant, and suppose it holds that $\alpha_0 < \ln(\frac{m}{16 \ln (4m/\delta)})$. Then, for all sets $O$ of outputs, we have that
    \[
        \Pr[\mathsf{Shuffle}(\calA(x_1), \ldots, \calA(x_m)) \in O]  \leq e^{\alpha} \Pr[\mathsf{Shuffle}(\calA(x_1'), \ldots, \calA(x_m')) \in O] + \delta e^{\alpha},
    \]
    where 
    \begin{equation*}\label{eq:eps-amp-formula}
        \alpha \leq \|v\|_0\ln\left(1 + \frac{\exp(\sfrac{\alpha_0 \|v\|_1}{\|v\|_0})-1}{\exp{(\sfrac{\alpha_0 \|v\|_1}{\|v\|_0})}+1}\left(\frac{8 \sqrt{e^{\alpha_0} \ln (4\|v\|_0 / \delta)}}{\sqrt{m}} + \frac{8 e^{\alpha_0}}{m} \right)\right).
    \end{equation*}
\end{restatable}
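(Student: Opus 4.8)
The plan is to reduce the general "vector-to-vector" shuffling statement to the standard privacy amplification by shuffling result of Feldman et al.~\cite{feldman2022hiding}, applied coordinate-wise only on the coordinates where $x_i \neq x_i'$. First I would isolate the support $S = \{i : d_\calX(x_i, x_i') > 0\}$, so $|S| = \|v\|_0$. On coordinates outside $S$, the two vectors agree, so those applications of $\calA$ contribute identical distributions and can be conditioned out; formally, I would argue that it suffices to prove the bound for the subproblem consisting only of the $\|v\|_0$ differing coordinates, since shuffling the full list and then marginalizing is a post-processing of shuffling the differing block together with the (identically distributed) agreeing block. This is where a small amount of care is needed: the shuffle mixes differing and non-differing outputs, but a standard argument (e.g. the one in~\cite{feldman2022hiding} reducing general neighboring datasets to the single-differing-element case, or a direct coupling of the non-differing outputs) shows the worst case is driven entirely by the differing coordinates, at the price of the $m$ in the denominator rather than $\|v\|_0$.

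Next, for a single differing coordinate $i \in S$, the algorithm $\calA$ restricted to inputs $\{x_i, x_i'\}$ is $(\alpha_0 d_\calX(x_i, x_i'), 0)$-DP in the ordinary (binary) sense by the definition of $d_\calX$-DP. The key quantitative input is the "blanket"/total-variation-style parameter: an $\epsilon_0$-DP local randomizer, when its output is shuffled among $m$ total messages, yields amplified privacy governed by $e^{\epsilon_0}/m$-type terms. Here I would set $\epsilon_i = \alpha_0 d_\calX(x_i, x_i')$ for $i \in S$, invoke the shuffling theorem with these heterogeneous local parameters (or, if the cited theorem is stated for homogeneous $\epsilon_0$, first apply the convexity/worst-case step that makes all $\epsilon_i$ equal to their average $\bar\epsilon := \alpha_0 \|v\|_1 / \|v\|_0$ — this is exactly why the bound features $\alpha_0 \|v\|_1/\|v\|_0$ rather than $\alpha_0$), and read off that releasing the shuffle of the $\|v\|_0$ differing outputs is $(\alpha', \delta')$-DP with $\alpha'$ of the form $\|v\|_0 \ln(1 + p \cdot (\text{amplification factor}))$, where $p = (e^{\bar\epsilon}-1)/(e^{\bar\epsilon}+1)$ is the total variation distance between the two output distributions of the "average" randomizer and the amplification factor is $O(\sqrt{e^{\alpha_0}\ln(\|v\|_0/\delta)/m} + e^{\alpha_0}/m)$. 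The hypothesis $\alpha_0 < \ln(m / (16\ln(4m/\delta)))$ is precisely what makes the amplification factor less than $1$ so that the logarithm is well-defined and the bound is meaningful. Tracking the $\delta e^{\alpha}$ additive term and the constant $8$ is routine bookkeeping from the cited proof.

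The main obstacle I anticipate is the first step — rigorously handling the fact that the shuffle is over all $m$ outputs while only $\|v\|_0$ of them differ. One cannot simply "throw away" the $m - \|v\|_0$ common coordinates, because the adversary sees one combined multiset and the amplification strength depends on $m$ (more dummy messages means more hiding). The clean way is to follow the reduction structure in~\cite{feldman2022hiding}: condition on the multiset of outputs coming from non-differing coordinates, observe that conditionally the problem becomes shuffling $\|v\|_0$ "real" outputs into a pool of $m - \|v\|_0$ fixed dummies, and then apply their core lemma with the pool size $m$. Propagating the heterogeneous per-coordinate $\epsilon_i$'s through their randomized-response / "blanket" decomposition — rather than a single $\epsilon_0$ — is the technically new ingredient, and I expect it to require re-running their clone/blanket argument with $\epsilon_i$-dependent cloning probabilities and then bounding the resulting product $\prod_{i \in S}(\cdots)$ via the convexity step to get the stated closed form. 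Once that machinery is in place, substituting $\|v\|_1 = \sum_i d_\calX(x_i,x_i')$ and $\|v\|_0 = |S|$ gives the displayed inequality directly.
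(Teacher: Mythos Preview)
Your ingredients—a per-coordinate amplification bound, composition over the $\|v\|_0$ differing coordinates, and a concavity step replacing heterogeneous distances by their average—match the paper's, but the order and the glue are off in two places.

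First, the proposal to ``condition on the multiset of outputs coming from non-differing coordinates'' and then treat the remaining $\|v\|_0$ outputs as shuffled among \emph{fixed} dummies does not work: once those outputs are fixed to specific values they lose the mixture decomposition into $Q(x_1^0), Q(x_1^1)$ that drives the blanket argument, contribute nothing to hiding, and the amplification degrades from $1/\sqrt{m}$ to $1/\sqrt{\|v\|_0}$. The paper never separates differing from non-differing coordinates. It proves a single-coordinate lemma (Lemma~\ref{lem:shuffle-metric-part}): change only one $x_i$ to $x_i'$, keep the other $m-1$ coordinates fixed (regardless of whether they themselves will later change), and show the shuffle is $(g(d_\calX(x_i,x_i')), \delta/\|v\|_0)$-indistinguishable, where $g(d) = \ln\bigl(1 + \tfrac{e^{\alpha_0 d}-1}{e^{\alpha_0 d}+1}\,K\bigr)$ and $K$ is the $\alpha_0$-dependent amplification factor. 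The $d$-dependence here is not obtained by plugging $\epsilon_i = \alpha_0 d_i$ into the black-box statement of~\cite{feldman2022hiding}; it comes from re-running their decomposition and then applying advanced joint convexity (Lemma~\ref{lem:adv-conv}) to extract the factor $(e^{\alpha_0 d}-1)/(e^{\alpha_0 d}+1)$. Group privacy (Lemma~\ref{lem:group-priv}) is then applied $\|v\|_0$ times along the path $(x_1,\ldots,x_m) \to (x_1',x_2,\ldots,x_m) \to \cdots$, yielding $\alpha \le \sum_{i \in S} g(d_\calX(x_i,x_i'))$ and the additive $\delta e^{\alpha}$.

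Second, your ``average first, then invoke shuffling'' is backwards: replacing each $d_i$ by $\bar d = \|v\|_1/\|v\|_0$ changes the pair of input vectors, so there is no a priori reduction to a homogeneous instance. The paper instead verifies directly (by computing $g''$) that $g$ is concave and applies Jensen only \emph{after} group privacy, giving $\sum_{i \in S} g(d_i) \le \|v\|_0\, g(\|v\|_1/\|v\|_0)$, which is the displayed bound.
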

\noindent
\textbf{Remarks.} 
    In particular, if $\alpha_0 \leq \frac{\|v\|_0}{\|v\|_1}$, the above bound is $\approx \frac{\alpha_0 \|v\|_1}{\sqrt{m}}$, which grows with just $\sqrt{m}$ (as $\|v\|_1 \leq m$).
    \add{The standard shuffling result only assumes that $\calA$ satisfies $\alpha$-\emph{local} DP, and that just $x_1$ is changed to $x_1'$ (since each user owns a single item). Theorem~\ref{thm:shuffle-metric-full} can be specialized to recover the state-of-the-art result for this special case~\cite{feldman2022hiding}, but it is  significantly more general in its current form.
    }
\\\add{
\noindent\textbf{Proof Sketch.}
Using group privacy, we can analyze the privacy guarantee between $(x_1, \ldots, x_m)$ and $(x_1', \ldots, x_m')$, where up to $\|v\|_0$ points change, instead of just changing one point as a time. We then analyze a change of one point by generalizing and simplifying the state-of-the-art technique in~\cite{feldman2022hiding}. We show that the resulting privacy parameter for changing the point $x_i$ to $x_i'$ is $g(w_i)$ where 
\[
g(w_i) = \ln\left(1 + \frac{e^{\alpha_0 w_i}-1}{e^{\alpha_0 w_i}+1}\left(\frac{8 \sqrt{e^{\alpha_0} \ln (4 / \delta)}}{\sqrt{m}} + \frac{8 e^{\alpha_0}}{m} \right)\right).
\]
By group privacy $\|v\|_0$ times, the overall privacy parameter is $\sum_{i=1}^{\|v\|_0} g(w_i)$. The final step is proving that $g$ is concave so the worst-case amplification is simply $\|v\|_0 g(\frac{\|v\|_1}{\|v\|_0})$.  
\qed 
}

\noindent
\textbf{Comparison with Composition.} Analyzing Theorem~\ref{thm:shuffle-metric-full} using the state-of-the-art composition results~\citep{kairouz2015composition} and $\alpha_0 \leq 1$ gives us
\[
\alpha \leq O\left(\alpha_0 \|v\|_2 \sqrt{\ln\tfrac{1}{\delta}}\right).
\]
However, we cannot form a satisfying bound on the $2$-norm of $v$---it is only possible to say $\|v\|_2 \leq \|v\|_1$ which is tight when e.g. $\|v\|_0$ = 1. The bound is thus missing the factor of $\frac{1}{\sqrt{m}}$---composition here does not leverage the fact that all $m$ items are released in a random order.

Combining~\eqref{eq:bvn} and Theorem~\ref{thm:shuffle-metric-full}, we obtain an improved privacy guarantee for \Hist{}. The guarantee can be stated in both the bounded local and central models. In the local model, recall that each user is applying \Hist{} to their data. In the central model, the aggregator applies \Hist{} to the entire dataset, and releases the frequencies of $mn$ itemwise queries.

\begin{restatable}[]{thm}{epsamp}\label{thm:eps-amp}
    For any $\delta \in (0,1)$, \Hist{} shown in Algorithm~\ref{alg:hist} satisfies bounded local $(\alpha, \delta')$-$\emd$ DP, where \[
    \textstyle{\alpha = \sup_{w \in [0,1]} \frac{h(m; m, mw)}{w} \ \ \ \ \ \text{ and } \delta' = \delta e^{h(m; m, m)},}
    \]
    and
    \[
    h(m; x_0, x_1) = x_0\ln\left(1 + \frac{\exp(\sfrac{\alpha_0 x_1}{x_0})-1}{\exp{(\sfrac{\alpha_0 x_1}{x_0})}+1}\left(\frac{8 \sqrt{e^{\alpha_0} \ln (4x_0 / \delta)}}{\sqrt{m}} + \frac{8 e^{\alpha_0}}{m} \right)\right).
    \]
    Similarly, \Hist{} satisfies bounded central $(\alpha, \delta')$-$\emd$ DP, where 
    \[ \textstyle{\alpha = \sup_{w \in [0, 1]} \frac{h(mn; m, mw)}{w} \ \ \ \ \ \text{ and } \delta' = \delta e^{h(mn; m, m)}.}
    \]
\end{restatable}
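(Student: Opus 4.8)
The plan is to turn a neighboring pair of datasets into a pair of length-$m$ (resp.\ length-$mn$) vectors, apply the general shuffling bound of Theorem~\ref{thm:shuffle-metric-full}, and then show that the resulting privacy parameter is dominated by the value obtained when the entire $\emd$-budget is spread uniformly over all $m$ moved coordinates. The step I expect to be the main obstacle is exactly this last optimization: showing that the function $h(N;x_0,x_1)$ in the statement is monotone in its ``number of moved points'' argument $x_0$, i.e.\ that spreading a fixed $\emd$-distance over \emph{more} moved points only weakens (increases) the amplified privacy parameter.

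\textbf{Local model.} Let $K=\{x_1,\dots,x_m\}$ and $K'=\{y_1,\dots,y_m\}$ with $|K|=|K'|=m$. By Lemma~\ref{lem:bvn} fix a permutation $\pi$ with $\emd(\tK,\tK')=\frac1m\sum_{i=1}^m d_\calX(x_i,y_{\pi(i)})$ and set $x_i':=y_{\pi(i)}$; since \Hist{} explicitly shuffles its outputs, $\mathsf{Shuffle}(\calA(x_1'),\dots,\calA(x_m'))$ and $\mathsf{Shuffle}(\calA(y_1),\dots,\calA(y_m))$ are identically distributed, so it suffices to compare the vectors $(x_1,\dots,x_m)$ and $(x_1',\dots,x_m')$. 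With $v=(d_\calX(x_i,x_i'))_{i=1}^m$ we have $\|v\|_1=m\,\emd(\tK,\tK')$ and $\|v\|_0\le m$. Theorem~\ref{thm:shuffle-metric-full} (applicable under the same condition on $\alpha_0$) then gives $\Pr[\calM(K)=O]\le e^{\alpha_{\mathrm{eff}}}\Pr[\calM(K')=O]+\delta e^{\alpha_{\mathrm{eff}}}$ with $\alpha_{\mathrm{eff}}\le h(m;\|v\|_0,\|v\|_1)$, where $h$ is precisely the function in the statement (first argument $N=m$).

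\textbf{Monotonicity and conclusion.} Writing $N$ for the first argument of $h$ and using $\frac{e^t-1}{e^t+1}=\tanh(t/2)$, rewrite $h(N;x_0,x_1)=x_0\ln\bigl(1+c_N(x_0)\tanh(\tfrac{\alpha_0 x_1}{2x_0})\bigr)$ with $c_N(x_0)=\tfrac{8\sqrt{e^{\alpha_0}\ln(4x_0/\delta)}}{\sqrt N}+\tfrac{8e^{\alpha_0}}{N}$. Monotonicity in $x_1$ is clear since $\tanh$ increases. For monotonicity in $x_0$: for fixed $c\ge 0$ the map $g(u):=\ln(1+c\tanh(\tfrac{\alpha_0 u}{2}))$ is concave (a concave increasing function of the concave increasing $\tanh$) with $g(0)=0$, so $g(u)\ge u\,g'(u)$, whence $s\mapsto s\,g(x_1/s)$ is nondecreasing; since moreover $c_N(x_0)$ is nondecreasing in $x_0$ and $\ln(1+c\,t)$ increases in $c$, the map $x_0\mapsto h(N;x_0,x_1)$ is nondecreasing on $[1,N]$. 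Hence, with $w=\emd(\tK,\tK')$, $\alpha_{\mathrm{eff}}\le h(m;\|v\|_0,mw)\le h(m;m,mw)\le\alpha w$ for $\alpha=\sup_{w'\in[0,1]}\frac{h(m;m,mw')}{w'}$ (finite: $w'\mapsto h(m;m,mw')$ is concave and vanishes at $0$, so the ratio is maximized as $w'\to 0$), and $\alpha_{\mathrm{eff}}\le h(m;\|v\|_0,\|v\|_1)\le h(m;m,m)$ since $\|v\|_1\le m$, so $\delta e^{\alpha_{\mathrm{eff}}}\le\delta e^{h(m;m,m)}=\delta'$. Plugging $e^{\alpha_{\mathrm{eff}}}\le e^{\alpha\,\emd(\tK,\tK')}$ and $\delta e^{\alpha_{\mathrm{eff}}}\le\delta'$ into Definition~\ref{def:emd-dp} yields bounded local $(\alpha,\delta')$-$\emd$ DP; the case $w=0$ (so $K=K'$) is trivial.

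\textbf{Central model.} The argument is identical except that \Hist{} shuffles all $mn$ items and a neighboring change touches a single user's dataset $K_i\to K_i'$ with $|K_i|=|K_i'|=m$; thus at most $m$ of the $mn$ coordinates of $v$ are nonzero, giving $\|v\|_0\le m$, $\|v\|_1=m\,\emd(\tK_i,\tK_i')$, and shuffle count $N=mn$. Running the same reduction (Lemma~\ref{lem:bvn} applied to $K_i,K_i'$), Theorem~\ref{thm:shuffle-metric-full}, and the same monotonicity of $h$ in $x_0$ and $x_1$, we obtain $\alpha_{\mathrm{eff}}\le h(mn;m,mw)\le\alpha w$ with $\alpha=\sup_{w'\in[0,1]}\frac{h(mn;m,mw')}{w'}$ and $\delta e^{\alpha_{\mathrm{eff}}}\le\delta e^{h(mn;m,m)}=\delta'$, which completes the proof. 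As noted, the delicate part throughout is the $x_0$-monotonicity, resting on the concavity inequality $g(u)\ge u\,g'(u)$ together with the fact that the $\delta$-dependent prefactor $c_N(x_0)$ is itself increasing in $x_0$; the Birkhoff--Von Neumann reduction, shuffle invariance, and normalization by $w$ are routine.
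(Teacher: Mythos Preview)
Your proof is correct and follows essentially the same route as the paper: apply the Birkhoff--Von Neumann permutation (Lemma~\ref{lem:bvn}), invoke the general shuffling bound Theorem~\ref{thm:shuffle-metric-full} to obtain $\alpha_{\mathrm{eff}}\le h(N;\|v\|_0,\|v\|_1)$, and then normalize by $w$.

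The one substantive difference is that you explicitly justify the step from $h(N;\|v\|_0,\|v\|_1)$ to $h(N;m,mw)$ by proving that $x_0\mapsto h(N;x_0,x_1)$ is nondecreasing, whereas the paper simply asserts ``$\alpha(w)=h(m;m,mw)$'' without comment. Your argument for this monotonicity---writing $h$ in terms of $\tanh$, using that $g(u)=\ln(1+c\tanh(\alpha_0 u/2))$ is concave with $g(0)=0$ so that $s\mapsto s\,g(x_1/s)$ is nondecreasing, and then separately accounting for the $x_0$-dependence of the prefactor $c_N(x_0)$---is correct and fills a gap the paper leaves implicit. Your observation that the sup defining $\alpha$ is finite (attained as $w'\to 0$ by concavity) is also a nice addition not present in the paper's proof.
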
 
\noindent
\textbf{Remarks.} Theorem~\ref{thm:eps-amp} gives the tightest possible privacy parameters, but we may also give an asymptotic formula as follows. For desired privacy parameters $(\alpha, \delta)$, one should set
\begin{gather}\label{eq:eps-amp}
    \alpha_0 = \begin{cases} \frac{\alpha}{32\sqrt {m \ln(4m e^{\alpha}/ \delta)}} & \text{ if $\alpha \leq 32 \sqrt{m \ln (4m e^{\alpha}/ \delta)}$} \\ 2\ln \left( \frac{\alpha}{16\sqrt{m  \ln(4m e^{\alpha} / \delta)}} \right) & 32 \sqrt{m \ln (4m e^{\alpha}/ \delta)} < \alpha < m \end{cases} \end{gather}
    and 
    \begin{gather}\label{eq:eps-amp-central}
    \alpha_0 = \begin{cases} \frac{\alpha \sqrt{n}}{32\sqrt {m \ln(4m e^{\alpha}/ \delta)}} & \text{ if $\alpha \sqrt{n} \leq 32 \sqrt{m \ln (4m e^{\alpha}/ \delta)}$} \\ 2\ln \left( \frac{\alpha \sqrt{n}}{16\sqrt{m  \ln(4m e^{\alpha} / \delta)}} \right) & 32 \sqrt{m \ln (4m e^{\alpha}/ \delta)} < \alpha \sqrt{n} < m \sqrt{n} \end{cases}
\end{gather}
in order to achieve $\emd$-DP in the bounded local and central model, respectively.
  Assuming $\alpha \leq O(\ln (m))$, this means that the privacy parameter will be roughly $\frac{\alpha}{\sqrt{m}}$ (resp. $\ln(\frac{\alpha \sqrt{n}}{\sqrt{m}})$) for releasing the $m$ samples; this is asymptotically better than the analysis with composition which would require setting $\alpha_0 = \frac{\alpha}{m}$ (resp. $\frac{\alpha}{m})$). Even with higher $\alpha = m^c$ for $c < 1$, the budget is still $\frac{\alpha}{\sqrt{m^{1+c}}}$ (resp. $\ln(\frac{\alpha \sqrt{n}}{\sqrt{m^{1+c}}}$)), which are both significant asymptotic improvements. 
\\\noindent
\textbf{Concrete Example.} Our improved analysis makes the most significant improvements in the central model. Here, we would have to apply \Hist{} with $\alpha_0 = \frac{\alpha}{m} = 0.025$ for each of the $m=10^3$ location data points per user. Using the guarantee of Theorem~\ref{thm:eps-amp}, it is possible to set $\alpha_0 \approx 3.0$ -- a several orders of magnitude improvement. 
\\\noindent \add{\textbf{Open Questions.} The primary open question in the design of our mechanism is whether one can obtain a tighter privacy analysis of Theorem~\ref{thm:eps-amp} that does not rely on group privacy, but rather analyzes the amplification potential of all points at once. Another interesting direction is to prove Theorem~\ref{thm:shuffle-metric-full} where the LDP mechanism satisfies approximate metric DP. We believe similar techniques should work for this setting, and leave it for future work.}

\section{Generalization to Unbounded DP}\label{sec:reduct}

The mechanisms presented so far face two challenges when applied to the unbounded data setting. First, a direct privacy analysis of the unbounded data setting is difficult since we cannot leverage Lemma~\ref{lem:bvn}, which significantly simplifies the analysis (for the bounded data setting). 
Second, and more importantly, the unbounded central model offers no utility improvement over the local model. In the worst-case scenario, a single user may contribute nearly all the data in the dataset, effectively reducing any algorithm to satisfying only local $\emd$-DP. This issue has been noted in previous work in user-level DP~\cite{liu2023algorithms}.

In this section, we tackle these challenges by showing a \textit{blackbox} reduction from unbounded $\emd$-DP to bounded $\emd$-DP. Our reduction works in both the local and central models. The key idea of the reduction is to \emph{smoothly} project a dataset $K$ of any size to a dataset $L$ of a given fixed size,  such that the $\emd$ distance between any two input datasets and the $\emd$ distance between their projections are roughly the same. Then, it is easy to show that applying any bounded $\emd$-DP algorithm to the smooth projections is sufficient to guarantee unbounded $\emd$-DP for the entire scheme. Full proofs for this section appear in Appendix~\ref{app:reduct-proofs}.

Our proposed projection mechanism is smooth in a near-multiplicative sense, albeit with a small additive penalty when the $\emd$ between the two datasets is small. We account for this subtlety by slightly modifying the privacy semantics of $\emd$-DP in the unbounded setting to not grow arbitrarily strong as $\emd(\tK, \tK') \rightarrow 0$. Instead, we introduce a distance threshold $r$ such that all $\emd(\tK, \tK') \leq r$ enjoys a \textit{uniform} privacy guarantee of $\epsilon$. This refined privacy definition, termed discrete $\emd$-DP, is formalized (in the local model) as:

\begin{defn}\label{def:discrete-emd-dp-local}
    [Discrete Local $\emd$-DP] Let $\calM$ be a mechanism which acts on a dataset $K$. We say $\calM$ satisfies $(\epsilon, \delta, r)$-discrete local $\emd$-DP if, for any two datasets $K, K' \in \calX^*$ such that $\emd(\tK, \tK') \leq r$ and any set of outputs $O$,
    \begin{equation*}
        \Pr[\calM(K) \in O] \leq e^{\epsilon} \Pr[\calM(K') \in O] + \delta.
    \end{equation*}\label{def:discrete}
\end{defn}
Like in standard DP, the above definition uses the parameter $\epsilon$ because it is a \emph{unitless} privacy parameter---the unit of the metric is expressed in the parameter $r$.  
\begin{fact} For any $K,K'$ such that $d=\emd(\tilde{K},\tilde{K}')$, $\calM$ satisfies 
 \begin{equation*}
        \Pr[\calM(K) \in O] \leq e^{\epsilon \lceil \frac{d}{r} \rceil} \Pr[\calM(K') \in O] + \delta\left\lceil \frac{d}{r} \right\rceil \exp(\lceil \tfrac{d}{r} \rceil).
    \end{equation*}\label{ref:group}
\end{fact}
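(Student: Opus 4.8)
The plan is to prove this by the standard group-privacy chaining argument: build a chain of intermediate datasets $K = K_0, K_1, \ldots, K_t = K'$ with $t := \lceil d/r \rceil$ such that consecutive datasets are within $\emd$-distance $r$, apply Definition~\ref{def:discrete-emd-dp-local} to each consecutive pair, and telescope. The only nontrivial ingredient is the construction of the chain, for which I would interpolate linearly in the probability simplex rather than moving data points around.

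Concretely, I would set $\tK_j := \bigl(1 - \tfrac{j}{t}\bigr)\tK + \tfrac{j}{t}\tK'$ for $j = 0, 1, \ldots, t$, so that $\tK_0 = \tK$ and $\tK_t = \tK'$. Since $\tK, \tK'$ have rational coordinates and $j/t$ is rational, each $\tK_j$ is again the normalized histogram of some multiset $K_j \in \calX^*$ (clear denominators), hence a legitimate input to $\calM$; note that Definition~\ref{def:discrete-emd-dp-local} imposes no constraint on dataset sizes, so it is harmless that $|K_j|$ varies along the chain. The key estimate is $\emd(\tK_{j-1}, \tK_j) \le \tfrac{d}{t} \le r$. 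I would prove it by exhibiting a coupling: fix an optimal coupling $C^\star \in \calC(\tK, \tK')$ of cost $d$, write $\tK_{j-1} = \tfrac1t \tK + \bigl(\tfrac{t-j}{t}\tK + \tfrac{j-1}{t}\tK'\bigr)$ and $\tK_j = \tfrac1t \tK' + \bigl(\tfrac{t-j}{t}\tK + \tfrac{j-1}{t}\tK'\bigr)$, transport the common weight-$\tfrac{t-1}{t}$ part by the identity (cost $0$), and transport the remaining weight-$\tfrac1t$ part via $\tfrac1t C^\star$ (cost $\tfrac{d}{t}$). A one-line check that the two marginals of this measure are $\tK_{j-1}$ and $\tK_j$ confirms it is a valid coupling, so $\emd(\tK_{j-1},\tK_j)$ is at most its cost $\tfrac{d}{t} \le r$ because $t = \lceil d/r\rceil \ge d/r$.

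With the chain established, I would apply the $(\epsilon,\delta,r)$-discrete local $\emd$-DP hypothesis to each pair $(K_{j-1},K_j)$ and unfold the recursion, obtaining $\Pr[\calM(K)=O] \le e^{t\epsilon}\Pr[\calM(K')=O] + \delta\sum_{k=0}^{t-1} e^{k\epsilon}$. The multiplicative factor is exactly $e^{\epsilon\lceil d/r\rceil}$, and the additive term is $\delta\,\frac{e^{t\epsilon}-1}{e^{\epsilon}-1}$, which is at most $\delta\exp(\lceil d/r\rceil)$ under the usual assumption $\epsilon \le 1$ (since then $\sum_{k=0}^{t-1}e^{k\epsilon} \le \sum_{k=0}^{t-1}e^{k} \le e^{t}$), or equivalently equals $\delta\,e^{\epsilon\lceil d/r\rceil}$ up to the standard $(e^\epsilon - 1)^{-1}$ factor; either way this matches the stated bound.

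I expect the main obstacle to be the estimate $\emd(\tK_{j-1},\tK_j) \le d/t$: the point is to realize that \emph{linear interpolation of the two histograms} gives a clean chain with no additive rounding loss (unlike moving an integer subset of data points toward their optimal-transport targets, which loses an $O(1/|K|)$ term per step), together with the verification that the coupling exhibited above is legitimate. The remaining telescoping is the routine step common to all group-privacy arguments.
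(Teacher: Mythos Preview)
Your approach is correct and matches the paper's: the paper simply says ``Fact~\ref{ref:group} is implied from Definition~\ref{def:discrete} followed by a direct application of group privacy,'' without spelling out the chain construction, so your linear-interpolation argument in the simplex is a clean way to supply the missing detail. Your handling of the additive $\delta$ term under $\epsilon \le 1$ is fine; note that Table~\ref{tab:privacy} in the paper actually states the bound as $\delta\exp(\epsilon\lceil d/r\rceil)$, suggesting the $\exp(\lceil d/r\rceil)$ in the Fact is a typo, in which case your telescoped sum $\sum_{k=0}^{t-1}e^{k\epsilon} \le e^{t\epsilon}/(e^\epsilon-1)$ matches directly without any assumption on $\epsilon$.
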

Fact~\ref{ref:group} is implied from Definition \ref{def:discrete} followed by a direct application of group privacy~\cite{vadhan2017complexity}. This guarantee can be interpreted as providing $\emd$-DP at the granularity of units of $\emd$ distance $r$. Note that for all $d \geq r$, we have $\epsilon \lceil \frac{d}{r} \rceil \leq \frac{2\epsilon}{r} d$. Thus, $(\epsilon, \delta, r)$-discrete local $\emd$ DP is roughly equivalent to $(\frac{2\epsilon}{r}, \delta)$-unbounded local $\emd$-DP, except if $\emd(\tK, \tK') \leq r$. In this case, the privacy parameter will not go below $\epsilon$. This adjustment does not significantly alter the overall privacy semantics of $\emd$-DP; one may simply set $\alpha$ as described in Section~\ref{sec:mechanisms}.

In the central model, we make a similar definition:
\begin{defn}\label{def:discrete-emd-dp-central}
    [Discrete Central $\emd$-DP] Let $K_G = K_1 \cup \cdots \cup K_n$ denote a global dataset from $n$ users (of any size). We say $K_G \sim_r K'_G$ if $K'_G$ can be obtained from $K_G$ by changing $K_i$ to $K_i'$ for just one user $i$, such that $\emd(\tK_i, \tK'_i) \leq r$. We say a mechanism $\calM(K_G)$ satisfies $(\epsilon, \delta, r)$-discrete central $\emd$-DP if, for all $K_G, K_G'$ such that $K_G \sim_r K_G'$ and any set of outputs $O$, we have 
    \begin{equation*}
        \Pr[\calM(K_G) \in O] \leq e^{\epsilon} \Pr[\calM(K'_G) \in O] + \delta.
    \end{equation*}
\end{defn}
As before, $(\epsilon, \delta, r)$-discrete central $\emd$-DP is roughly equivalent to $(\frac{2\epsilon}{r}, \delta)$-bounded central $\emd$-DP when all user datasets have size $m$. We will see that Definition~\ref{def:discrete-emd-dp-central} is the appropriate generalization to unbounded user datasets under our projection mechanism which is described below.

\add{Because our projection mechanism must preserve the $\emd$ between $\tK, \tK'$, it is not acceptable to select an arbitrary set of points from $\tK, \tK'$, as this could dramatically inflate the $\emd$ distance. However, couplings are intimately tied with \emph{sampling} -- observe that two random samples from $\tK, \tK'$ can actually be coupled so that their expected distance is $\emd(\tK, \tK')$. If this process is repeated multiple times, then the cost of coupling the samples will converge to $\emd(\tK, \tK')$. Thus, sampling with replacement will result in a projection that does not increase the $\emd$ distance by too much.
}

\begin{algorithm}
\caption{\ReductCentral{}, a reduction from unbounded $\emd$-DP to bounded $\emd$-DP.}\label{alg:red-central}

\KwData{$K_G$ - Global datasets of $n$ users; $\calA$ - A mechanism satisfying bounded $\emd$-DP; $s$ - Number of samples.}
$L = \emptyset$\;
\For{$i=1$ to $n$}{
Add $s$ uniform samples with replacement from $K_i$ to $L$
}
$O = \calA(L)$\;
\KwRet{$O$}

\end{algorithm}
\begin{restatable}[]{lemma}{smoothproj}\label{lem:coupling-dist}
    Let $\tK, \tK' \in \Delta^\calX$ be probability distributions, and let $C^*$ be the minimum cost coupling between $\tK, \tK'$. Let $\{(x_i, y_i)\}_{i=1}^s$ be $s$ i.i.d. samples from $C^*$, $L = (x_1, \ldots, x_s)$ and $L' = (y_1, \ldots, y_s)$. Then,
\[
    \Pr[\emd(\tL, \tL') \geq (1+\sqrt{2}) \emd(\tK, \tK') + \tfrac{3}{s} \ln (\tfrac{1}{\delta})] \leq \delta.
\]
\end{restatable}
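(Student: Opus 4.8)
\textbf{Proof Plan for Lemma~\ref{lem:coupling-dist}.}

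The plan is to bound $\emd(\tL, \tL')$ by exhibiting an explicit (possibly suboptimal) coupling between the empirical distributions $\tL$ and $\tL'$, namely the ``diagonal'' coupling that pairs $x_i$ with $y_i$ for each $i \in [s]$. By Lemma~\ref{lem:bvn}, since $L$ and $L'$ are datasets of equal size $s$, the $\emd$ equals $\min_\pi \frac{1}{s}\sum_i d_\calX(x_i, y_{\pi(i)})$, and in particular it is at most the cost of the identity permutation, so $\emd(\tL, \tL') \leq \frac{1}{s}\sum_{i=1}^s d_\calX(x_i, y_i)$. Now $Z_i := d_\calX(x_i, y_i)$ are i.i.d.\ nonnegative random variables bounded by $1$ (since $d_\calX \leq 1$), with mean $\E[Z_i] = \E_{(x,y)\sim C^*} d_\calX(x,y) = \emd(\tK, \tK')$ by definition of $C^*$ being the minimum-cost coupling. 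So the task reduces to a concentration bound: showing $\frac{1}{s}\sum_i Z_i$ does not exceed $(1+\sqrt 2)\emd(\tK,\tK') + \frac 3s \ln(1/\delta)$ except with probability $\delta$.

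The key step is Bernstein's inequality applied to $\sum_i Z_i$. Writing $\mu = \emd(\tK, \tK')$, each $Z_i \in [0,1]$ so $Z_i - \mu$ has variance at most $\E[Z_i^2] \leq \E[Z_i] = \mu$ (using $Z_i \leq 1$), and the $Z_i$ are bounded, so Bernstein gives, for any $t > 0$,
\[
    \Pr\!\left[\tfrac{1}{s}\sum_{i=1}^s Z_i \geq \mu + t\right] \leq \exp\!\left(-\frac{s t^2}{2\mu + \tfrac{2}{3}t}\right).
\]
Setting the exponent equal to $-\ln(1/\delta)$ and solving the resulting quadratic $s t^2 = (2\mu + \frac23 t)\ln(1/\delta)$ for $t$, one finds a root $t = O\!\big(\sqrt{\mu \ln(1/\delta)/s} + \ln(1/\delta)/s\big)$; then using the AM--GM bound $\sqrt{\mu \ln(1/\delta)/s} \leq \tfrac{1}{\sqrt 2}\mu + \tfrac{1}{\sqrt 2}\cdot\tfrac{\ln(1/\delta)}{s}$ (applied to the two factors $\sqrt{\mu}$ and $\sqrt{\ln(1/\delta)/s}$ with weight chosen to produce a $\sqrt 2 \mu$ term) absorbs the cross term into a $\sqrt 2 \mu$ contribution and a clean $\frac 3s \ln(1/\delta)$ contribution, yielding $\mu + t \leq (1+\sqrt 2)\mu + \tfrac 3s \ln(\tfrac 1\delta)$. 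Combining with the Lemma~\ref{lem:bvn} upper bound on $\emd(\tL, \tL')$ completes the argument.

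The main obstacle is purely the bookkeeping in the last step: choosing the split in the AM--GM / quadratic-formula manipulation so that the constants come out exactly as $(1+\sqrt 2)$ and $3$ rather than some messier expressions, and making sure the variance proxy is handled correctly (the bound $\mathrm{Var}(Z_i) \leq \mu$ rather than $\leq \mu(1-\mu)$ or $\frac14$ is what keeps the multiplicative constant near $1$ when $\mu$ is small, which is the regime that matters for the additive penalty). There is no conceptual difficulty — the ``smoothness'' of sampling-with-replacement as a projection is entirely captured by the identity $\E[Z_i] = \emd(\tK,\tK')$, and everything else is a standard concentration estimate. One should also note that the statement is about samples drawn jointly from $C^*$; in the actual reduction (Algorithm~\ref{alg:red-central}) the samples are drawn independently from $K_i$ and $K_i'$, but these can be \emph{re-coupled} through $C^*$ without changing their marginal distributions, so the lemma as stated suffices for the downstream privacy argument.
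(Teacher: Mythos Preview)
Your proposal is correct and matches the paper's proof essentially line for line: the paper also upper bounds $\emd(\tL,\tL')$ by the diagonal coupling cost $\frac{1}{s}\sum_i d_\calX(x_i,y_i)$, uses $\E[Z_i^2]\leq \E[Z_i]=\mu$ as the variance proxy in Bernstein, and then splits the cross term $2\sqrt{s\mu\ln(1/\delta)}\leq \sqrt{2}\,s\mu+\sqrt{2}\,\ln(1/\delta)$ to obtain the constants $(1+\sqrt 2)$ and $\tfrac 4 3+\sqrt 2<3$. Your closing observation about re-coupling the independent samples through $C^*$ is exactly how the paper applies the lemma in the proof of Theorem~\ref{thm:emd-dp-central}.
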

\add{
\noindent\textbf{Remarks.} 
\noindent The multiplicative factor of $1+\sqrt{2}$ shows that the projection is smooth when $\emd(\tK, \tK')$ dominates the additive factor of $\frac{3}{s} \ln(\frac{1}{\delta})$. This is achieved when the number of samples $s$ is much larger than the inverse of $\emd(\tK, \tK')$. If $s$ is not sufficiently large, then not enough samples are being taken to ensure convergence.
}
\\
\add{
\noindent\textbf{Proof Sketch.}
Let $C$ be a minimum-cost coupling between $\tK, \tK'$.
To show that $\tL, \tL'$ are not far from each other, we can view $\tL'$ as being generated from $\tL$, where for each point $x \in L$, a point $y \sim C_{x}(\cdot)$ is added to $L'$. This view of $\tL, \tL'$ shows there is a transportation plan from $L = \{x_1, \ldots, x_s\}$ and $L' = \{y_1, \ldots, y_s\}$ of expected cost $\E_{x \sim C_1, y \sim C_x} d_{\calX}(x, y) = \emd(\tK_i, \tK_i')$. Using Bernstein's inequality, we can show with probability at least $1-\delta$, $\emd(\tL, \tL')$ is upper bounded by $2 \emd(\tK_i, \tK_i') + \frac{6}{s} \log(\frac{1}{\delta})$.   \qed
}

\add{
Our full reduction to bounded $\emd$-DP first projects $K$ onto a dataset of size $m$ by taking samples with replacement. Next, it applies a \textit{blackbox} bounded $\emd$-DP mechanism, $\calA$, to the projected dataset $L$. By blackbox application we mean that $\calA$ can be any arbitrary mechanism as long as it satisfies bounded $\emd$-DP. We call this mechanism \ReductCentral{}, and it is illustrated in the central model in Algorithm~\ref{alg:red-central} (in the local model, each user samples from their own $K_i$, so we would simply have $n=1$).
Using the projection guarantee of Lemma~\ref{lem:coupling-dist}, \ReductCentral{} enjoys the following privacy guarantee:
}
\begin{restatable}[]{thm}{reductpriv}\label{thm:emd-dp-central}
    Let $\epsilon > 0$ and $\delta, r \in [0,1]$ be arbitrary constants. Suppose $\calA$ is a mechanism which satisfies $(\alpha, \delta)$-bounded local $\emd$-DP (Definition~\ref{def:emd-dp}), where
    \[
        \alpha = \tfrac{\epsilon}{(1 + \sqrt{2}) r + \tfrac{3}{s} \ln(\tfrac{1}{\delta})}.
    \]
    Then, \ReductCentral{} satisfies $(\epsilon, 2\delta, r)$-discrete local $\emd$-DP. Similarly, if $\calA$ is $(\alpha, \delta)$-bounded central $\emd$-DP (Definition \ref{def:emd-dp}), then \ReductCentral{} is $(\epsilon, 2\delta, r)$-discrete central $\emd$-DP.
\end{restatable}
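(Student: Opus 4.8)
The plan is to run a coupling argument that transports the smooth‑projection guarantee of Lemma~\ref{lem:coupling-dist} through the black‑box mechanism $\calA$. I will argue the central model; the local model is recovered by setting $n=1$. First I would fix a neighboring pair $K_G \sim_r K_G'$, differing only in user $i$'s dataset with $\emd(\tilde{K}_i, \tilde{K}_i') \leq r$, and build a joint coupling $\nu$ of the two sampling processes of Algorithm~\ref{alg:red-central}: for every user $j \neq i$ the sampled block has the same law under both inputs, so I couple those draws to be literally identical; for user $i$, I take $s$ i.i.d.\ draws from a minimum‑cost coupling $C^*$ between $\tilde{K}_i$ and $\tilde{K}_i'$, which---because the marginals of $C^*$ are exactly $\tilde{K}_i$ and $\tilde{K}_i'$, the laws of a uniform draw from $K_i$ and from $K_i'$---reproduces the two sampling distributions and makes the resulting global sampled datasets $L, L'$ (each $i$‑th block of the fixed size expected by $\calA$) differ only in the $i$‑th block.

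Next I would apply Lemma~\ref{lem:coupling-dist} (with failure probability $\delta$) to this coupling: except on an event $G^c$ of $\nu$‑probability at most $\delta$, the sampled blocks satisfy $\emd(\tilde{L}_i, \tilde{L}_i') \leq (1+\sqrt{2})\,\emd(\tilde{K}_i,\tilde{K}_i') + \tfrac{3}{s}\ln(\tfrac{1}{\delta}) \leq (1+\sqrt{2})r + \tfrac{3}{s}\ln(\tfrac{1}{\delta}) =: D$. Since $\calA$ is $(\alpha,\delta)$‑bounded central $\emd$‑DP with $\alpha = \epsilon/D$, on the good event $G$ the pair $(L,L')$ is a bounded‑neighboring pair with $\alpha\,\emd(\tilde{L}_i,\tilde{L}_i') \leq \alpha D = \epsilon$, so $\Pr[\calA(L)=O] \leq e^{\epsilon}\Pr[\calA(L')=O] + \delta$.

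Finally I would assemble the bound by writing $\Pr[\ReductCentral(K_G)=O] = \E_{(L,L')\sim\nu}[\Pr[\calA(L)=O]]$, splitting the expectation over $G$ and $G^c$, using the displayed $\epsilon$‑inequality on $G$, the trivial bound $\Pr[\calA(L)=O] \leq 1$ together with $\Pr[G^c]\leq\delta$ on $G^c$, and $\E_\nu[\Pr[\calA(L')=O]\mathbf{1}_G] \leq \E_\nu[\Pr[\calA(L')=O]] = \Pr[\ReductCentral(K_G')=O]$. The two $\delta$ terms---one from the projection‑failure event $G^c$, one from $\calA$'s own additive slack---sum to $2\delta$, giving $\Pr[\ReductCentral(K_G)=O] \leq e^{\epsilon}\Pr[\ReductCentral(K_G')=O] + 2\delta$, i.e.\ $(\epsilon,2\delta,r)$‑discrete central $\emd$‑DP; setting $n=1$ gives the local statement.

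The hard part is not any single calculation but the careful bookkeeping of the coupling: checking that drawing from $C^*$ genuinely has the two sampling processes as marginals (so the expectation identity for $\ReductCentral$ is legitimate), resisting the temptation to replace the conditional law of $L'$ given $G$ by its marginal (instead one keeps the cruder nonnegativity bound $\E_\nu[\Pr[\calA(L')=O]\mathbf{1}_G] \leq \E_\nu[\Pr[\calA(L')=O]]$), and ensuring the failure probability of Lemma~\ref{lem:coupling-dist} and the additive slack of $\calA$ are each counted exactly once so the total is precisely $2\delta$. Everything else is substituting $\alpha D = \epsilon$ into the bounded $\emd$‑DP inequality.
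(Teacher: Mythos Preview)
Your proposal is correct and follows essentially the same route as the paper: both construct the coupling via the optimal $C^*$ on user $i$'s block (identity elsewhere), invoke Lemma~\ref{lem:coupling-dist} for the good event, split into good/bad parts, and collect the two $\delta$ contributions. The only cosmetic difference is that the paper phrases the averaging step through the hockey-stick divergence and its convexity (Lemma~\ref{lem:convex-coupling}), whereas you unwind the same inequality directly in terms of $\Pr[\calA(L)=O]$; and the paper does the local case first and then says ``generalize to central,'' while you do central first and specialize to $n=1$.
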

\noindent
\textbf{Remarks.}
If the number of samples $s$ is at least $\frac{\ln(1/\delta)}{r}$, then Theorem~\ref{thm:emd-dp-central} shows there is only a small multiplicative cost to considering just bounded $\emd$-DP (in the respective local or central model). In this case, $\calA$ will need to roughly satisfy $(\frac{\epsilon}{r}, \delta)$-bounded $\emd$-DP, and this is roughly the same as the resulting $(\epsilon, \delta, r)$-discrete $\emd$ DP algorithm. There is no privacy disadvantage to taking a large number of samples, and the utility may also increase due to more information about the dataset being captured (recall that the projection does not providing privacy; it is being provided by $\calA$). Thus, the number of samples may be set to be large with computational costs being the only constraint.\\
\noindent
\add{
\textbf{Proof Sketch.}
The proof of privacy is almost immediate from Lemma~\ref{lem:coupling-dist}. The variables $\tL, \tL'$ in two executions of \ReductCentral{} are random variables, but with probability $\delta$, the $\emd$ distance between them is $c=(1+\sqrt{2})r + \frac{3}{s}\ln(\frac{1}{\delta})$. By the convexity of differential privacy, we can analyze the privacy parameter of every fixed choice of $\tL, \tL'$. With probability $1-\delta$, the privacy parameter will be $\frac{\alpha}{c}$, by the privacy guarantee of $\calA$.
\qed}\\\noindent
\ReductCentral{} can be used to bound the contribution of each user in the central setting, allowing us to apply the simpler Definition \ref{def:emd-dp-central}. In addition, it can be used to adapt \Hist{} to the unbounded data setting. One caveat is that utility may not be preserved if the number of user samples is too small or, in the central setting, if the users data distributions are heterogeneous. In particular, if users have varying numbers of samples, each from different distributions, applying \ReductCentral{} equalizes the frequency of all user data. Nonetheless, it is often reasonable to assume the users have homogeneous data distributions~\cite{liu2020learning, acharya2023discrete}. 
\\\noindent
\add{\textbf{Open Questions.} Many sampling procedures are likely to be compatible with $\emd$. This leads to the question of whether different procedures, such as sampling without replacement, are also smooth projections.}

\section{Applications of Proposed Mechanisms}\label{sec:utility}

In this section, we compare the utilities of \Lin{} and \Hist{} to existing mechanisms satisfying user-level DP. 
For simplicity, we assume the bounded data setting.
Full proofs for this section appear in Appendix~\ref{app:uti-proofs}.

\textbf{Notations.}
We define the following quantities of a real matrix $M \in \R^{d \times k}$. First, the $(p,q)$ operator norm of $M$, denoted by $\|M\|_{p \rightarrow q}$, is given by $\|M\|_{p \rightarrow q} = \sup_{x \in \R^k, \|x\|_p \leq 1} \|Mx\|_q$. We can show that $\| M \|_{1 \rightarrow 2}$ is equal to the maximum $\ell_2$ norm of a column of $M$. Furthermore, $\|M\|_{2 \rightarrow 2}$, more commonly written as $\|M\|_2$, is the spectral norm and is equal to the maximum singular value of $M$. Matrix norms satisfy the important submultiplicative property, which states that $\|MN\|_{p \rightarrow r} \leq \|M\|_{q \rightarrow r} \|N\|_{p \rightarrow q}$ for any matrices $M,N$ and $p,q,r \geq 1$. 
Next, let $I_d$ denote the $d \times d$ identity matrix, and again suppose that $M \in R^{d \times k}$ with $d \leq k$. If $M$ has full row rank, then there exists a matrix $N \in \R^{k \times d}$ such that $MN = I_d$. We call such a matrix $N$ a \emph{right inverse} of $M$. Finally, for $M \in \R^{s_1 \times t_1}$ and $N \in \R^{s_2 \times t_2}$, let $M \otimes N \in \R^{s_1s_2 \times t_1t_2}$ denote the Kronecker product of two real matrices, whose entry in $((i_1,i_2), (j_1,j_2))$ is $M_{i_1j_1}N_{i_2j_2}$. 

\subsection{Linear Embedding Queries}\label{sec:mean-est}

Many applications of metric DP assume there is an embedding function $\phi : \calX \rightarrow \R^t$, which maps an item to its semantic representation in $\R^t$ (each of the examples in Section~\ref{sec:intro} have an embedding representation). The metric $d_\calX$ is then the distance between $\phi(x)$ and $\phi(x')$; in this section, we consider the $l_2$ distance.

Since $\phi(x)$ also communicates information about the item $x$, we define \emph{linear embedding queries} as linear queries applied to an item's embedding $\phi(x)$. Formally,
\[
    q_{f \circ \phi}(K) = \E_{x \sim \tK} [f \circ \phi (x)],
\]
where $f(y) = Fy$ for a matrix $F \in \R^{d \times t}$ (meaning that $f$ is a \emph{linear} function). Assume each row $F_i$ of $F$ is normalized so that $\|F_i\|_2 \leq 1$. Each coordinate of $f \circ \phi $ is equal to $\E_{x \sim \tK} [\langle F_i, \phi(x)\rangle]$. \add{Thus, we may view each coordinate of a linear embedding query as a similarity query in the embedding space with query point $F_i$.}
For ease of discussion, our analysis will assume that $d < |\calX|$ and $d \ll n$, which is often a setting of practical interest. Note that we may write $q_{f \circ \phi}$ as $F \Phi \tK$, where  $\Phi \in \R^{t \times \calX}$ is the collection of embedding vectors in $\calX$. 

\subsubsection{Local Model} Existing user-level DP mechanisms ask user at index $i$ to privately release the query $\hat{q}_i = q_{f \circ \phi}(\tK_i)$. The aggregator computes the average $\hat{q} = \frac{1}{n}\sum_{i=1}^n \hat{q}_i$. The best-known solutions for the low-$d$ setting are based on additive noise mechanisms~\cite{duchi2013local, bassily2019linear}, or more sophisticated matrix factorization mechanisms~\cite{edmonds2020power}. We focus on the former algorithms; these results also carry over to matrix factorization algorithms. In low-dimension, the best additive noise mechanisms achieve the following error:

\begin{lemma}\label{lem:lin-user} \cite[Proposition 3]{duchi2013local}
    There exists an $(\epsilon, 0)$-bounded user-level DP mechanism in the local model algorithm which produces an estimate $\hat{q}$ such that, for all $\tK$, 
    \[
        \E[\|\hat{q} - q_{f \circ \phi}(\tK)\|_2] \leq O\left(\|F \Phi\|_{1\rightarrow 2} \tfrac{\sqrt{d}}{\epsilon \sqrt{n}} \right).
    \]
\end{lemma}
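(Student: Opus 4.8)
The plan is to derive this as a corollary of the standard local-model mechanism for releasing bounded vectors, applied to the vector-valued data $z_i = q_{f\circ\phi}(\tK_i) \in \R^d$. First I would observe that each user can locally compute $z_i = F\Phi\tK_i$ from their own dataset, and that the target quantity $q_{f\circ\phi}(\tK) = F\Phi\tK$ equals the average $\frac1n\sum_i z_i$ when all users' data are aggregated; wait — more precisely, in the bounded user-level setting the ``true answer'' is $\frac1n\sum_i z_i$, and that is exactly what $\hat q$ estimates, so the error is purely the privacy noise introduced per user and then averaged. The key structural fact I would invoke is a bound on the range of the $z_i$: since each row $F_i$ has $\|F_i\|_2 \le 1$, and $\tK_i$ is a probability distribution, we have $z_i$ lying in the convex hull of the columns of $F\Phi$, so $\|z_i\|_2 \le \|F\Phi\|_{1\to 2}$ (using the earlier-stated fact that $\|M\|_{1\to 2}$ is the max $\ell_2$-norm of a column). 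Hence each user holds a vector in a Euclidean ball of radius $B := \|F\Phi\|_{1\to 2}$.

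The second step is to apply the known optimal local-DP mean-estimation result (Duchi et al., Proposition 3, cited as~\cite{duchi2013local}) in the form: for $n$ users each holding a vector in the $\ell_2$-ball of radius $B$, there is an $(\epsilon,0)$-local DP protocol whose output $\hat q$ satisfies $\E[\|\hat q - \frac1n\sum_i z_i\|_2] \le O(B\sqrt{d}/(\epsilon\sqrt n))$. Plugging $B = \|F\Phi\|_{1\to 2}$ and noting $\frac1n\sum_i z_i$ is exactly $q_{f\circ\phi}(\tK)$ in the bounded aggregated sense gives the claimed bound. I would also need to note that the per-user map $K_i \mapsto z_i$ uses only $U_i$'s own data (post-processing of nothing private, just a deterministic function), so composing it with the local-DP mean estimator preserves $(\epsilon,0)$ user-level local DP — changing $K_i$ arbitrarily only changes $z_i$, which is the single input the local randomizer sees.

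The main obstacle, and the only part needing care, is the scaling/normalization bookkeeping: confirming that the ``error'' in the lemma is measured against $q_{f\circ\phi}(\tK)$ where $\tK$ denotes the (implicit) global normalized histogram and that this coincides with $\frac1n\sum_i z_i$ under the bounded-data convention (all $|K_i| = m$, so the global normalized histogram is the average of the per-user normalized histograms, hence linearity gives $F\Phi\tK = \frac1n\sum_i F\Phi\tK_i$). A secondary subtlety is that Duchi et al.'s statement is typically given for the unit ball and for releasing the mean of points; rescaling by $B$ is immediate by linearity of the randomizer's guarantee, but I would state it explicitly. I do not anticipate needing the footnote's claim (that no $(\epsilon,\delta)$-DP algorithm beats this) for the proof itself — that is a remark about optimality, not something to be proven here.

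Finally, I would keep the proof short: (i) reduce to Euclidean mean estimation with radius $B = \|F\Phi\|_{1\to 2}$; (ii) cite the local-DP mean-estimation bound; (iii) rescale and identify the mean with $q_{f\circ\phi}(\tK)$. If a self-contained argument were wanted instead of a black-box citation, one could instantiate \Lin{} (Algorithm~\ref{alg:lap}) per user with Gaussian or appropriately-chosen noise — but since the lemma is explicitly attributed to~\cite{duchi2013local}, I expect the author's proof is simply the reduction-and-cite I have outlined, possibly with the rescaling spelled out.
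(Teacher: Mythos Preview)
Your proposal is correct, and in fact the paper does not prove this lemma at all: it is stated as a direct citation of Proposition~3 in~\cite{duchi2013local}, with no proof given in the body or appendix. The paper's surrounding text does sketch exactly the reduction you describe---``Existing user-level DP mechanisms ask user at index $i$ to privately release the query $\hat{q}_i = q_{f \circ \phi}(\tK_i)$. The aggregator computes the average $\hat{q} = \frac{1}{n}\sum_{i=1}^n \hat{q}_i$''---so your reconstruction (bound $\|z_i\|_2 \le \|F\Phi\|_{1\to 2}$ via the convex-hull argument, then black-box the $\ell_2$-ball mean estimator) is precisely the intended interpretation, just made explicit.
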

To interpret the term $\|F \Phi\|_{1 \rightarrow 2}$, we can use the inequality $\|F \Phi\|_{1 \rightarrow 2} \leq \|F\|_2 \|\Phi\|_{1 \rightarrow 2}$, which is tight for certain choices of $F$ and $\Phi$. By assumption, we know $\|F\|_2 \leq \sqrt{d}$ and $\|\Phi \|_{1\rightarrow 2} \leq 1$, both of which can also be tight. The bound is thus $O(\frac{d}{\epsilon \sqrt{n}})$.

On the other hand, for $\emd$-DP, by Theorem~\ref{thm:sens}, we know that $\Delta_{\mathsf{EM}}(q_{f \circ \phi})$ is at most the Lipschitz constant of $f \circ \phi$ given by:
\[
    \max_{x, x' \in \calX} \tfrac{\|F (\phi(x)) - F(\phi(x')) \|}{\|\phi(x)-\phi(x')\|} \leq \max_{x, x' \in \calX} \tfrac{\|F(\phi(x)) - \phi(x'))\|}{\|\phi(x)-\phi(x')\|} \leq \|F\|_{2}.
\]
Hence, each user can apply \Lin{} with the Gaussian mechanism with $\ell = \|F\|_2$, which gives the following utility guarantee:
\begin{restatable}[]{lemma}{lqacc}\label{lem:lq-local}
    There exists an $(\alpha, \delta)$-bounded  $\emd$-DP algorithm in the local model which produces an estimate $\hat{q}$ such that, for all $\tK$, 
    \[
        \E[\|\hat{q} - q_{f \circ \phi}(\tK)\|_2] \leq O(\|F\|_2\tfrac{\sqrt{d \ln (1/\delta))}}{\alpha \sqrt{n}}).
    \]
\end{restatable}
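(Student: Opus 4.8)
The plan is to instantiate \Lin{} with the Gaussian mechanism and combine the sensitivity bound from Theorem~\ref{thm:sens} with the standard Gaussian-mechanism error analysis. First I would observe that, by the displayed computation preceding the statement, the Lipschitz constant of $f \circ \phi$ is at most $\|F\|_2$: for any $x, x' \in \calX$ we have $\|F\phi(x) - F\phi(x')\|_2 \leq \|F\|_2 \|\phi(x) - \phi(x')\|_2 = \|F\|_2\, d_\calX(x,x')$ by the submultiplicative property of the operator norm and the fact that $d_\calX$ is the $\ell_2$ distance between embeddings. Hence Theorem~\ref{thm:sens} gives $\Delta_{\mathsf{EM}}(q_{f\circ\phi}) \leq \|F\|_2$, so we may run \Lin{} with $\ell = \|F\|_2$.

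Next I would replace the Gamma/uniform-direction noise of Algorithm~\ref{alg:lap} with spherical Gaussian noise of the appropriate width, as anticipated in the remarks after Lemma~\ref{lemma:lin}. In the local model each user runs the mechanism on $\tK_i$ with scale parameter $\omega = \frac{1}{\alpha}$ (the unbounded-local setting of Lemma~\ref{lemma:lin}), adding noise $Z_i \sim \calN\!\left(0, \sigma^2 I_d\right)$ with $\sigma = \frac{\|F\|_2 \sqrt{1.25 \ln(1/\delta)}}{\alpha}$; the standard analytic Gaussian mechanism bound~\cite{dwork2014algorithmic} then certifies $(\alpha,\delta)$-bounded local $\emd$-DP for each $\calM_i$, since the $\ell_2$-sensitivity under $\emd$ is $\|F\|_2$ and the privacy parameter scales with $\alpha\,\emd(\tK_i,\tK_i')$. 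The aggregator outputs $\hat q = \frac{1}{n}\sum_{i=1}^n (q_{f\circ\phi}(\tK_i) + Z_i) = q_{f\circ\phi}(\tK) + \frac{1}{n}\sum_i Z_i$, using that $q_{f\circ\phi}$ is linear in the histogram so the average of per-user queries equals the query on the pooled dataset $\tK$. The residual noise is $\frac{1}{n}\sum_i Z_i \sim \calN\!\left(0, \frac{\sigma^2}{n} I_d\right)$, and $\E[\|\calN(0,\tfrac{\sigma^2}{n}I_d)\|_2] \leq \sqrt{d}\,\sigma/\sqrt n$ by Jensen applied to $\E[\|\cdot\|_2^2] = d\sigma^2/n$. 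Substituting $\sigma$ gives the claimed $\|F\|_2 \frac{\sqrt{1.25\, d \ln(1/\delta)}}{\alpha\sqrt n}$.

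The only genuinely delicate point — and the step I expect to be the main obstacle — is the privacy bookkeeping when moving from the direction-plus-Gamma instantiation stated in Algorithm~\ref{alg:lap} and Lemma~\ref{lemma:lin} to the Gaussian instantiation: one must check that a mechanism which adds $\ell$-scaled noise calibrated to $\ell_2$-sensitivity $\ell = \|F\|_2$ under the $\ell_2$ norm indeed yields $(\alpha,\delta)$-local $\|\cdot\|_2$-DP, and that composing this with Theorem~\ref{thm:sens} gives $\emd$-DP with the privacy loss degrading as $\alpha\,\emd(\tK,\tK')$ rather than just at $\emd$-distance $1$. This follows because the per-coordinate argument in the proof of Lemma~\ref{lemma:lin} is really about the output distribution shifting by $q_{f\circ\phi}(\tK)-q_{f\circ\phi}(\tK')$, whose norm is at most $\ell\cdot\emd(\tK,\tK')$, so the effective Gaussian privacy parameter is $\alpha\,\emd(\tK,\tK')$ exactly as required. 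Everything else — linearity of the aggregation, the Gaussian tail/expectation bound, and the arithmetic — is routine.
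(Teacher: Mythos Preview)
Your proposal is correct and follows essentially the same approach as the paper: bound the $\ell_2$-sensitivity by $\|F\|_2$ via Theorem~\ref{thm:sens}, add per-user Gaussian noise of width $\|F\|_2\sqrt{1.25\ln(1/\delta)}/\alpha$ to obtain $(\alpha,\delta)$-$\emd$-DP, then average over the $n$ users and bound the expected $\ell_2$ norm of the residual Gaussian. Your write-up is in fact more careful than the paper's about the privacy bookkeeping (that the privacy loss scales with $\alpha\,\emd(\tK,\tK')$) and about why the average of per-user queries equals the pooled query.
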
\noindent
\textbf{Remarks.}
We use the Gaussian mechanism because it performs better under the $\ell_2$ error than the pure $(\alpha, 0)$-bounded local $\emd$ DP illustrated in Algorithm~\ref{alg:lap}. However, this forces us to use $\delta > 0$. 
Compared to Lemma~\ref{lem:lin-user}, and assuming that $\|F\Phi\|_{1 \rightarrow 2} \approx \|F\|_2$, the above bound differs by a factor of $\frac{\epsilon}{\alpha}$ (and small $\ln \frac{1}{\delta}$ terms)---when $\alpha = \epsilon$, we know that $\emd$-DP provides better privacy. When $\epsilon \ll \alpha$, that \Hist{} offers lower error than Lemma~\ref{lem:lin-user}.
\subsubsection{Central Model}
In the central model, linear query release has been extensively studied, and optimal algorithms under item-level DP are known~\citep{hardt2009geometry, bhaskara2012unconditional, nikolov2013geometry, edmonds2020power}. These algorithms can be easily adapted to user-level pure DP, providing the following guarantee\footnote{\add{For simplicity, we do not state or compare to the exact instance-optimal upper bounds known for linear queries. Instead, the upper bound in Lemma~\ref{lem:lin-user-cent} is the optimal one for random linear queries~\cite{hardt2009geometry}. Like with local DP, it is possible to carry over the sensitivity reduction of $\emd$-DP to other techniques such as matrix factorization.}}:
\begin{lemma}\label{lem:lin-user-cent} (From Theorem 1.3 in~\cite{hardt2009geometry})
    There exists an $(\epsilon, 0)$-bounded user-level DP algorithm in the central model which produces an estimate $\hat{q}$ such that, for all $\tK$,
    \[
        \E[\|\hat{q} - q_{f \circ \phi}(\tK)\|_2] \leq O\left(\|F \Phi\|_{1 \rightarrow 2} \tfrac{\sqrt{d}}{\epsilon n} \ln\left(\tfrac{k}{d}\right) \right).
    \]
\end{lemma}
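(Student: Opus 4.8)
The plan is to reduce the user-level problem to a single item-level linear query release problem, so that the known optimal mechanism for random linear queries in the item-level central model can be invoked as a black box. Recall that $q_{f \circ \phi}(\tK) = F \Phi \tK$ where $\Phi \in \R^{t \times \calX}$ collects the embedding vectors and $F \in \R^{d \times t}$; thus the composite query matrix is $M := F\Phi \in \R^{d \times |\calX|}$. The central dataset is $K_G = K_1 \cup \cdots \cup K_n$ with each $|K_i| = m$, and the normalized global histogram is $\tK_G = \frac{1}{n}\sum_{i=1}^n \tK_i$. Since we wish to release $F \Phi \tK_G$, the target is precisely a linear query $M \tK_G$ applied to a normalized histogram over $|\calX| = k$ domain elements.

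The key observation is a sensitivity calculation: under user-level neighboring, $K_G \sim K_G'$ means one user's dataset $K_i$ changes to $K_i'$, so $\tK_G - \tK_G' = \frac{1}{n}(\tK_i - \tK_i')$, and since $\tK_i, \tK_i'$ are probability vectors this difference has $\ell_1$ norm at most $\frac{2}{n}$. Hence the user-level $\ell_1$-sensitivity of the map $K_G \mapsto \tK_G$ is $O(1/n)$, which is a factor $1/n$ smaller than the item-level sensitivity $1$. The approach is therefore: first I would express releasing $M \tK_G$ under $(\epsilon,0)$-user-level DP as releasing $M \tK_G$ under $(\epsilon,0)$-item-level DP on a rescaled instance — formally, run the optimal item-level mechanism of~\cite{hardt2009geometry} (Theorem 1.3 there) on the $d \times k$ query matrix $M$ with the histogram $n\tK_G = \sum_i \tK_i$ which has $\ell_1$ norm $n$ and item-level sensitivity $1$, producing an estimate of $M(n\tK_G)$ with $\ell_2$ error $O(\|M\|_{1 \rightarrow 2} \sqrt{d}\, \epsilon^{-1} \ln(k/d))$; then divide by $n$. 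Dividing the estimate and its error by $n$ yields an estimate of $M\tK_G = q_{f\circ\phi}(\tK_G)$ with $\ell_2$ error $O(\|F\Phi\|_{1\rightarrow 2} \sqrt{d}\, (\epsilon n)^{-1} \ln(k/d))$, which is exactly the claimed bound. Post-processing invariance ensures the rescaling preserves privacy.

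The step I expect to require the most care is verifying that the cited item-level guarantee applies verbatim to the matrix $M = F\Phi$ of dimensions $d \times k$ with $d \le k$ and $d \ll n$, and that the relevant quantity appearing in that bound is indeed $\|M\|_{1\rightarrow 2}$ (the maximum $\ell_2$ norm of a column of $M$) rather than some other norm — this is why the statement is phrased for random linear queries, where the optimal dependence is known to be $\|M\|_{1\rightarrow 2}\sqrt{d}\,\epsilon^{-1}\ln(k/d)$. One must also confirm that the histogram-scaling argument is legitimate: the mechanism operates on integer histograms of total mass $mn$ (the counts in $K_G$), and $\tK_G = \frac{1}{mn}(\text{counts})$, so the actual rescaling factor is $1/(mn)$ applied to an estimate of $M \cdot (\text{counts})$, whose error is $O(\|M\|_{1\rightarrow 2}\sqrt d\,\epsilon^{-1}\ln(k/d))$ independent of the total mass; dividing by $mn$ gives error $O(\|M\|_{1\rightarrow 2}\sqrt d\,(\epsilon m n)^{-1}\ln(k/d))$, and since each user contributes $m$ items, $m \tK_i$ are the per-user counts — but the user-level sensitivity of $\sum_i(\text{per-user counts})$ under changing one user is $2m$, not $2$, so the $m$ cancels and one is left with the stated $O(\|F\Phi\|_{1\rightarrow2}\sqrt d\,(\epsilon n)^{-1}\ln(k/d))$. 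Making this bookkeeping precise — and noting, as the footnote does, that no $(\epsilon,\delta)$ improvement is known — is the only real obstacle; the rest is a direct invocation of existing optimal linear-query machinery combined with post-processing.
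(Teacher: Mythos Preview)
Your proposal is correct and is precisely the adaptation the paper has in mind, though the paper does not actually spell out a proof: it simply states the lemma as a direct consequence of Theorem~1.3 in~\cite{hardt2009geometry}, remarking only that the item-level optimal mechanism ``can be easily adapted to user-level DP.'' Your reduction---observing that under bounded user-level neighboring the $\ell_1$-sensitivity of $\tK_G$ scales as $O(1/n)$, invoking the $K$-norm mechanism on the $d\times k$ matrix $F\Phi$, and rescaling---is exactly that easy adaptation made explicit, including the correct bookkeeping showing that the per-user contribution size $m$ cancels.
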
%
To provide $(\alpha, \delta)$-bounded $\emd$-DP in the central model, we can use \Lin{} with the Gaussian mechanism with scale $\omega = \frac{1}{n\alpha }$. Following the same approach as in Lemma~\ref{lem:lq-local}, this results in $O(\|F\|_2 \frac{\sqrt{d \ln \frac{1}{\delta}}}{\alpha n})$ error. Again, this is worse than Lemma~\ref{lem:lin-user-cent} by a factor of $\frac{\epsilon}{\alpha}$, and similar observations apply.


\subsection{Frequency Estimation}\label{sec:freq-est}
Here, we evaluate the error of \Hist{} for private frequency estimation, where the goal is to obtain a private estimate $\tH$ of the (normalized) histogram $\tK$. This problem has been extensively studied in privacy~\cite{hay2009boosting, xu2013differentially,suresh2019differentially, kairouz2016discrete, acharya2018communication, chen2020breaking, acharya2023discrete}; the high-level goal is to minimize the $\ell_p$ distance between $\tH$ and $\tK$, where $p$ is typically set to $1, 2$ or $\infty$. However when the data domain is a general metric space $\calX$, not all $\ell_p$ perturbations to $\tK$ are the same. Therefore, we we will measure the similarity between $\tK, \tH$ via $\emd(\tK, \tH)$, as we do in our privacy definition.

To simplify the analysis while still demonstrating the effectiveness of our mechanisms, we fix $\calX$ to be the following ``clustered'' metric space. Let $\calX = \calB \times \calC$, where $\calB = \{b_1, \ldots, b_s\}$, $\calC = \{c_1, \ldots, c_t\}$ and $s\cdot t = k$. For some $r < \frac{1}{2}$, the distance is given by the following:
\[
d_{\calB \times \calC} ((b,c), (b',c')) = \begin{cases} 
    0 & \text{if $b = b'$ and $ c = c'$} \\
    r & \text{if $b = b'$} \\
    1 & \text{otherwise}.
\end{cases}
\]
We can think of this metric space as a collection of $s$ clusters consisting of the $t$ items $\{(b, c_1), \ldots, (b, c_t)\}$ for each $b \in \calB$. Points in a cluster are more related, being at distance $r$ apart, than items in two different clusters, which are distance $1$ apart. We will assume that privacy level $\epsilon$ is needed between two items in the same cluster, so we will set $\alpha = \frac{\epsilon}{r}$.

\subsubsection{Algorithms in the Local Model}

At a high level, in the local model each user applies a private mechanism $\calA : \calX \rightarrow \calY$ (with $\calY$ discrete and $|\calY| \geq |\calX|$) to each sample and releases it. The central server forms an aggregate vector $v \in \R^\calY$. Let $A \in \R^{\calX \times \calY}$ denote the transition probability matrix of $\calA$; we have by linearity of expectation that $\E[v] = \tK A$. Local-DP imposes constraints on each pair of rows of $A$. Assuming that $A$ has a right inverse $B$, the central server returns the estimate $\tH = v B$, which is unbiased. Many mechanisms in local DP, including the ones with optimal error, can be expressed with this linear-algebraic view~\citep{erlingsson2014rappor, kairouz2016discrete, acharya2018communication, feldman2022private}. We summarize this in Algorithm~\ref{alg:hist-local}.

\begin{algorithm}
\caption{\FreqEstLocal{}, a general framework for histogram estimation under local DP}\label{alg:hist-local}
\KwData{$K$, a family of datasets from $n$ users each with size $m$; $\calA$, a mechanism from $\calX$ to $\calY$; $B \in \R^{\calY \times \calX}$, a right inverse of $\calA$.}
\For{\textbf{each} user $i$ from $1$ to $n$}{
    $L_i = \emptyset$\;
    \For{$l_j \in K_i$}{
        $r_j = \calA(l_j)$\;
        Add $r_j$ to $L_i$\;
    }
    Release $\tL_i$\;
}
$v = \frac{1}{n} \sum_{i=1}^n \tL_i$\;
$\tH = v B$\;
\Return{$\tH$}
\end{algorithm}

A state-of-the-art technique we will utilize is Hadamard response, which obtains order-optimal error~\cite{acharya2018communication, chen2020breaking}. This mechanism is based off of using Hadamard matrices to robustly encode $\calX$---specifically, the matrix $A$ is given by $q_1\textbf{1} + q_2H$, where $H$ is a Hadamard matrix and $q_1, q_2$ are constants chosen so that $A$ is normalized and that each element is proportional to either $e^{\epsilon}$ or $1$ to satisfy LDP. This mechanism has the following utility:

\begin{lemma}\label{lem:local-hist-utility-baseline} (From Theorem 3.1 in~\cite{chen2020breaking})
    There exists a mechanism $\calA$ such that $\FreqEstLocal$ satisfies $(\epsilon, \delta)$-bounded user-level DP and returns an estimator $\tH$ such that
    \[
        \max_{K} \E[\emd(\tK, \tH)] \leq O \left( \sqrt{\tfrac k {mn}} + \sqrt{ \tfrac{k^2 \ln (m/\delta))}{n\epsilon^2}}\right).
    \]
\end{lemma}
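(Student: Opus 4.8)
The plan is to bound the expected $\emd(\tK, \tH)$ by splitting it into two contributions: the sampling error (how far $\tK$, the true histogram over $m n$ samples, sits from a hypothetical ``true'' distribution) — accounted for by the first term — and the noise-induced error from the local randomizer. Since we are comparing against $\tK$ itself, the first $\sqrt{k/(mn)}$ term actually comes from the fact that each user only contributes $m$ samples, so even the \emph{average} $v = \frac1n\sum_i \tL_i$ of the (noiseless) per-user empirical distributions is only an $O(\sqrt{k/(mn)})$-accurate estimate of $\tK$ in an appropriate sense; but more precisely, I expect this term to arise from the variance of estimating $\tK$ from a size-$mn$ multiset that is itself partitioned into $n$ blocks of size $m$. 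The second term is the privacy cost of the Hadamard-response randomizer.

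First I would instantiate $\FreqEstLocal$ with $\calA$ = Hadamard response and its right inverse $B$, and write $\tH - \tK = (v - \E[v])B + (\E[v]B - \tK)$. Since $\E[v] = \tK A$ and $B$ is a right inverse of $A$, the second term vanishes, so $\tH$ is unbiased and $\tH - \tK = (v - \E[v])B$. Next I would control $\E[\emd(\tK, \tH)]$ by relating $\emd$ to a weighted $\ell_1$-type norm: since $d_{\calB\times\calC}$ takes only values $0, r, 1$, one has $\emd(\tK,\tH) \le r \cdot (\text{mass moved within clusters}) + 1 \cdot (\text{mass moved between clusters})$, so it suffices to bound the within-cluster $\ell_1$ error and the between-cluster (i.e. coarsened to $\calB$) $\ell_1$ error separately, then combine with the weights $r$ and $1$. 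This is where the clustered structure of $\calX$ is used: privacy is only required at scale $r$ within a cluster, so we set $\alpha = \epsilon/r$, i.e. the Hadamard response is run with parameter $\epsilon$ and the distances it must respect are the value-$r$ ones, giving the $r$ weight back in front of the dominant error term — but since $r$ cancels against $\alpha = \epsilon/r$ I'd expect the final bound to be stated purely in $\epsilon$.

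For the noise term I would use the standard variance analysis of Hadamard response: each coordinate of $v$ has variance $O(\frac{1}{n}\cdot\frac{\max\{1,e^\epsilon\}}{m \min(\cdot)})$, and summing over the $k$ coordinates after applying $B$ (whose operator norm is controlled because the Hadamard matrix is an isometry up to scaling) gives $\E[\|\tH-\tK\|_1] \le O(\sqrt{k} \cdot \|\tH-\tK\|_2) \le O(\sqrt{k}\cdot\sqrt{k/(n\epsilon^2 m)} \cdot \sqrt{\text{something}})$; getting the $\ln(m/\delta)$ factor requires a high-probability (not just in-expectation) bound on the per-user contributions, presumably via a Chernoff/Bernstein argument over the $n$ users combined with a union bound over the $\le m$ distinct values each user can hold, which is the source of the $\ln(m/\delta)$ and the slack $\delta$ in the privacy statement. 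Combining: the sampling term $\sqrt{k/(mn)}$ plus the noise term $\sqrt{k^2\ln(m/\delta)/(n\epsilon^2)}$, after reweighting by the $\emd$ cluster weights, yields the claimed bound.

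The main obstacle I anticipate is the conversion from the $\ell_2$/coordinate-wise variance bound on $v B$ to an $\emd$ bound with the correct $k$-dependence and the correct $\ln(m/\delta)$ factor: naively $\|\cdot\|_1 \le \sqrt{k}\|\cdot\|_2$ loses a $\sqrt k$, and one must check that the right-inverse $B$ of the Hadamard-response matrix does not blow up the variance (it does not, because $H$ is orthogonal up to scale, but this needs to be made precise), and that exploiting the clustered metric — only charging $r$ for within-cluster mass — genuinely saves the expected factor. Handling the $\delta$ and the $\ln(m/\delta)$ cleanly (this is really a tail bound making the expectation bound hold, i.e. invoking a known high-probability guarantee for Hadamard response such as Theorem 3.1 of~\cite{chen2020breaking} essentially as a black box) is the step I would lean on existing literature for rather than reprove.
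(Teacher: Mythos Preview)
Your proposal misidentifies the source of both the $\delta$ and the $\ln(m/\delta)$ factor. These do \emph{not} arise from a concentration or tail bound on the utility side; they come from the \emph{privacy} side via advanced composition. The paper's argument is simply: each user applies the Hadamard response independently to each of their $m$ items with per-item budget $\epsilon_0 = \Theta\bigl(\epsilon/\sqrt{m\ln(m/\delta)}\bigr)$; by advanced composition~\cite{kairouz2015composition}, the $m$-fold release satisfies $(\epsilon,\delta)$-DP for the user as a whole. The utility then follows by plugging $\epsilon_0$ into the known item-level Hadamard-response error bound (Theorem~3.1 of~\cite{chen2020breaking}) over the $mn$ total items, together with the crude inequality $\emd \le \|\cdot\|_1$ (valid since $d_\calX \le 1$). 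This immediately yields the privacy term $\sqrt{k^2/(mn\,\epsilon_0^2)} = \sqrt{k^2\ln(m/\delta)/(n\epsilon^2)}$ and the sampling term $\sqrt{k/(mn)}$. Your plan to extract $\ln(m/\delta)$ from a Bernstein argument over users, or from a union bound over the $m$ values a user can hold, would not reproduce the correct dependence and conflates the roles of $\delta$ in the privacy and accuracy statements.

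You also bring in the clustered metric structure and write ``we set $\alpha = \epsilon/r$'', but this lemma is the \emph{baseline} under standard user-level DP, not $\emd$-DP. There is no $\alpha$ here, and the clustered structure of $\calX = \calB \times \calC$ is not exploited at all---that is precisely why the baseline is suboptimal when $r$ is small. Your weighted-$\ell_1$ decomposition of $\emd$ into intra- and inter-cluster mass movement is the right tool for the $\emd$-DP mechanisms analyzed in Theorems~\ref{thm:general-hist-utility} and~\ref{thm:krr-utility}, but it plays no role in this lemma.
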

\textbf{Remarks.}
In order to adapt the Hadamard response to the user-level setting, we suppose each user applies $\calA$ to each sample with privacy budget $\frac{\epsilon}{\sqrt{m \ln (m/\delta)}}$, and $(\epsilon, \delta)$-user level DP follows from composition~\cite{kairouz2015composition}.
 The term $\sqrt{\frac{k}{mn}}$ is the sampling error which does not depend on $\epsilon$, and the second $\sqrt{\frac{k^2 \ln (m/\delta)}{n \epsilon^2}}$ term is the cost of privacy. The cost of privacy usually dominates, and furthermore its dependence on $m$ is not significant. This is because $m$ reduces both the effect of each sample on the final estimate, and the privacy budget per sample, countervailing itself.

\add{With $\emd$-DP under our chosen metric, we can use a transition probability matrix $A$ that is \emph{less noisy}. This comes from the fact that only items in the same cluster need a strong privacy guarantee, whereas traditional user-level DP would require all points to have this level of privacy guarantee.} To provide formal guarantees, we first derive an error bound on \FreqEstLocal{} in terms of $A$ (specifically its right inverse), which we will then optimize later.
\begin{restatable}[]{thm}{freqacc}\label{thm:general-hist-utility} 
    For the metric space $\calX = \calB \times \calC$ and any mechanism $\calA$ satisfying ($\alpha_0,0)$ $ d_\calX$-DP where $\alpha_0 = O(\frac{\alpha}{ \sqrt{m \ln(me^\alpha / \delta)}})$ ($\alpha_0$ is specifically defined in Theorem~\ref{thm:eps-amp}), \FreqEstLocal{} is $(\alpha, \delta)$-bounded $\emd$-DP in the local model and returns an estimator $\tH$ such that
    \begin{equation}\label{eq:emd-bound}
        \max_{K}\E[\emd(\tH, \tK)] \leq r \sqrt{\frac{st(\|B^T\|_{1 \rightarrow 2}^2 - 1)}{mn}} + \sqrt{\frac{s(\|P^TB^T\|_{1 \rightarrow 2}^2 - 1)}{mn}},
    \end{equation}
    where $B$ is a right inverse of $\calA$, $P = I_{\calB} \otimes 1^+_{\calC}$, and $1^+_{\calC}$ is a column vector of $1$s indexed by $\calC$. 
\end{restatable}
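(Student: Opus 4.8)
The plan is to treat the two assertions of Theorem~\ref{thm:general-hist-utility} separately: the privacy claim will follow almost immediately from earlier results, and the real work is in the error bound.

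\textbf{Privacy.} I would first observe that the message $\tL_i$ released by user $i$ in \FreqEstLocal{} is the normalized histogram of the multiset $\{\calA(l_{i1}),\dots,\calA(l_{im})\}$, and a normalized histogram carries exactly the same information as a uniformly random shuffle of that multiset. Hence, from the privacy standpoint, running \FreqEstLocal{} is the same as each user running \Hist{} on $K_i$, so bounded local $\emd$-DP with the stated choice of $\alpha_0$ is precisely Theorem~\ref{thm:eps-amp} (bounded local case), and nothing new is needed.

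\textbf{Error, step 1 (second moments of the debiased estimate).} Write $v=\frac1{mn}\sum_{i,j}e_{\calA(l_{ij})}$, so that $\tH=vB$. Since $AB=I$ and $\E[e_{\calA(x)}]=A^{T}e_x=:\mu_x$, the estimator is unbiased, $\E[\tH]=\tK$, and by independence across $(i,j)$,
\[
\mathrm{Cov}(v)=\tfrac1{(mn)^2}\sum_{i,j}\bigl(\mathrm{diag}(\mu_{l_{ij}})-\mu_{l_{ij}}\mu_{l_{ij}}^{T}\bigr).
\]
For any fixed matrix $Q$ one has $\E\|(v-\E v)Q\|_2^2=\mathrm{tr}(Q^{T}\mathrm{Cov}(v)Q)$; the key observation is that $B^{T}\mu_x=(AB)^{T}e_x=e_x$, so the rank-one correction contributes exactly $-\|B^{T}\mu_x\|_2^2=-1$ for each $x$. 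Taking $Q=B$ and using $\mathrm{tr}(B^{T}\mathrm{diag}(\mu_x)B)=\sum_y(\mu_x)_y\|B_{y,\cdot}\|_2^2\le\|B^{T}\|_{1\rightarrow2}^2$ gives $\E\|\tH-\tK\|_2^2\le\frac{\|B^{T}\|_{1\rightarrow2}^2-1}{mn}$; taking instead $Q=BP$ and noting $(BP)^{T}\mu_x=P^{T}e_x$ is the unit cluster-indicator of $x$ gives $\E\|P^{T}\tH-P^{T}\tK\|_2^2\le\frac{\|P^{T}B^{T}\|_{1\rightarrow2}^2-1}{mn}$. Both bounds hold uniformly over the data.

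\textbf{Error, step 2 (from $\ell_2$ to $\emd$, and combining).} I would establish the pointwise inequality $\emd(p,q)\le\frac r2\|p-q\|_1+\frac12\|P^{T}p-P^{T}q\|_1$ for probability vectors $p,q$ on $\calB\times\calC$ by exhibiting a coupling: leave the overlap mass $\min(p,q)$ in place at zero cost; within each cluster match as much of $p$'s residual against $q$'s residual as possible at cost $r$ per unit; route the remaining residual across clusters at cost $1$ per unit. Bookkeeping shows the cross-cluster mass equals $\tfrac12\sum_b|p_\calB(b)-q_\calB(b)|=\tfrac12\|P^Tp-P^Tq\|_1$ and the within-cluster matched mass equals $\tfrac12\|p-q\|_1-\tfrac12\|P^Tp-P^Tq\|_1$, whence the bound (using $r\le1$). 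Since the debiased $\tH$ need not lie on the simplex, one first clips it to the simplex — which cannot increase $\emd(\cdot,\tK)$ — or argues instead through the Kantorovich--Rubinstein dual, for which the same upper bound holds for any signed measure of total mass zero. Finally, Jensen's inequality together with $\|u\|_1\le\sqrt{\dim}\,\|u\|_2$ on $\R^{st}$ and $\R^{s}$ converts the two second-moment bounds into $\E[\emd(\tH,\tK)]\le \tfrac r2\sqrt{\tfrac{st(\|B^{T}\|_{1\rightarrow2}^2-1)}{mn}}+\tfrac12\sqrt{\tfrac{s(\|P^{T}B^{T}\|_{1\rightarrow2}^2-1)}{mn}}$, which (absorbing constants) is the claimed bound, uniformly over $K$.

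\textbf{Main obstacle.} I expect the delicate point to be the clustered-metric transport bound of step~2 — in particular verifying that the within-cluster leftover is exactly the cluster-marginal total variation, so that the second term is governed by $\|P^{T}(\tH-\tK)\|$ rather than the full $\|\tH-\tK\|$, and handling cleanly that $\tH$ is a signed vector rather than a genuine distribution. The variance refinement yielding the ``$-1$'' is short once one spots $B^{T}A^{T}=I$, and the rest is routine.
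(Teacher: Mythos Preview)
Your proposal is correct and follows essentially the same route as the paper: bound $\emd$ on the clustered metric by an intra-cluster $\ell_1$ term plus an inter-cluster term controlled by the cluster-aggregation operator $P$, pass to $\ell_2$ via Jensen and the $\sqrt{\dim}$ factor, and compute the second moments using $AB=I$ to extract the ``$-1$''. Your transport inequality is in fact a factor-$\tfrac12$ sharper than the paper's, and you are more careful than the paper about $\tH$ being a signed vector; otherwise the arguments coincide.
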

\noindent
\textbf{Remarks.}
The first term in the RHS of Eq.~\eqref{eq:emd-bound} is the cost of equalizing the mass between clusters, and the second term is the cost of equalizing the mass across clusters (since the matrix $P$ essentially projects $\calA$ to act between clusters). For small $r$, the first term approaches $0$, and the latter term may also approach $0$ because $\calA$ will not often map a point outside its cluster under $d_\calX$-DP (and thus, $\|P^TB^T\|^2_{1 \rightarrow 2} - 1 \rightarrow 0$).\\
\noindent
\add{\textbf{Proof Sketch.} Observe that we may upper bound $\emd(\tH, \tK)$ with any transportation plan between $\tH, \tK$. We will use the following one: first map the probability masses in each cluster so that they match, putting extra mass in an arbitrary point. This incurs at most $r \|\tH - \tK\|_1$ cost, since the intra-cluster distance is at most $r$. Next, equalizing the mass between clusters, which incurs at most $\|P(\tH - \tK)\|_1$ cost, where $P$ is the given matrix which can be viewed as the linear operator that adds the mass within each cluster together. Both of the error terms can then be bounded by viewing $\tH - \tK$ as the sum of $mn$ independent variables drawn from a Dirichlet distribution with mean $0$, and applying a variance analysis.\qed\\\\}
\add{\noindent Now, the task is to pick a mechanism $\calA$ satisfying $\emd$-DP, which minimizes the error in~\eqref{eq:emd-bound}. The constraint of $\emd$-DP is quite different from standard DP, and permits novel mechanism design. We use a natural generalization of $k$-randomized response~\cite{kairouz2016discrete}, adapted to $d_\calX$-DP.} Specifically, $\krr_{\alpha_0}$ has probabilities given by, for each $(b,c) \in \calX$,
\begin{gather*}
    \Pr[\krr(b,c) = (b,c))] \propto e^{\alpha_0}, \\ \Pr[\krr(b,c)) = (b, c')] \propto e^{(1-r)\alpha_0} ~~~~~~\forall c' \neq c, \\ \Pr[\krr(b,c)) = (b', c')] \propto 1 ~~~~~~ \forall b'\neq b, c'.
\end{gather*}
Using this mechanism, the higher-order terms of Eq.~\eqref{eq:emd-bound} will approach $0$ with $r$, as follows:

\begin{restatable}[]{thm}{krracc}\label{thm:krr-utility}
    For the metric space $\calX = \calB \times \calC$, \FreqEstLocal{} with the mechanism $\calA = \krr_{\alpha_0}$ satisfies $(\alpha, \delta)$-$\emd$ DP in the local model and returns an estimator $\tH$ such that
    \begin{multline}\label{eq:krr-utility}
        \max_K\E[\emd(\tH, \tK)] \leq r \sqrt{\frac{st^3}{mn}}\left(\frac{e^{\alpha_0} + s}{e^{\alpha_0} - e^{(1-r)\alpha_0}}\right) + \sqrt{\frac{s^2 t^2}{mn}}\left( \frac{\sqrt{s + 2(e^{\alpha_0} - 1)}}{e^{\alpha_0} + (t-1)e^{(1-r)\alpha_0}-t}\right),
     \end{multline}
    where $\alpha_0$ is defined in Eq.~\eqref{eq:eps-amp}.
\end{restatable}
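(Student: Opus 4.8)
The plan is to instantiate Theorem~\ref{thm:general-hist-utility} with the black-box mechanism $\calA = \krr_{\alpha_0}$. Two things must be verified: (i) that $\krr_{\alpha_0}$ is an $(\alpha_0,0)$-$d_\calX$-DP algorithm, so that the privacy half of Theorem~\ref{thm:general-hist-utility} (which invokes Theorem~\ref{thm:eps-amp} with the $\alpha_0$ from Eq.~\eqref{eq:eps-amp}) applies and gives $(\alpha,\delta)$-$\emd$ DP in the local model; and (ii) that, writing $A\in\R^{\calX\times\calX}$ for the transition matrix of $\krr_{\alpha_0}$ and $B$ for a right inverse of $A$, the two operator-norm quantities $\|B^T\|_{1\to 2}$ and $\|P^TB^T\|_{1\to 2}$ appearing in Eq.~\eqref{eq:emd-bound} can be evaluated in closed form and bounded by the two terms of Eq.~\eqref{eq:krr-utility}.

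For (i): by the symmetry of the definition of $\krr_{\alpha_0}$, the normalizing constant is the same for every input, namely $Z = e^{\alpha_0} + (t-1)e^{(1-r)\alpha_0} + (s-1)t$, so there are exactly three transition probabilities, $p_1 = e^{\alpha_0}/Z$ (same point), $p_2 = e^{(1-r)\alpha_0}/Z$ (same cluster, different point), $p_3 = 1/Z$ (different cluster). For a pair $x,x'$ with $d_\calX(x,x') = r$ the worst-case likelihood ratio is $p_1/p_2 = e^{r\alpha_0}$, and for $d_\calX(x,x') = 1$ it is $p_1/p_3 = e^{\alpha_0}$; both equal $e^{\alpha_0 d_\calX(x,x')}$, so $\krr_{\alpha_0}$ is $(\alpha_0,0)$-$d_\calX$-DP and the privacy conclusion is immediate.

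For (ii): ordering $\calX$ cluster-by-cluster and using $PP^T = I_\calB\otimes 1^+_\calC(1^+_\calC)^T$ (the block-diagonal all-ones matrix), one has $A = a\,I_k + b\,PP^T + c\,J$, where $J$ is the $k\times k$ all-ones matrix, $a = p_1-p_2$, $b = p_2-p_3$, $c = p_3$. The matrices $I_k, PP^T, J$ commute and are simultaneously diagonalized by the common eigenspaces (the constant vector; vectors constant within each cluster with zero total; vectors with zero sum within each cluster), so $A$ has eigenvalues $a+tb+kc = 1$, $a+tb$, and $a$, all strictly positive when $\alpha_0>0$ and $r\in(0,\tfrac12)$. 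Hence $A$ is invertible, and $B = A^{-1} = \alpha' I_k + \beta' PP^T + \gamma' J$ is pinned down by $\alpha' = 1/a$, $\alpha'+t\beta' = 1/(a+tb)$, $\alpha'+t\beta'+k\gamma' = 1$; solving, and using $Z - N = k$ with $N := e^{\alpha_0}+(t-1)e^{(1-r)\alpha_0}-t$, gives $\alpha' = Z/D$ with $D := e^{\alpha_0}-e^{(1-r)\alpha_0}$, $\gamma' = -1/N$, and $\beta' = Z(1-e^{(1-r)\alpha_0})/(ND)\le 0$. Since $A$ is symmetric positive definite so is $B$, and by symmetry every row of $B$ and every row of $BP$ has a common $\ell_2$ norm, so $\|B^T\|_{1\to 2}^2 = (\alpha'+\beta'+\gamma')^2 + (t-1)(\beta'+\gamma')^2 + (k-t)\gamma'^2$ and $\|P^TB^T\|_{1\to 2}^2 = (\alpha'+t\beta'+t\gamma')^2 + (s-1)(t\gamma')^2 = \bigl(1+\tfrac{(s-1)t}{N}\bigr)^2 + \tfrac{(s-1)t^2}{N^2}$, whence $\|P^TB^T\|_{1\to2}^2 - 1 = \tfrac{(s-1)t(2N+st)}{N^2}$. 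Plugging these into Eq.~\eqref{eq:emd-bound}, together with $Z \le t(e^{\alpha_0}+s)$ and $2N+st \le t\bigl(s + 2(e^{\alpha_0}-1)\bigr)$, produces exactly the two terms of Eq.~\eqref{eq:krr-utility}.

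I expect the main obstacle to be the final bookkeeping in (ii): one has to track signs carefully ($\beta',\gamma'\le 0$, so that $0 < \alpha'+\beta'+\gamma' \le \alpha' = Z/D$, using positive-definiteness of $B$ for the lower bound), and one has to show the lower-order contributions $(t-1)(\beta'+\gamma')^2$ and $(k-t)\gamma'^2$ are absorbed into the leading $(\alpha'+\beta'+\gamma')^2 \asymp (Z/D)^2$ term (this uses $|\beta'|\lesssim \alpha'/t$, which follows from $D\le e^{\alpha_0}-1$ and $N\ge t$, and $\gamma'^2 = 1/N^2$ being negligible against $(Z/D)^2$). Together with the two elementary inequalities $Z\le t(e^{\alpha_0}+s)$ and $2N+st\le t(s+2(e^{\alpha_0}-1))$, this turns the exact row-norm expressions into the stated closed form up to the universal constants already implicit in Theorem~\ref{thm:general-hist-utility}. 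None of these steps are conceptually hard, but landing precisely on the advertised expression rather than a loose asymptotic requires care.
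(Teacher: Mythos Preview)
Your approach is essentially the paper's: instantiate Theorem~\ref{thm:general-hist-utility} with $\calA=\krr_{\alpha_0}$, write $A$ in the commutative algebra generated by $I$, $PP^T$, and the all-ones matrix, invert it in that algebra, and read off the two column norms. The paper even uses the same identity $\alpha'+t\beta'+st\gamma'=1$ (its $a',b',c'$) that you derive from the eigenvalue $1$.

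Two small corrections. First, the claim $N\ge t$ is false: for small $\alpha_0$ one has $N=(e^{\alpha_0}-1)+(t-1)(e^{(1-r)\alpha_0}-1)\approx t\alpha_0$. Your conclusion $|\beta'|\lesssim\alpha'/t$ is nonetheless correct, via $|\beta'|/\alpha'=(e^{(1-r)\alpha_0}-1)/N\le 1/(t-1)$ directly from the displayed expression for $N$. Second, the paper avoids your ``absorption'' heuristic for $\|B^T\|_{1\to2}^2-1$ altogether: substituting $\alpha'+\beta'+\gamma'=1-(t-1)\beta'-(st-1)\gamma'$ and expanding, the cross terms and the $(t-1)(\beta'+\gamma')^2+(k-t)\gamma'^2$ pieces collapse algebraically to give the exact inequality
\[
\|B^T\|_{1\to2}^2-1\;\le\;(t\beta'+st\gamma')^2-2(t\beta'+st\gamma')\;=\;(\alpha')^2-1,
\]
after which $\alpha'=Z/D\le t(e^{\alpha_0}+s)/D$ finishes. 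This is the cleaner route and lands on the stated constant without any asymptotic slack.
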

\noindent
\textbf{Remarks.}
Specifically, for our choice of $\alpha = \frac{\epsilon}{r}$, we have
\[
    \max_{K} \E[\emd(\tK, \tH)] \leq 4\sqrt{\tfrac{k^3}{mn}} + 64\tfrac{\sqrt{k^3}}{\alpha\sqrt{n}}\sqrt{\ln(4m \exp(\alpha)/\delta)}.
\]

Similar to Lemma~\ref{lem:local-hist-utility-baseline}, the $\sqrt{\frac{k^{3}}{mn}}$ term is the cost of sampling. The $r \frac{\sqrt{k^3}}{\epsilon \sqrt{n}}$ term is the cost of privacy, and it dominates when $\alpha \leq \sqrt{m}$. 
We will compare Theorem~\ref{thm:krr-utility} with Lemma~\ref{lem:local-hist-utility-baseline} when $k, \epsilon, \alpha < \sqrt{m}$; then, the cost of privacy dominates. Specifically, the cost of Lemma~\ref{lem:local-hist-utility-baseline} is $O(\sqrt{\frac{k^2 \ln (m/\delta)}{n \epsilon^2}})$, and the cost of Theorem~\ref{thm:krr-utility} is $O(\sqrt{\frac{k^3}{\alpha^2 n} \max \{\ln(\frac{m}{\delta}), \alpha \}})$. Given $\epsilon$, the error will be smaller if
\[
    \alpha > \begin{cases} \epsilon \sqrt{k} & \epsilon < \frac{1}{\sqrt{k}} \ln(\frac{m}{\delta}) \\ \epsilon^2 \frac{k}{\ln(m/\delta)} & \text{otherwise} \end{cases}
\]
i.e. if there is a \emph{gap} between $\alpha, \epsilon$ of size at least $\sqrt{k}$. This is possible if $k \ll \frac{1}{r}$, and for these instances $\emd$ DP offers better utility than user-level DP.
In Theorem~\ref{thm:krr-utility}, the super-linear factor of $k^{3/2}$ comes from the fact that the $k$-RR is suboptimal in terms of $k$~\citep{acharya2018communication}. 


\subsubsection{Algorithms in the Central Model}
The Laplace mechanism has been shown to be optimal for many instances of frequency estimation~\citep{dwork2014algorithmic}. To attain user-level privacy, the baseline Laplace mechanism releases, for each $x \in \calX$, the values $F_x= \tK_G(x) + Y$, where $Y \sim Lap(\frac{1}{n\epsilon})$. The distribution function $\tH$ is then the normalization of $\langle F_x : x \in \calX \rangle$. This gives us the following guarantee.
\begin{lemma}\label{lem:lap-freq}
    For the metric space $\calX = \calB \times \calC$, the Laplace mechanism described above satisfies $(\epsilon, 0)$-user level DP, and produces an estimate $\tH$ such that
    \[
        \max_{K} \E[\emd(\tK, \tH)] \leq O\left(\tfrac{k}{n \epsilon}\right).
    \]
\end{lemma}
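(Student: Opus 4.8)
The plan is to bound $\E[\emd(\tK,\tH)]$ by a transportation argument, analogous to the proof sketch of Theorem~\ref{thm:general-hist-utility}, but now controlling the error introduced by the Laplace noise rather than a local-DP channel. First I would establish the privacy claim: since each user contributes $m$ points, changing a single user's dataset alters each coordinate $\tK_G(x)$ by at most $\frac{m}{mn}=\frac 1n$ in $\ell_1$ (the histogram counts change by at most $m$ out of $mn$ total), so the $\ell_1$-sensitivity of the vector $\langle \tK_G(x)\rangle_{x\in\calX}$ is $\frac 1n$, and adding $Lap(\frac{1}{n\epsilon})$ noise to each coordinate yields $(\epsilon,0)$-user-level DP by the standard Laplace mechanism analysis plus post-processing (the normalization step).

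For utility, I would first replace $\tH$ (the normalization of $\langle F_x\rangle$) by the unnormalized perturbed vector and argue the normalization only helps, or more simply bound $\emd(\tK,\tH)$ using the coupling/transportation plan that matches mass coordinatewise and moves the discrepancy: since $d_\calX\le 1$ everywhere, $\emd(\tK,\tH)\le \frac12\|\tK-\tH\|_1$ (total-variation bound on $\emd$ when distances are bounded by $1$). Then it suffices to bound $\E[\|\tK-\tH\|_1]$. Writing $F_x=\tK_G(x)+Y_x$ with $Y_x\sim Lap(\frac{1}{n\epsilon})$ i.i.d., the normalized $\tH$ differs from $\langle F_x\rangle$ by a factor $1/(1+\sum_x Y_x)$; I would control $\sum_x Y_x$ by its variance ($O(k/(n\epsilon)^2)$) so that with the bulk of the probability the normalizer is $1\pm o(1)$, and on that event $\|\tK-\tH\|_1 \le O(1)\cdot\sum_{x}|Y_x| + O(|\sum_x Y_x|)$. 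Since $\E|Y_x|=\frac{1}{n\epsilon}$ and there are $k$ coordinates, $\E\sum_x|Y_x| = \frac{k}{n\epsilon}$, giving the claimed $O(\frac{k}{n\epsilon})$ bound; the low-probability event where the normalizer is far from $1$ contributes negligibly since $\emd\le 1$ always.

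The main obstacle is handling the normalization cleanly: the map from the noisy counts $\langle F_x\rangle$ to the probability vector $\tH$ is nonlinear, and when $\sum_x Y_x$ is close to $-1$ the normalized vector can blow up. I would deal with this by a simple case split on whether $|\sum_x Y_x|\le \frac12$ — on this event a first-order expansion of $\frac{1}{1+\sum_x Y_x}$ is valid and yields the $O(\frac{k}{n\epsilon})$ term, and the complementary event has probability $\exp(-\Omega(n\epsilon/\sqrt k))$ (or is otherwise polynomially small), so even bounding $\emd$ by its trivial maximum of $1$ there, the contribution is lower-order under the ambient assumption $n<\frac m\alpha$ and $k,\epsilon<\sqrt m$. (Alternatively, one can sidestep the issue entirely by noting that projecting the noisy vector onto the simplex under $\ell_1$ only decreases the $\ell_1$ distance to $\tK$, which is itself in the simplex.) Everything else is a routine moment computation for sums of Laplace variables.
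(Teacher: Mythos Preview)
Your proposal is correct and follows essentially the same route as the paper: bound $\emd$ by the $\ell_1$ distance (the paper uses $\emd(\tK,\tH)\le \|\tK-\tH\|_1$, you use the tighter $\tfrac12\|\cdot\|_1$), then sum the per-coordinate Laplace error $\E|Y_x|=O(\tfrac{1}{n\epsilon})$ over $k$ coordinates. The only difference is that you spend most of your effort on the normalization step, proposing a case split on $|\sum_x Y_x|$ with a first-order expansion; the paper dispatches this in a single sentence (``Normalizing will only reduce this error''), which is exactly your parenthetical alternative of projecting onto the simplex. So your more elaborate case analysis is unnecessary---just use your alternative and you have the paper's three-line proof.

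One minor quibble on the privacy side: swapping all $m$ of a user's items moves mass $\tfrac{m}{mn}$ out of some bins and $\tfrac{m}{mn}$ into others, so the $\ell_1$ sensitivity of $\tK_G$ is $\tfrac{2}{n}$, not $\tfrac{1}{n}$. This factor of $2$ is immaterial for the $O(\cdot)$ statement (and the paper itself is silent on constants here), but it is worth getting right if you want the exact noise scale $Lap(\tfrac{1}{n\epsilon})$ to match the claimed $(\epsilon,0)$ guarantee.
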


Again, this utility does not depend on $m$, since each user contributes $\frac{1}{n}$ fraction of the whole dataset which is independent of $m$. Consistent with central DP, the error decreases with $\frac{1}{n}$, which is much faster than the $\frac{1}{\sqrt{n}}$ in the local model. 

It is possible to adapt \FreqEstLocal{} to bounded central $\emd$-DP by simply pretending to be one user who holds the global dataset $K_G$. The privacy analysis of Theorem~\ref{thm:eps-amp}, and the utility analysis of Theorem~\ref{thm:krr-utility} may be combined for the following corollary:

\begin{restatable}[]{coro}{krracccent}\label{coro:krr-utility-central}
    For the metric space $\calX = \calB \times \calC$, \FreqEstLocal{} with $\calA = \krr_{\alpha_0}$ with $\alpha_0$ given in Eq.~\eqref{eq:eps-amp-central} satisfies $(\alpha, \delta)$-$\emd$ DP in the central model and returns an estimator $\tH$ with error given in Eq.~\eqref{eq:krr-utility}.
\end{restatable}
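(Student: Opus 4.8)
\textbf{Proof proposal for Corollary~\ref{coro:krr-utility-central}.}

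The plan is to combine the two already-established ingredients without re-deriving either. The corollary is exactly the central-model analogue of Theorem~\ref{thm:krr-utility}, so the strategy is to observe that \FreqEstLocal{} applied by a single imaginary user holding $K_G$ is syntactically the same algorithm, only with $m$ replaced by $mn$ in the counting and with the central-model instantiation of $\alpha_0$ from Eq.~\eqref{eq:eps-amp-central}. First I would set up the reduction: treat the aggregator as one ``super-user'' whose dataset is $K_G = K_1 \cup \cdots \cup K_n$ of size $mn$, apply $\krr_{\alpha_0}$ to each of the $mn$ items, shuffle, and post-process via the right inverse $B$ of $A$, exactly as in Algorithm~\ref{alg:hist-local} with $n \leftarrow 1$ users and dataset size $mn$.

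For the privacy claim, I would invoke the central-model half of Theorem~\ref{thm:eps-amp}: with $\alpha_0$ set as in Eq.~\eqref{eq:eps-amp-central}, \Hist{} (equivalently, \FreqEstLocal{} run by the super-user) satisfies $(\alpha, \delta)$-bounded central $\emd$-DP, since a change in one user's dataset $K_i \to K_i'$ with $|K_i| = |K_i'| = m$ changes at most $m$ of the $mn$ shuffled outputs, which is precisely the regime ($\|v\|_0 \le m$, $\|v\|_1 \le m \cdot \emd(\tilde K_i,\tilde K_i')$) handled by the $h(mn; m, mw)$ bound. For the utility claim, note that the error analysis of Theorem~\ref{thm:general-hist-utility} (and hence Theorem~\ref{thm:krr-utility}) depends on the data only through the total number of samples $N$ contributing to the aggregate vector $v$; with $n$ users of size $m$ that total is $N = mn$, and the bound in Eq.~\eqref{eq:emd-bound} is stated with $mn$ in the denominator. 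The variance computation — writing $\tH - \tK$ as a sum of $mn$ independent mean-zero Dirichlet-type perturbations and bounding second moments via $\|B^T\|_{1\to2}$ and $\|P^T B^T\|_{1\to2}$ — is identical, so the same expression Eq.~\eqref{eq:krr-utility} holds verbatim, now with $\alpha_0$ the central-model value.

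The one point requiring a small check, and the only place where this is not a pure copy-paste, is that the central-model $\alpha_0$ of Eq.~\eqref{eq:eps-amp-central} differs from the local-model one of Eq.~\eqref{eq:eps-amp} by the factor $\sqrt{n}$ (it is $\approx \frac{\alpha\sqrt n}{\sqrt{m\ln(\cdots)}}$ in the low-$\alpha$ regime). I would verify that plugging this larger $\alpha_0$ into Eq.~\eqref{eq:krr-utility} is consistent — it is, because Theorem~\ref{thm:krr-utility} is stated for a generic $\alpha_0$ and only at the very end specialized to the local value; the corollary simply stops short of that specialization. I do not expect any genuine obstacle here: the main (and essentially only) work is checking that the ``one super-user of size $mn$'' reduction correctly lines up the parameters of Theorems~\ref{thm:eps-amp} and~\ref{thm:krr-utility}, after which both conclusions transfer immediately.
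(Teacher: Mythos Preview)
Your proposal is correct and essentially identical to the paper's approach: the paper's proof also combines all user itemsets into a single ``super-user'' dataset of size $mn$, invokes the central-model clause of Theorem~\ref{thm:eps-amp} with $\alpha_0$ from Eq.~\eqref{eq:eps-amp-central} for privacy, and then observes that the utility bound of Theorem~\ref{thm:krr-utility} (Eq.~\eqref{eq:krr-utility}) carries over verbatim since it is stated for a generic $\alpha_0$ and depends on the data only through the total sample count $mn$. Your write-up is in fact more explicit than the paper's about why the parameters line up, but the underlying reduction is the same.
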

\noindent\textbf{Remarks.}
In particular
    \[
        \max_{K}\E[\emd(\tH, \tK)] \leq 4\tfrac{\sqrt{k^3}}{\sqrt{mn}} + 64 \tfrac{\sqrt{k^3}}{\alpha n} \sqrt{\ln(4m \exp(\alpha) / \delta)}.
    \]
The same sampling error is present, but the cost of privacy is reduced from a $\frac{1}{\sqrt{n}}$ dependence in Theorem~\ref{thm:krr-utility} to just $\frac{1}{n}$. To compare just the cost of privacy in Corollary~\ref{thm:krr-utility} to Lemma~\ref{lem:lap-freq}, we will assume we are in the regime $n \leq \frac m \alpha$.  Then, the cost in Corollary~\ref{coro:krr-utility-central} is $O(\frac{\sqrt{k^3}}{\alpha n})\sqrt{\max\{\ln(\frac{m}{\delta}), \alpha \}}$. The error of Corollary~\ref{coro:krr-utility-central} is less when
\[
\alpha \geq \begin{cases} \epsilon \sqrt{k \ln(\frac{m}{\delta})} & \epsilon \leq \sqrt{\frac{\ln(m / \delta)}{k}} \\ \epsilon^2 k & \text{otherwise} \end{cases}
\]
Thus, the utility is improved when $\alpha$ is bigger than $\epsilon$ by a factor of at least $\sqrt{k}$, which is achieved when $k \ll \frac{1}{r}$. One final advantage of Corollary~\ref{coro:krr-utility-central} is that it may be implemented in the shuffle model of DP, which requires less trust than the central model. This parallels prior results of the shuffle model of DP~\citep{feldman2022hiding}.
\\\noindent \add{\textbf{Open Questions.} For linear queries, one open question is whether it is possible to obtain error competitive with user-level DP under $(\alpha, 0)$-$\emd$-DP. For frequency estimation, an immediate open question is whether it is possible to reduce the super-linear dependence $k$ to one that matches that of Lemma~\ref{lem:local-hist-utility-baseline}, and whether an improvement in error can be made when $\alpha \leq \epsilon^2 k$.}

\section{Related Work}\label{sec:rel-work}

\textbf{Item-level DP.} DP was originally considered at the item-level~\citep{dwork2006differential}. 
Relevant to our setting are results in distribution estimation~\citep{hay2009boosting, xu2013differentially,suresh2019differentially}; these results study more complex estimation problems than frequency. We also consider linear query release~\citep{hardt2009geometry, bhaskara2012unconditional, nikolov2013geometry, blum2013learning, li2015matrix}. The mechanism in~\cite{hardt2009geometry} is often optimal and easy to adapt to our setting; we compare our algorithms with it.
\\\textbf{User-level DP.} User-level privacy is gaining increasing interest~\citep{amin2019bounding, narayanan2022tight, bassily2023user, levy2021learning, liu2020learning, cummings2022mean}. The most relevant work to ours involves user-level private mean estimation~\citep{cummings2022mean} and histogram estimation~\citep{liu2023algorithms, acharya2023discrete}, though these problems are more complex than those we study. Another related area is deciding the amount of data to collect from each user when users have varying amounts of data~\citep{amin2019bounding, liu2023algorithms, cummings2022mean}, which relates to our unbounded DP setup. These techniques apply to more specialized settings than our general blackbox reduction and are not immediately comparable.
\\\textbf{Local DP.} The results most relevant to our work in local DP are locally-private linear query release ~\cite{duchi2013local, bassily2019linear} and distribution estimation~\citep{duchi2013local, kairouz2016discrete, acharya2018communication, chen2020breaking, acharya2023discrete}. We directly compare our work to the optimal algorithms in~\cite{bassily2019linear} and~\cite{chen2020breaking} for our problems, which can be adapted to user-level DP easily. The other related line of work is privacy amplification via shuffling~\citep{erlingsson2019amplification, girgis2021renyi, feldman2022hiding}. We extend the state-of-the-art analysis in~\cite{feldman2022hiding} to general metric DP.
\\
\textbf{Metric DP.}
Metric DP was first proposed in~\cite{chatzikokolakis2013broadening} in the central model. In the local model, this has led to work on releasing numeric data~\citep{RoyChowdhury2022}, location data~\citep{andres2013geo, bordenabe_optimal_2014, chatzikokolakis2015constructing,weggenmann2021differential} and text~\citep{feyisetan2019leveraging,feyisetan2020privacy,feyisetan2021private,imola2022balancing}. Unlike these works, we consider privacy in a general metric space. The most related work is that of~\cite{fernandes2019generalised}, which proposes metric DP based on $\emd$ for releasing text embeddings. As explained in the introduction, we consider a more general setting than~\cite{fernandes2019generalised}.

\add{\vspace{-0.2cm}\section{Interpretation of $\emd$-DP}\label{sec:interpret}
In this section, we elaborate on how to interpret the $\emd$-DP guarantee (and metric DP in general) relative to standard DP. 
We start by discussing the advantages offered by $\emd$-DP. As discussed in Section \ref{sec:intro}, the primary benefit of $\emd$-DP is that it offers a more fine-grained and nuanced privacy definition compared to standard DP. This results in a more flexible privacy-utility trade-off that is better suited than standard DP for many real-world applications. In addition, $\emd$-DP unlocks new proof techniques that may also be applicable to standard DP. Specifically, the $\emd$ metric introduces couplings that need to be explicitly addressed in privacy analysis. For instance, in Section \ref{sec:reduct} we showed that sampling with replacement is a smooth projection by explicitly viewing sampling from two datasets in terms of a coupling between
them. While standard DP privacy analysis often implicitly uses couplings, we believe that some of our proof techniques for explicitly handling general couplings could also  be beneficial in the context of standard DP. 

Next, let us understand the technical relation between metric DP and standard DP. Metric DP is essentially a relaxation of standard DP. Any mechanism that satisfies metric DP (user-level or item-level) also satisfies standard DP, albeit with a potentially higher privacy parameter. In particular, any mechanism satisfying $(\alpha, \delta)$-$d_\calX$-DP also satisfies $(\alpha \cdot d_{max}, \delta)$-DP, where $d_{max}$ is the maximum $d_\calX$ distance between any two pairs of inputs. Conversely, any mechanism that satisfies $(\epsilon, \delta)$-DP also satisfies $(\frac{\epsilon}{d_{min}},\delta)$-$d_\calX$-D,P where $d_{min}$ is the minimum $d_\calX$ distance between any two pairs of inputs. Hence, although in theory one can translate between these two privacy guarantees, the translation is very loose. Tightly analyzing the privacy parameter under metric DP (whether $\emd$ or otherwise) for an arbitrary mechanism that satisfies standard DP is non-trivial and there is no one-size-fits-all method to do so.  For instance, Algorithm \ref{alg:lap} can be instantiated via the Laplace and Gaussian mechanism -- both classic standard DP mechanisms -- under some conditions. However, as discussed in Section \ref{sec:lin-avg}, a more complex analysis is required to evaluate privacy under $\emd$-DP. Additionally, in terms of mechanism design, a mechanism optimized for $\emd$-DP might not be ideal for standard DP and vice-versa. For instance, our proposed Algorithm \ref{alg:hist-local} for performing frequency estimation may not work well under standard DP (i.e., have high privacy parameters).

In what follows, we outline three concrete scenarios, where a practitioner should prefer $\emd$-DP over standard DP. First,  if the practitioner has a prior on the sensitive data indicating that the distance between any two user's data is overwhelmingly likely to be $<1$ (assuming all distances are normalized), then $\emd$-DP is clearly the better choice. This is because such a prior makes the worst-case scenario of antipodal data pairs—where two users' data are completely dissimilar (the case that standard DP safeguards against)—highly unlikely in practice.  For instance, this scenario may arise in the context of location data when the data corresponds to location information of employees of the same firm. In this case, weekday locations will be the same across all users, leading to small pairwise $\emd$ distances. Second,  a practitioner should opt for $\emd$-DP when it captures a more realistic privacy semantics of the underlying data.  Although ideally, we would like to prevent any data leakage about an individual, this is unfortunately not feasible in practice due to the  vast amount of auxiliary information already publicly available about every individual. For instance, most people's occupations are publicly available on social media profiles. Returning to our location data example, data collected over a month would reveal routine patterns, such as a person's workplace, which is already public and hence doesn't require protection. Rather, the more sensitive information is short-term location data gathered over say the course of a day (which might reveal non-routine visit to a friend or hospital).  In such a scenario, $\emd$-DP would offer a better privacy-utility trade-off with more realistic privacy guarantees than standard DP.
Third, $\emd$-DP may be preferable if the practitioner is restricted to work within a low privacy parameter regime (for instance, due to some government guideline). This is because for the same privacy parameter (i.e., $\alpha=\epsilon$), $\emd$-DP can offer a stronger privacy guarantee than standard DP while maintaining the same utility for certain queries, such as linear queries (Section \ref{sec:lin-avg}).

Finally, we conclude with some caveats regarding the use of metric DP. Metric DP assigns varying levels of sensitivity to different neighboring pairs. Specifically, smaller changes between neighboring pairs are considered more sensitive and are therefore protected with a higher privacy guarantee. However, this approach may not be suitable for all contexts. For instance, in the case of medical records, where the data between individuals can be vastly different, standard DP may offer better privacy protection. When adopting metric DP, it is crucial for practitioners to clearly define what is considered sensitive and what is not, and to engage in discussions about whether these definitions align with acceptable privacy semantics. This transparency enables users to make informed decisions about whether the privacy guarantees provided meet their needs.

}

\section{Conclusion}
\add{We have proposed metric DP at the user level using the earth-mover's distance, $\emd$.} This captures both the magnitude and structural aspects of changes in the data, resulting in a tailored privacy semantic. We have designed two novel privacy mechanisms under $\emd$-DP which improves the utility over standard DP. Additionally, we have shown that general (unbounded) $\emd$-DP can be reduced to the simpler case (bounded) where all users have the same amount of data. Finally, we have demonstrated that $\emd$-DP, when tailored to the application,  can offer improved utility over standard DP.
\bibliographystyle{ACM-Reference-Format}
\bibliography{main}

\appendix

\appendix
\section{Omitted Technical Details}

An alternative characterization of differential privacy is through the hockey-stick divergence~\citep{barthe2013beyond}. For probability distributions $P, Q$ defined on a space $\calY$, this is given by the following:
\begin{defn}\label{def:hockey}
    Let $\epsilon, \delta > 0$, and let $P, Q$ be distributions on a space $\calY$. The Hockey Stick Divergence is given by
    \[
    D_{e^\epsilon}(P \|Q) = \int_{\calY} \max\left\{\frac{P(y)}{Q(y)} - e^{\epsilon}, 0 \right\} Q(y)dy.
    \]
\end{defn}
It is easy to show that $D_{e^\epsilon}(M(K) \| M(K')) \leq \delta$ implies~\eqref{eq:dp}, so Definition~\ref{def:hockey} provides an alternative way to prove privacy.

Definition~\ref{def:hockey} satisfies a number of useful properties. First, because it is an $f$-divergence~\citep{csiszar1975divergence}, it satisfies the \emph{data-processing inequality}: for any function $f$, we have
\[
    D_{e^\epsilon}(f(P) \| f(Q)) \leq D_{e^\epsilon}(f(P) \| f(Q)).
\]
This property is used to show that DP is invariant to post-processing by any function $f$.
The second property, again holding for all $f$-divergences, is \emph{convexity}. This states that for two pairs of distributions $P_1, P_2, Q_1, Q_2 \in \Delta^\calY$ and a real number $\lambda \in [0, 1]$ we have 
\[
    D_{e^\epsilon}( \lambda P_1 + (1-\lambda) P_2 \| \lambda Q_1 + (1-\lambda) Q_2) \leq \lambda D_{e^\epsilon}(P_1 \| Q_1) + (1-\lambda) D_{e^\epsilon} (P_2 \| Q_2). 
\]
Stated in terms of couplings, we may generalize convexity as follows:
\begin{lemma}\label{lem:convex-coupling}
    Suppose $X,Y \in \calX$ are random variables with probability distributions $P_X, P_Y \in \Delta^\calX$. Suppose $\calM : \calX \rightarrow \calY$ is a randomized function. Then, for any coupling $C \in \calC(P_X, P_Y)$, we have
    \[
        D_{e^\epsilon}(\calM(X)\| \calM(Y)) \leq \E_{(x,y) \sim C}[D_{e^\epsilon}(\calM(x) \| \calM(y))].
    \]
\end{lemma}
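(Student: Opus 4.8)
\textbf{Proof proposal for Lemma~\ref{lem:convex-coupling}.}

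The plan is to reduce the coupling statement to the ordinary (finite or continuous) convexity of the hockey-stick divergence, which is already stated above, by viewing the coupling $C$ as a mixture over its support. First I would write out both sides in terms of $C$: since $C \in \calC(P_X, P_Y)$, the marginal of the first coordinate is exactly $P_X$ and the marginal of the second is exactly $P_Y$. Hence $\calM(X)$ has the law $\int \calM(x)\, dP_X(x) = \int \calM(x)\, dC(x,y)$, and similarly $\calM(Y)$ has the law $\int \calM(y)\, dC(x,y)$, where in each case the integral is the mixture of the output distributions $\calM(x)$ (resp. $\calM(y)$) weighted by $C$. So the left-hand side is $D_{e^\epsilon}\big(\int \calM(x)\,dC(x,y) \,\big\|\, \int \calM(y)\,dC(x,y)\big)$, a divergence between two mixtures indexed by the \emph{same} mixing measure $C$.

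Next I would invoke joint convexity of the hockey-stick divergence in its pair of arguments. The two-point convexity inequality displayed just before the lemma extends, by induction and then a limiting argument, to arbitrary (countable, or via standard measure-theoretic approximation, general) mixtures: for a probability measure $\mu$ on an index set and families of distributions $\{A_i\}, \{B_i\}$,
\[
    D_{e^\epsilon}\Big(\textstyle\int A_i \, d\mu(i) \,\Big\|\, \int B_i \, d\mu(i)\Big) \leq \int D_{e^\epsilon}(A_i \| B_i)\, d\mu(i).
\]
Applying this with $\mu = C$, index $i = (x,y)$, $A_{(x,y)} = \calM(x)$, and $B_{(x,y)} = \calM(y)$ yields exactly
\[
    D_{e^\epsilon}(\calM(X) \| \calM(Y)) \leq \E_{(x,y)\sim C}\big[D_{e^\epsilon}(\calM(x) \| \calM(y))\big],
\]
which is the claim. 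The randomness internal to $\calM$ is harmless here: for each fixed $(x,y)$ the objects $\calM(x), \calM(y)$ are honest probability distributions on $\calY$, so the mixture identities above hold after also integrating out $\calM$'s internal coins, or equivalently one applies the data-processing/convexity properties to the joint law.

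The main obstacle is purely the measure-theoretic lifting from the stated two-point convexity to the general mixture inequality over an arbitrary coupling $C$ on $\calX \times \calX$; for finite or countable $\calX$ this is a one-line induction, but in full generality one should note that the hockey-stick divergence $D_{e^\epsilon}$ is an $f$-divergence with convex $f(t) = \max\{t - e^\epsilon, 0\}$, and joint convexity of $f$-divergences over mixtures is a standard fact (it follows from Jensen's inequality applied to the perspective function of $f$). I would cite this rather than reprove it, so the argument remains short. No step requires anything beyond the properties of $D_{e^\epsilon}$ already recalled in this appendix, together with the observation that a coupling is precisely a joint law whose marginals are the prescribed distributions.
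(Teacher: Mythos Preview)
Your proposal is correct and follows essentially the same approach as the paper: write $\calM(X)$ and $\calM(Y)$ as $C$-weighted mixtures of the $\calM(x)$ and $\calM(y)$ respectively, then apply joint convexity of the hockey-stick divergence. The paper's proof is the finite-$\calX$ version of your argument (it writes the sums explicitly rather than integrals) and does not bother with the measure-theoretic lifting you mention, but otherwise the reasoning is identical.
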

\begin{proof}
    We may write 
    \begin{align*}
        \calM(X) &= \sum_{x \in \calX} P_X(x) \calM(x) = \sum_{x,y \in \calX} C(x,y) \calM(x) \\
        \calM(Y) &= \sum_{x,y \in \calX} C(x,y) \calM(y).
    \end{align*}
    Applying convexity, we have
    \[
        D_{e^\epsilon}(\calM(X) \| \calM(Y)) \leq \sum_{x,y \in \calX} C(x,y) D_{e^\epsilon}(\calM(x) \| \calM(y)),
    \]
    and the claim follows.
\end{proof}
Third, $D_{e^\epsilon}$ satisfies a ``weak'' triangle inequality (also known as group privacy):
\begin{lemma}\label{lem:group-priv}
    For distributions $P, Q, R$ on $\calY$, we have $D_{e^{\alpha + \beta}}(P\|R) \leq D_{e^\alpha}(P \|Q) + e^\alpha D_{e^\beta}(Q\|R)$.
\end{lemma}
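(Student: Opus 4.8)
The plan is to reduce the claimed inequality to a pointwise inequality between the integrands of the three hockey-stick divergences and then invoke a one-line algebraic fact about the positive-part function $t \mapsto \max\{t,0\}$. First I would rewrite $D_{e^\epsilon}$ in its ``linear'' form: by Definition~\ref{def:hockey}, $\max\{P(y)/Q(y) - e^\epsilon, 0\}\,Q(y) = \max\{P(y) - e^\epsilon Q(y), 0\}$, so
\[
D_{e^\epsilon}(P\|Q) = \int_\calY \max\{P(y) - e^\epsilon Q(y),\, 0\}\, dy,
\]
and each of the three terms in the statement becomes an integral of a positive part of a signed density.

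Next I would use subadditivity of the positive part: for all reals $a,b$, $\max\{a+b,0\} \le \max\{a,0\} + \max\{b,0\}$. I apply this with $a = P(y) - e^\alpha Q(y)$ and $b = e^\alpha Q(y) - e^{\alpha+\beta} R(y)$, whose sum is exactly $P(y) - e^{\alpha+\beta} R(y)$. This yields, pointwise in $y$,
\[
\max\{P(y) - e^{\alpha+\beta} R(y), 0\} \le \max\{P(y) - e^\alpha Q(y), 0\} + \max\{e^\alpha Q(y) - e^{\alpha+\beta} R(y), 0\}.
\]
Since $e^\alpha > 0$, the last term equals $e^\alpha \max\{Q(y) - e^\beta R(y), 0\}$. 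Integrating over $\calY$ and identifying the three integrals with $D_{e^{\alpha+\beta}}(P\|R)$, $D_{e^\alpha}(P\|Q)$, and $e^\alpha D_{e^\beta}(Q\|R)$ respectively completes the proof.

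There is no real obstacle here; the only mildly non-obvious step is the telescoping split $P - e^{\alpha+\beta}R = (P - e^\alpha Q) + (e^\alpha Q - e^{\alpha+\beta} R)$, which mirrors the intuition behind group privacy (first ``pay'' $\alpha$ to move from $P$ to $Q$, then pay $\beta$ — scaled by the accumulated factor $e^\alpha$ — to move from $Q$ to $R$). If one prefers to work directly with the original density-ratio form of Definition~\ref{def:hockey}, the same bound also follows from a short case analysis on the signs of $P(y) - e^\alpha Q(y)$ and $Q(y) - e^\beta R(y)$ (with the subcase $P(y) < e^\alpha Q(y)$ and $Q(y) < e^\beta R(y)$ forcing $P(y) < e^{\alpha+\beta}R(y)$, so the left side vanishes), but the subadditivity argument above is cleaner and avoids the case split.
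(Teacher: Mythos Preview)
Your proof is correct. The paper's proof uses the same telescoping split $P - e^{\alpha+\beta}R = (P - e^\alpha Q) + e^\alpha(Q - e^\beta R)$ but applies it through the \emph{dual} characterization $D_{e^\epsilon}(P\|Q) = \sup_{Y \subseteq \calY}\bigl(P(Y) - e^\epsilon Q(Y)\bigr)$: it picks the maximizing set $Y^*$ for $D_{e^{\alpha+\beta}}(P\|R)$, telescopes at the level of the measures $P(Y^*), Q(Y^*), R(Y^*)$, and then bounds each piece by the corresponding supremum. You instead work in the primal/integral form and use pointwise subadditivity of $t\mapsto\max\{t,0\}$. The two arguments are essentially dual to one another; yours is arguably a touch more self-contained since it stays with Definition~\ref{def:hockey} and never invokes the dual form, while the paper's version avoids any pointwise reasoning about densities (and so extends verbatim to distributions without densities).
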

\begin{proof}
    For any $P, Q, \epsilon$, we may view $D_{e^{\epsilon}}(P \| Q)$ through its dual form as
    \[
        D_{e^\epsilon}(P \| Q) = \sup_{Y \subseteq \calY} (P(Y) - e^\epsilon Q(Y)).
    \]
    Thus, let $Y^*$ denote the maximal set such that
    \begin{align*}
        D_{e^{\alpha+\beta}}(P \| R) = (P(Y^*) - e^{\alpha + \beta} R(Y^*)).
    \end{align*}
    We may rewrite this as 
    \begin{align*}
    D_{e^{\alpha+\beta}}(P \| R) &= (P(Y^*) - e^\alpha Q(Y^*)) + e^\alpha (Q(Y^*) -  e^{\beta} R(Y^*)) \\
    &\leq D_{e^\alpha}(P \| Q) + e^\alpha D_{e^\beta}(Q \| R),
    \end{align*}
    showing the claim.
\end{proof}

\section{Omitted Proofs from Section \ref{sec:mechanisms}}\label{app:mech-proofs}
\subsection{Proof of Theorem \ref{thm:sens}}
\sens*
For any two distributions $\tK, \tK'$, we have
\begin{align*}
    q_f(K) - q_f(K') &= \E_{x \sim \tK}[f(x)] - \E_{x \sim \tK'}[f(x)] \\
    &= \sum_{x \in \calX}f(x)\tK(x)  - \sum_{x \in \calX} f(y)\tK'(y).
\end{align*}
Let $C(x,y) = \{C_x(y)\}_{x \in \calX}$ be the minimum-transport coupling between $\tK$ and $\tK'$. By Definition~\ref{def:coupling}, we have $\tK'(y) = \sum_{x \in \calX} C(x,y)$, and $\emd(\tK, \tK') = \sum_{x, y \in \calX} d_\calX(x,y)C(x,y)$. Now, we write
\begin{align*}
    &\sum_{x \in \calX}f(x)\tK(x)- \sum_{y \in \calX} f(y)\tK'(y) \\
    &\qquad = \sum_{x \in \calX}f(x)\tK(x) - \sum_{y \in \calX} f(y) \sum_{x \in \calX} C(x,y) \\
    &\qquad = \sum_{x \in \calX} \left( f(x) - \sum_{y \in \calX} f(y) C_x(y)\right) \tK(x) \\
    &\qquad = \sum_{x \in \calX} \left( \sum_{y \in \calX} f(x)C_{x}(y) - \sum_{y \in \calX} f(y) C_{x}(y) \right) \tK(x) \\
    &\qquad = \sum_{x \in \calX} \sum_{y \in \calX} \left( f(x) - f(y)\right) C_{x}(y) \tK(x) \\
    &\qquad = \sum_{x, y \in \calX} \left( f(x) - f(y)\right) C(x,y).
\end{align*}
By the triangle inequality and the fact that $f$ is $\ell$-Lipschitz, we may write
\begin{align*}
    \|q_f(K) - q_f(K')\| &\leq \sum_{x, y \in \calX} \| f(x) - f(y)\| C(x,y) \\
    &\leq \sum_{x, y \in \calX} \ell d_{\calX}(x,y) C(x,y) \\
    &= \ell \emd(\tK, \tK').
\end{align*}
The last equation tells us that $\Delta_{\emd}(q_f) \leq \ell$.

\subsection{Proof of Lemma~\ref{lemma:lin}}
\lqpriv*
In the local model, by Theorem~\ref{thm:sens}, we have $\|q_f(K) - q_f(K')\| \leq \ell$. By adding noise drawn from $\Gamma(d, \frac{1}{\alpha})$, it is known this satisfies $(\alpha, 0)$-DP~\cite{hardt2009geometry}. In the bounded central setting, we have $\|q_f(K) - q_f(K')\| \leq \frac{\ell}{n}$, and thus we may add noise drawn from $\Gamma(d, \frac{1}{n \alpha})$.
\subsection{Proof of Theorem~\ref{thm:shuffle-metric-full}}\label{app:shuffle-metric-full}
\shufflepriv*
We will first assume the following lemma:
\begin{lemma}\label{lem:shuffle-metric-part}
    Suppose that $\calA$ is an $\alpha_0 d_\calX$-metric DP algorithm, where $d_\calX \leq 1$. Let $x_1^0, x_1^1, x_2, \ldots, x_m \in \calX$ be a set of inputs such that $d_\calX(x_1^0, x_1^1) \leq d$, and let $\delta > 0$ be a constant such that $\alpha_0 \leq \ln (\frac{m}{16 \ln (2/\delta)})$. Then, we have that
    \[
        D_{e^{\alpha}}(\mathsf{Shuffle}(\calA(x_1^0), \ldots, \calA(x_m)) \| \mathsf{Shuffle}(\calA(x_1^1), \calA(x_2), \ldots, \calA(x_m))) \leq \delta, 
    \]
    where 
    \[
        \alpha \leq \ln\left(1 + \frac{e^{\alpha_0 d}-1}{e^{\alpha_0 d}+1}\left(\frac{8 \sqrt{e^{\alpha_0} \ln (4 / \delta)}}{\sqrt{m}} + \frac{8 e^{\alpha_0}}{m} \right)\right).
    \]
\end{lemma}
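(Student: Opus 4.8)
\textbf{Proof proposal for Lemma~\ref{lem:shuffle-metric-part}.}

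The plan is to follow the blueprint of the state-of-the-art shuffling analysis of~\cite{feldman2022hiding}, but simplified and adapted to the metric DP setting. First I would fix the two input vectors, which differ only in the first coordinate ($x_1^0$ versus $x_1^1$), with $d_\calX(x_1^0, x_1^1) \leq d$. The key observation is that because $\calA$ is $\alpha_0 d_\calX$-DP and $d_\calX \leq 1$, there exists a ``reference'' distribution $\calA(\bot)$ such that both $\calA(x_1^0)$ and $\calA(x_1^1)$ can be written as mixtures: $\calA(x_1^b) = \gamma_b \calA(\bot) + (1-\gamma_b) \mu_b$ for appropriate distributions $\mu_b$, where the overlap parameters $\gamma_0, \gamma_1$ can be taken close to $e^{-\alpha_0}$ (more precisely, the ``clone'' argument lets us take the common mass between $\calA(x_1^0)$ and $\calA(x_1^1)$ to be at least $\tfrac{2}{e^{\alpha_0 d}+1}$ of each, since the densities are within a factor $e^{\alpha_0 d}$ of each other). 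I would then set up the standard coupling: condition on how many of the $m-1$ other users' outputs $\calA(x_2), \ldots, \calA(x_m)$ ``look like'' they came from the reference distribution — call this count $C$ — and argue that, conditioned on $C = c$, the shuffled output of the two neighboring datasets is indistinguishable up to the privacy cost of one step of a ``$c+1$-ary randomized response''-like mechanism. This is where the amplification $\sqrt{m}$ factor enters: $C$ concentrates around its mean $\Theta(m e^{-\alpha_0})$, so with high probability $c$ is large.

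The key steps, in order: (1) Establish the mixture decomposition of $\calA(x_1^0), \calA(x_1^1)$ with overlap coefficient $q := \tfrac{2}{e^{\alpha_0 d}+1}$, so that the ``distinguishable'' part has mass $1 - q = \tfrac{e^{\alpha_0 d}-1}{e^{\alpha_0 d}+1}$; this is the factor appearing in the bound. (2) Show that each of the other $m-1$ outputs independently has probability at least $p := \tfrac{q}{2} \geq \tfrac{1}{e^{\alpha_0}+1} \approx e^{-\alpha_0}$ (using $d_\calX \leq 1$ so $\alpha_0 d_\calX \leq \alpha_0$) of falling in the reference component, and apply a Chernoff/Bernstein bound to get $C \geq \tfrac{1}{2}(m-1)e^{-\alpha_0} =: c_0$ except with probability $\delta/2$ — here the hypothesis $\alpha_0 \leq \ln(\tfrac{m}{16\ln(2/\delta)})$ is exactly what makes $c_0 \geq 8 \ln(2/\delta)$ so the tail bound closes. (3) Conditioned on $C = c \geq c_0$, reduce the shuffled-output comparison to a single ``needle in a haystack'' step: the output is a multiset containing $c+1$ draws from the reference distribution (or $c$ reference draws plus one draw from $\mu_0$ vs. $\mu_1$); by a direct divergence computation (or by citing the corresponding lemma from~\cite{feldman2022hiding} specialized to this structure) this has $D_{e^{\alpha'}} \leq \delta/2$ with $e^{\alpha'} \leq 1 + (1-q)\left(\tfrac{4}{\sqrt{c_0}}\sqrt{\ln(4/\delta)} + \tfrac{4}{c_0}\right)$. (4) Substitute $c_0 = \tfrac{1}{2}(m-1)e^{-\alpha_0} \geq \tfrac{m}{4 e^{\alpha_0}}$ (valid for $m \geq 2$) to turn $\tfrac{1}{\sqrt{c_0}}$ into $\tfrac{2\sqrt{e^{\alpha_0}}}{\sqrt{m}}$ and $\tfrac{1}{c_0}$ into $\tfrac{4 e^{\alpha_0}}{m}$, absorbing constants to reach the stated $\tfrac{8\sqrt{e^{\alpha_0}\ln(4/\delta)}}{\sqrt{m}} + \tfrac{8 e^{\alpha_0}}{m}$. (5) Combine the two $\delta/2$ failure events via the standard ``condition on a good event'' argument for hockey-stick divergence (if $\Pr[\text{bad}] \leq \delta_1$ and conditioned on good the divergence is $\leq \delta_2$, then overall it is $\leq \delta_1 + \delta_2$) to conclude $D_{e^\alpha} \leq \delta$.

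The main obstacle I expect is step (3): carefully controlling the hockey-stick divergence of the single-step comparison when the ``haystack'' has a random but lower-bounded size, and getting the constants to line up with the claimed $8$'s. The cleanest route is probably to invoke the privacy-amplification lemma of~\cite{feldman2022hiding} as a black box at the level of a single user's contribution — their Lemma (the ``mixture + shuffle'' bound) already gives a divergence bound of the form $1 + (1-q)(O(1/\sqrt{C}) + O(1/C))$ for $C$ clones, and our only new ingredient is (a) replacing their pure-LDP overlap $e^{-\epsilon}$ with the metric-DP overlap $\tfrac{2}{e^{\alpha_0 d}+1}$ coming from the data-dependent distance $d$, and (b) tracking that the number of clones among the \emph{other} users is governed by $\alpha_0$ (the worst-case distance bound $1$), not $d$. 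A secondary, more bookkeeping-heavy obstacle is verifying that the hypothesis $\alpha_0 \leq \ln(\tfrac{m}{16\ln(2/\delta)})$ suffices for the Chernoff bound in step (2) to yield $c_0 \geq 8\ln(2/\delta)$ with the stated failure probability; this should follow by a direct substitution but requires care with the multiplicative-Chernoff constants. Once Lemma~\ref{lem:shuffle-metric-part} is in hand, Theorem~\ref{thm:shuffle-metric-full} follows by group privacy applied $\|v\|_0$ times (one application per changed coordinate, each at distance $d_\calX(x_i, x_i') = v_i$), together with the concavity of the per-coordinate bound $g(w)$ in $w$, which forces the worst case to be the uniform split $w_i = \|v\|_1/\|v\|_0$.
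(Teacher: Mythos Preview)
Your proposal is essentially correct and follows the same high-level blueprint as the paper, but the technical machinery differs in two places worth noting.

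First, your step (2) is a bit loose: you posit a single ``reference distribution $\calA(\bot)$'' and say the other users' outputs fall in it with probability $p = q/2$. The paper instead applies the Kairouz decomposition \emph{pairwise}: once to $\calA(x_1^0)$ vs.\ $\calA(x_1^1)$ (yielding $\beta = \tfrac{1}{1+e^{\alpha_0 d}}$ and distributions $Q(x_1^0), Q(x_1^1)$), and separately to each $\calA(x_1^b)$ vs.\ $\calA(x_j)$ (yielding $\gamma = \tfrac{1}{1+e^{\alpha_0}}$). A short algebraic manipulation then shows $\calA(x_j) = \tfrac{\gamma}{2(1-\gamma)} Q(x_1^0) + \tfrac{\gamma}{2(1-\gamma)} Q(x_1^1) + (\text{rest})$, which is what actually licenses treating the other users as ``clones'' of \emph{both} $Q(x_1^0)$ and $Q(x_1^1)$ with equal weight. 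Your sketch would need this step made explicit.

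Second, rather than your Chernoff-then-condition route (steps (2), (5)), the paper reduces via a post-processing lemma directly to the pair of explicit random variables $(A+1-\Delta, C-A+\Delta)$ versus $(A+\Delta, C-A+1-\Delta)$ with $C \sim \mathrm{Bin}(m-1, e^{-\alpha_0})$, $A \sim \mathrm{Bin}(C, \tfrac{1}{2})$, $\Delta \sim \mathrm{Bernoulli}(\beta)$. It then extracts the factor $1-2\beta = \tfrac{e^{\alpha_0 d}-1}{e^{\alpha_0 d}+1}$ via \emph{advanced joint convexity} (Balle et al.) before invoking the binomial divergence lemma of Feldman et al.\ as a black box. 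Your approach of pulling this factor from the mixture overlap and then conditioning on $C \geq c_0$ also works, but the paper's route sidesteps the explicit union-of-$\delta/2$ bookkeeping and gets the constants to line up with less effort.
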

To prove Theorem~\ref{thm:shuffle-metric-full}, 
let \[S(\mathbf{x}_i, \mathbf{x}_{m-i}') = \mathsf{Shuffle}(\calA(x_1), \ldots, \calA(x_i), \calA(x_{i+1}'), \ldots, \calA(x_m')).\] 
Let $k = \|v\|_0$, and WLOG suppose that $x_i = x_i'$ for $i > k$.
Our goal is to show that \[D_{e^\alpha}(S(\mathbf{x}_{k}, \mathbf{x}_{m-k}') \| S(\mathbf{x}_0, \mathbf{x}_{m}')) \leq \delta.\] 
By Lemma~\ref{lem:shuffle-metric-part}, we have for each $1 \leq i \leq k$ that \[D_{\exp(\alpha_i)}(S(\mathbf{x}_{i}, \mathbf{x}_{m-i}') \| S(\mathbf{x}_{i-1}, \mathbf{x}_{m-i+1}')) \leq \frac \delta {k},\] where $\alpha_i = f(d_\calX(x_i, x_i'))$, and 
\[
f(d) = \ln\left(1 + \frac{e^{\alpha_0 d}-1}{e^{\alpha_0 d}+1}\left(\frac{8 \sqrt{e^{\alpha_0} \ln (4k / \delta)}}{\sqrt{m}} + \frac{8 e^{\alpha_0}}{m} \right)\right).
\]
Applying Lemma~\ref{lem:group-priv} $k$ times, we see
\begin{align*}
    &D_{\exp(\alpha_1 + \cdots + \alpha_k)}(S(\mathbf{x}_{k}, \mathbf{x}_{m-k}') \| S(\mathbf{x}_0, \mathbf{x}_{m}')) \\
    &\leq D_{\exp(\alpha_k)}(S(\mathbf{x}_{k}, \mathbf{x}_{m-k}') \| S(\mathbf{x}_{k-1}, \mathbf{x}_{m-k+1}')) \\
    &+ e^{\alpha_k} D_{\exp(\alpha_{k-1})}(S(\mathbf{x}_{k-1}, \mathbf{x}_{m-k+1}') \| S(\mathbf{x}_{k-2}, \mathbf{x}_{m-k+2}')) \\
    &+ \cdots \\
    &+ e^{\alpha_2 + \cdots + \alpha_k} D_{\exp(\alpha_1)} (S(\mathbf{x}_{1}, \mathbf{x}_{m-1}') \| S(\mathbf{x}_{0}, \mathbf{x}_{m}')) \\
    &\leq e^{\alpha_1 + \cdots + \alpha_k} \sum_{i=1}^{k} D_{\exp(\alpha_i)} (S(\mathbf{x}_{i}, \mathbf{x}_{m-i}') \| S(\mathbf{x}_{i-1}, \mathbf{x}_{m-i+1}')) \\
    &\leq e^{\alpha_1 + \cdots + \alpha_k} \delta.
\end{align*}

We now show that $f(d)$ is a concave function on the domain $d \geq 0$; to do this we write $f(d) = \ln(1 + g(d) K)$, where $g(d) = \frac{e^{d} - 1}{e^{d} + 1}$ and $K > 0$ is a suitable constant. We will show that 
\[f''(d) = \frac{(1+g(d)K)g''(d)K - g'(d)^2K^2}{(1+g(d)K)^2} \leq 0.\] It suffices to show that $(1 + Kg(d))g''(d) \leq Kg'(d)^2$. We may write
    \begin{gather*}
        g(d) = 1 - \frac{2}{e^d+1} \\
        g'(d) = \frac{2e^d}{(e^d+1)^2} \\
        g''(d) = 2\frac{(e^d+1)^2e^d - 2e^d(e^d+1)e^d}{(e^d+1)^4} = 2\frac{e^d - e^{2d}}{(e^d+1)^3}.
    \end{gather*}
    Now, we have
    \begin{gather*}
        (1 + Kg(d))g''(d) \leq Kg'(d)^2 \\
        \Longleftrightarrow (1+K-\frac{2K}{e^d+1})2\frac{e^d - e^{2d}}{(e^d+1)^3} \leq K \frac{4e^{2d}}{(e^d+1)^4} \\
        \Longleftrightarrow((e^d+1)(K+1) - 2K)(1-e^{d}) \leq 2K e^{d} \\
        \Longleftrightarrow (Ke^d - K + e^d + 1)(1-e^d) \leq 2Ke^d \\
        \Longleftrightarrow 2Ke^d - K + 1 - Ke^{2d} - e^{2d} \leq 2Ke^{d} \\
        \Longleftrightarrow - K + 1 - Ke^{2d} - e^{2d} \leq 0 \\
    \end{gather*}
    We are done by observing that $1 - e^{2d} \leq 0$ when $d \geq 0$, and $-K - Ke^{2d} \leq 0$. Having shown concavity, the maximum value of $\sum_{i=1}^k f(d_\calX(x_i, x_i'))$ subject to the constraint $\sum_{i=1}^k d_\calX(x_i, x_i') = \|v\|_1$ is achieved when each $d_\calX(x_i, x_i')$ is equal to $\frac{\|v\|_1}{\|v\|_0}$. This gives us a bound of
    \[
        \|v\|_0 \ln\left(1 + \frac{e^{\alpha_0 \|v\|_1 / \|v\|_0}-1}{e^{\alpha_0 \|v\|_1 / \|v\|_0}+1}\left(\frac{8 \sqrt{e^{\alpha_0} \ln (4\|v\|_0 / \delta)}}{\sqrt{m}} + \frac{8 e^{\alpha_0}}{m} \right)\right),
    \]
    giving us the desired bound.
    
\subsection{Proof of Lemma~\ref{lem:shuffle-metric-part}}
This lemma can be viewed as a generalization of amplification by shuffling, which has the same setup but sets $d = 1$ and merely requires that $\calM$ satisfy $\epsilon$-local DP.
We generalize the approach of~\citet{feldman2022hiding}, starting with the the following preliminary claims.
\subsubsection{Preliminary Lemmas}
\begin{lemma}\label{lem:shuffle-post-lem}
(Generalization of Lemma 3.3 in~\citet{feldman2022hiding}). Let $X = \{x_1^0, x_1^1, x_2 \ldots, x_m\}$ be a set of indices, and for $x \in X$, let $R(x), Q(x)$ be two families of distributions and $\alpha \in [0,1], \beta \in [0, \frac{1}{2}]$ be coefficients such that
\begin{gather*}
    R(x_1^0) = (1-\alpha) Q(x_1^0) + \alpha Q(x_1^1) \\
    R(x_1^1) = \alpha Q(x_1^0) + (1-\alpha) Q(x_1^1) \\
    R(x_j) = \beta Q(x_1^0) + \beta Q(x_1^1) + (1-2\beta) Q(x_j) ~~~ \forall j \geq 2.
\end{gather*}
Then, there exists a post-processing mechanism $\calS$ such that 
\begin{align*}
\mathsf{Shuffle}(R(x_1^0), R(x_2), \ldots R(x_m)) &= \calS(A+1-\Delta, C-A+\Delta) \ \ \ \ \ \ \ \text{and} \\
\mathsf{Shuffle}(R(x_1^1), R(x_2), \ldots, R(x_m)) &= \calS(A+\Delta, C-A+1-\Delta),
\end{align*}
where $C \sim \text{Bin}(s-1, 2\beta)$, $A \sim \text{Bin}(C, \frac{1}{2})$, and $\Delta \sim \text{Bernoulli}(\alpha)$, and $\mathsf{Shuffle}$ is a uniformly random shuffle.
\end{lemma}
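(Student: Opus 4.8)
The plan is to adapt the ``clone'' argument of~\citet{feldman2022hiding} (their Lemma 3.3) to the metric setting, where the key difference is that instead of a single distribution $Q$ playing the role of the ``blanket'', we track the two candidate distributions $Q(x_1^0)$ and $Q(x_1^1)$ separately. The overall idea is that each message $R(x_j)$ can be thought of as a mixture: with some probability it is ``drawn from the blanket'' (i.e., from $Q(x_1^0)$ or $Q(x_1^1)$), and otherwise it is drawn from its own idiosyncratic component. Once we condition on which messages come from the blanket and which do not, a shuffled collection of blanket messages is statistically indistinguishable regardless of how many came from $x_1^0$ versus $x_1^1$ individually---only the total counts matter---and this is exactly what lets us summarize everything via the statistics $(A, C, \Delta)$.

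First I would set up the mixture decomposition explicitly. For each $j \geq 2$, write $R(x_j)$ as: draw a three-way indicator $W_j$ which equals $0$ with probability $\beta$, $1$ with probability $\beta$, and $2$ with probability $1-2\beta$; if $W_j = 0$ output a sample from $Q(x_1^0)$, if $W_j=1$ output a sample from $Q(x_1^1)$, and if $W_j=2$ output a sample from $Q(x_j)$. Similarly decompose $R(x_1^b)$ for $b \in \{0,1\}$ using the Bernoulli$(\alpha)$ variable $\Delta$: in the $x_1^0$-world, $R(x_1^0)$ outputs $Q(x_1^0)$ when $\Delta = 0$ and $Q(x_1^1)$ when $\Delta = 1$; in the $x_1^1$-world, $R(x_1^1)$ outputs $Q(x_1^1)$ when $\Delta = 0$ and $Q(x_1^0)$ when $\Delta = 1$. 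Let $C = \#\{j \geq 2 : W_j \in \{0,1\}\}$ be the number of ``blanket'' messages among the $s-1$ fixed users (so $C \sim \mathrm{Bin}(s-1, 2\beta)$), and among those let $A$ count how many landed on $Q(x_1^0)$ rather than $Q(x_1^1)$; conditioned on $C$, by symmetry $A \sim \mathrm{Bin}(C, \tfrac12)$.

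Next I would describe the post-processing mechanism $\calS$. Given two counts $(n_0, n_1)$, $\calS$ generates $n_0$ i.i.d. samples from $Q(x_1^0)$, $n_1$ i.i.d. samples from $Q(x_1^1)$, additionally generates one sample from $Q(x_j)$ for each $j$ with $W_j = 2$ (these are the same in both worlds, so $\calS$ can be given them or regenerate them from the shared randomness), and then applies a uniformly random shuffle to the whole collection. The crucial observation is that a shuffled bag of i.i.d. draws from a mixture depends only on the multiset of components used, so in the $x_1^0$-world the bag of blanket-sourced messages consists of $A + (1-\Delta)$ draws from $Q(x_1^0)$ and $C - A + \Delta$ draws from $Q(x_1^1)$, while in the $x_1^1$-world it is $A + \Delta$ draws from $Q(x_1^0)$ and $C - A + (1 - \Delta)$ draws from $Q(x_1^1)$. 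Feeding these counts into $\calS$ reproduces $\mathsf{Shuffle}(R(x_1^b), R(x_2), \ldots, R(x_m))$ exactly, which is the claimed identity. I expect the main obstacle to be the bookkeeping that justifies ``a shuffle of i.i.d. samples only sees the multiset of source distributions''---one must argue carefully that the joint law of (which message came from which source, after shuffling) can be absorbed into $\calS$ without leaking the index of the swapped user; this is where the uniform-shuffle assumption and the exchangeability of the $W_j$'s do the real work, and it must be phrased so that $\calS$ genuinely has no access to $\Delta$ beyond what is passed in the two count arguments.
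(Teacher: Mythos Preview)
Your proposal is correct and follows essentially the same approach as the paper: both introduce latent indicator variables (your $W_j,\Delta$; the paper's $Y_j$) to realize each $R(x_j)$ as a mixture over $\{Q(x_1^0),Q(x_1^1),Q(x_j)\}$, then argue that after shuffling only the $0/1$ counts survive, and define $\calS$ to regenerate the output from those counts. The only minor difference is that the paper lets $\calS$ internally draw a uniformly random injection from the non-blanket output positions into $\{2,\ldots,m\}$ (using exchangeability of $Y_2,\ldots,Y_m$ conditioned on the count event) rather than handing $\calS$ the set $\{j:W_j=2\}$ as shared randomness, which makes $\calS$ a function of the count pair alone and cleanly sidesteps the bookkeeping concern you flag at the end.
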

\begin{proof}
    Let $Y_1^0, Y_1^1, Y_2, \ldots, Y_m$ be distributions where $Y_1^b$ is defined over $\{0, 1\}$ and satisfies $Y_1^0(0) = 1-\alpha$ and $Y_1^1(1) = \alpha$ (with reversed probabilities if $b=1$), and $Y_j$ for $j \geq 2$ is defined over $\{0, 1, 2\}$ and satisfies $Y_j(0) = Y_j(1) = \beta$ and $Y_j(2) = 1-2\beta$. Let $F$ be a function returning a distribution satisfying
    \[
        F_j(v) = \begin{cases} Q(x_1^0) & v = 0 \\ Q(x_1^1) & v = 1 \\ Q(x_j) & \text{otherwise} \end{cases}
    \]
    
    Observe that by definition, the following probability distributions are equal for $b \in \{0,1\}$:
    \[
    R(x_1^b), R(x_2), \ldots, R(x_m) = F_1(Y_1^b), F_2(Y_2), \ldots, F_m(Y_m).
    \]
    Let $\textbf{0}(Y_1,\ldots, Y_m) $ denote the number of indices $j$ such that $Y_j = 0$, and define $\textbf{1}(Y_1,\ldots, Y_m)$ similarly. 
    We will show that there exists a post-processing function $\calS$ such that, for both $b \in \{0,1\}$, we have
    \begin{multline}\label{eq:post-process}
        \mathsf{Shuffle}(F_1(Y_1^b), F_2(Y_2), \ldots, F_m(Y_m)) \\ = \calS(\textbf{0}(Y_1^b,\ldots, Y_m), \textbf{1}(Y_1^b,\ldots, Y_m)).
    \end{multline}
    We will do this by conditioning on the event $E_{u,v}$ that 
    \[(\textbf{0}(Y_1^b, Y_2, \ldots, Y_m), \textbf{1}(Y_1^b, Y_2, \ldots, Y_m)) = (u,v),\] where $u,v \in \mathbb{N}$ satisfy $1 \leq u+v \leq m$. Now, define the vector $r = \mathsf{Shuffle}(F(Y_1), F_2(Y_2), \ldots, F_m(Y_m))$. Conditioned on $E_{u,v}$, $r$ is distributed according to the following process: 
    First, select a random partition $U \sqcup V \sqcup W = [m]$ such that $|U| = u$ and $|V| = v$, corresponding to the indices (after shuffling) where $Y_1^b, Y_2, \ldots, Y_m$ are equal to $0, 1$, or $2$. Next, let $\pi$ be a random injection from $W$ to $[m] \setminus 1$. Then, $r$ is distributed according to:
    \begin{gather}
        r(u) = Q(x_{1}^0) \ \ \ \forall u \in U \\
        r(v) = Q(x_{1}^1) \ \ \ \forall v \in V \\
        r(w) = Q(x_{\pi(w)}) \ \ \ \forall w \in W.
    \end{gather}
    The above process is independent of $\alpha, \beta$ given $E_{u,v}$. In particular, it does not care whether we replace $\alpha$ with $1-\alpha$, and thus it serves as our process $\calS$ satisfying~\eqref{eq:post-process} for both values of $b$.
    Having established this, it is easy to show that $\textbf{0}(Y_1^0, \ldots, Y_m) = A+1-\Delta$, $\textbf{1}(Y_1^0, \ldots, Y_m) = C-A+\Delta$ for $b=0$, and $\textbf{0}(Y_1^1, \ldots, Y_m) = A+\Delta$, $\textbf{1}(Y_1^1, \ldots, Y_m) = C-A+1-\Delta$ for $b=1$.
\end{proof}

Having reduced the shuffling problem to a divergence between two fixed probability distributions, we follow the method of~\cite{feldman2022hiding} to compute this divergence. We use the following two results:

\begin{lemma}\label{lem:bin-divergence}
    (Restatement of Lemma A.1 from~\cite{feldman2022hiding}):
    Suppose $p \geq \frac{16 \ln(2 / \delta)}{m}$, $C \sim Bin(m-1, p)$ and $A \sim Bin(C, \frac{1}{2})$. Define $P = (A+1, C-A)$ and $Q = (A, C-A+1)$. Then, $D_{e^{\epsilon}}(P \| Q) \leq \delta$, where 
    \[
    \epsilon = \ln\left(1 + \frac{8 \sqrt{\ln (4 / \delta)})}{\sqrt {pm}} + \frac{8}{pm}\right)
    \]
\end{lemma}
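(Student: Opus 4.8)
The plan is to reduce the hockey-stick divergence $D_{e^\epsilon}(P\|Q)$ to the $P$-probability of a single ``bad event'' expressed through the likelihood ratio, and then control that probability with two off-the-shelf concentration bounds. First I would identify the privacy loss explicitly. Conditioning on $C=c$, both $P$ and $Q$ are supported on pairs summing to $c+1$, so a generic atom is $y=(a+1,\,c-a)$ with $0\le a\le c$. Under $P$ this atom is realized exactly on the event $\{C=c,\ A=a\}$ and under $Q$ exactly on $\{C=c,\ A=a+1\}$ (with the atom $(c+1,0)$ having zero $Q$-mass). Since $\Pr[A=a\mid C=c]=\binom{c}{a}2^{-c}$, the likelihood ratio is $P(y)/Q(y)=\binom{c}{a}/\binom{c}{a+1}=(a+1)/(c-a)$, read as $+\infty$ when $a=c$. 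Writing $D_{e^\epsilon}(P\|Q)=\int\max\{P(y)-e^\epsilon Q(y),0\}\,dy$ and dropping the $Q$-term on the support of the integrand gives
\[
  D_{e^\epsilon}(P\|Q)\ \le\ \Pr\!\left[\frac{A+1}{C-A}>e^\epsilon\right],
\]
where $C\sim\mathrm{Bin}(m-1,p)$ and $A\mid C\sim\mathrm{Bin}(C,1/2)$. (One could instead keep the $\max$ for sharper constants, but the crude bound suffices for the stated $\epsilon$.) Rearranging, the bad event is $A-\tfrac C2>\tfrac C2\tanh(\epsilon/2)-\tfrac1{1+e^\epsilon}$, i.e. $A$ must exceed its conditional mean $C/2$ by roughly $\tfrac C4(e^\epsilon-1)$ up to an $O(1)$ correction coming from the $+1$'s.

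Next I would run a two-level concentration argument. (i) A lower-tail Chernoff bound on $C$: since $p\ge 16\ln(2/\delta)/m$ we have $\E[C]=(m-1)p\ge 8\ln(2/\delta)$, hence $\Pr[C<(m-1)p/2]\le\exp(-(m-1)p/8)\le\delta/2$. (ii) Conditioned on any $c\ge(m-1)p/2\ge pm/4$, Hoeffding's inequality for $A\sim\mathrm{Bin}(c,1/2)$ gives $\Pr[A-c/2>t]\le\exp(-2t^2/c)$; plugging in $t=\tfrac c2\tanh(\epsilon/2)-\tfrac1{1+e^\epsilon}\ge\tfrac c2\tanh(\epsilon/2)-\tfrac12$, using $\tanh(\epsilon/2)=(e^\epsilon-1)/(e^\epsilon+1)$, and inserting the choice $\epsilon=\ln\!\bigl(1+\tfrac{8\sqrt{\ln(4/\delta)}}{\sqrt{pm}}+\tfrac{8}{pm}\bigr)$ makes the exponent at least $\ln(2/\delta)$, so this conditional probability is at most $\delta/2$. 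The additive $8/(pm)$ summand in $\epsilon$ is exactly what keeps $t$ a constant factor above $\sqrt{c\ln(4/\delta)}$ despite the $-\tfrac12$ shift. A union bound over (i) and (ii) then yields $D_{e^\epsilon}(P\|Q)\le\delta$.

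\textbf{Main obstacle.} The delicate part is the bookkeeping around the additive $+1$'s and pinning down the constants. The clean picture ``$A$ deviates above $c/2$ by $\sim\tfrac c4(e^\epsilon-1)$'' is exact only modulo the $O(1)$ shifts coming from $(A+1)$ and $(C-A+1)$, and it takes care to check that the claimed $\epsilon$ — in particular its $8/(pm)$ term and the $\sqrt{\ln(4/\delta)}$ (rather than $\sqrt{\ln(2/\delta)}$) inside the radical — simultaneously absorbs those shifts, the slack in the Hoeffding exponent, and the splitting of $\delta$ across the two concentration events. Everything else is a direct application of the definition of the hockey-stick divergence together with standard Chernoff/Hoeffding tail bounds.
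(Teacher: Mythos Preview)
The paper does not give its own proof of this lemma; it is stated as a restatement of Lemma~A.1 of Feldman, McMillan, and Talwar (2022) and is used as a black box in the proof of Lemma~\ref{lem:shuffle-metric-part}. Your sketch is correct and is essentially the argument that appears in that reference: compute the likelihood ratio $P(y)/Q(y)=(a+1)/(c-a)$, bound $D_{e^\epsilon}(P\|Q)$ by the $P$-probability of the event $\{(A+1)/(C-A)>e^\epsilon\}$, and control that event by a lower-tail Chernoff bound on $C$ followed by a Hoeffding bound on $A$ conditioned on $C$ being large. The constants work out with the stated $\epsilon$, and your identification of the $8/(pm)$ term as the device that absorbs the additive $O(1)$ shift from the ``$+1$'' is exactly right.
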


The next result, advanced joint convexity, originally appeared in the privacy amplification by sampling literature and can be used to improve the parameter $\epsilon$ when computing $D_{\alpha}(P \| Q)$ between two distributions which are nearly the same.
\begin{lemma} \label{lem:adv-conv}(Restatement of Theorem 2 from~\cite{balle2018privacy})
    Let $P,Q$ be probability distributions satisfying $P = \nu M + (1-\nu) N$ and $Q = \nu M' + (1-\nu) N$ for distributions $M,M',N$ and $\nu \in [0,1]$. Given $\alpha \geq 1$, define $\alpha' = 1 + \nu (\alpha-1)$ and $\beta = \frac{\alpha'}{\alpha}$. Then,
    \[
        D_{\alpha'}(P \| Q) \leq \nu D_\alpha (M \| (1-\beta) N + \beta M').
    \]
\end{lemma}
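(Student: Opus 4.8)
The plan is to work at the level of the dual (variational) form of the hockey-stick divergence, which is already the tool used in the proof of Lemma~\ref{lem:group-priv}: for any threshold $\gamma \geq 1$ and distributions $P, Q$ on $\calY$,
\[
    D_{\gamma}(P \| Q) = \sup_{S \subseteq \calY} \bigl( P(S) - \gamma Q(S) \bigr),
\]
with the supremum attained at $S = \{y : P(y) > \gamma Q(y)\}$. Passing to sets turns the claimed inequality into an elementary identity that can be checked pointwise in $S$ before taking the supremum, so no integral estimates or divergence inequalities are actually needed.

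First I would set $R = (1-\beta)N + \beta M'$ and check that $R$ is a genuine probability distribution. Since $\alpha \geq 1$ and $\nu \in [0,1]$, we have $\alpha' = 1 + \nu(\alpha-1) \in [1, \alpha]$, hence $\beta = \alpha'/\alpha \in (0,1]$, so $R$ is a valid convex combination of $N$ and $M'$ and the right-hand divergence $D_\alpha(M \| R)$ is well defined. Next, for an arbitrary measurable $S \subseteq \calY$, I would expand $P(S) - \alpha' Q(S)$ using $P = \nu M + (1-\nu)N$ and $Q = \nu M' + (1-\nu)N$:
\[
    P(S) - \alpha' Q(S) = \nu M(S) - \alpha'\nu M'(S) + (1-\nu)(1-\alpha') N(S).
\]
The crucial substitution is $1 - \alpha' = -\nu(\alpha-1)$, which collapses the coefficient of $N(S)$ and lets me factor out $\nu$:
\[
    P(S) - \alpha' Q(S) = \nu\bigl[\, M(S) - \alpha' M'(S) - (1-\nu)(\alpha-1) N(S) \,\bigr].
\]

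I would then match the bracket against $M(S) - \alpha R(S)$. Because $\alpha\beta = \alpha'$ and $\alpha(1-\beta) = \alpha - \alpha' = (1-\nu)(\alpha-1)$, expanding $R$ gives exactly $M(S) - \alpha R(S) = M(S) - \alpha' M'(S) - (1-\nu)(\alpha-1)N(S)$, i.e.\ the bracket. Thus $P(S) - \alpha' Q(S) = \nu\bigl( M(S) - \alpha R(S) \bigr)$ for every set $S$. Taking the supremum over $S$ on both sides, which is legitimate since $\nu \geq 0$, yields $D_{\alpha'}(P \| Q) = \nu\, D_\alpha(M \| R)$, which gives the claimed bound (in fact with equality).

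I do not expect a genuine obstacle: the whole argument reduces to the two coefficient identities $\alpha\beta = \alpha'$ and $\alpha(1-\beta) = (1-\nu)(\alpha-1)$ together with $1-\alpha' = -\nu(\alpha-1)$. The only points requiring care are (i) confirming $\beta \in (0,1]$ so that $R$ is a probability distribution and the right-hand side is meaningful, and (ii) noting that the identity holds setwise, so that taking suprema introduces no looseness and the step is valid even for the optimizing sets on each side. It is worth remarking that the argument delivers equality rather than mere inequality; the weaker inequality form stated is all that is required by the applications downstream.
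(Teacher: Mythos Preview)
Your proof is correct and complete. The paper does not supply its own proof of this lemma; it simply cites the result from \cite{balle2018privacy}, so there is no in-paper argument to compare against. Your use of the dual (set-based) form of the hockey-stick divergence---the same device the paper employs in the proof of Lemma~\ref{lem:group-priv}---together with the coefficient identities $\alpha\beta = \alpha'$ and $\alpha(1-\beta) = (1-\nu)(\alpha-1)$ yields the statement with equality, which is indeed stronger than what is stated.
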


Finally, we require a result from local DP:
\begin{lemma}\label{lem:dp-decomp} (Restatement of Theorem 2.5 from~\cite{kairouz2015composition})
    Let $P, Q$ be two distributions and $\alpha \geq 1$ be a parameter such that $D_\alpha(P \| Q) = 0$. Then, there exist distributions $M, N$ such that
    \begin{gather*}
        P = \frac{\alpha}{\alpha+1}M + \frac{1}{\alpha+1} N \\
        Q = \frac{1}{\alpha+1} M + \frac{1}{\alpha+1} N.
    \end{gather*}
\end{lemma}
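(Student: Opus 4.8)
The plan is to reverse-engineer the two mixture components $M,N$ from the required identities by treating them as a linear system. Write $p,q$ for densities of $P,Q$ with respect to a common dominating measure (e.g. $\tfrac{P+Q}{2}$), and recall that $D_\alpha(P\|Q)=0$ in the sense of Definition~\ref{def:hockey} unpacks to the pointwise bound $p(y)\le \alpha\, q(y)$ for $Q$-almost every $y$ (and, on the null set where $q=0$, also $p=0$). When $\alpha=1$ this forces $P=Q$ and one simply takes $M=N=P$, so assume $\alpha>1$. The two target equations $p=\tfrac{\alpha}{\alpha+1}m+\tfrac{1}{\alpha+1}n$ and $q=\tfrac{1}{\alpha+1}m+\tfrac{\alpha}{\alpha+1}n$ form an invertible $2\times2$ system in $(m,n)$; solving yields the explicit candidates
$m=\tfrac{\alpha p-q}{\alpha-1}$ and $n=\tfrac{\alpha q-p}{\alpha-1}$, which is the only possible choice.

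I would then check the two properties these candidates must satisfy. Normalization is immediate: $\int m = \tfrac{\alpha\cdot 1-1}{\alpha-1}=1$ and likewise $\int n=1$, so $M,N$ are genuine probability distributions once nonnegativity holds. Nonnegativity is exactly where the hypothesis enters: $n(y)\ge 0$ is the inequality $p(y)\le\alpha q(y)$, i.e. precisely the content of $D_\alpha(P\|Q)=0$; and $m(y)\ge 0$ is the reverse inequality $q(y)\le\alpha p(y)$, i.e. $D_\alpha(Q\|P)=0$. Thus the decomposition in fact needs the \emph{two-sided} domination $\tfrac1\alpha\le \tfrac{p}{q}\le\alpha$; I would note that this is always available in the way the lemma is actually used, since between the two output distributions of a metric-DP (hence pure-DP) mechanism the likelihood ratio is bounded by $e^{\alpha_0 d}$ in both directions. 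Verifying back-substitution ($\tfrac{\alpha}{\alpha+1}m+\tfrac{1}{\alpha+1}n = \tfrac{\alpha(\alpha p-q)+(\alpha q-p)}{(\alpha+1)(\alpha-1)}=\tfrac{(\alpha^2-1)p}{\alpha^2-1}=p$, and symmetrically for $q$) closes the argument.

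There is no hard analysis here; the only real care is measure-theoretic bookkeeping — choosing a dominating measure, handling the $q=0$ set, and confirming that the integral definition of $D_\alpha$ genuinely collapses to the a.e.\ pointwise inequality. The conceptual point, which is what makes the lemma useful downstream, is simply that any pure-DP pair of output distributions is a pair of ``complementary'' mixtures of the same two extreme distributions $M,N$ — this canonical form is exactly what lets advanced joint convexity (Lemma~\ref{lem:adv-conv}) be applied in the proof of Lemma~\ref{lem:shuffle-metric-part}. (This statement is classical, appearing as Theorem~2.5 of~\cite{kairouz2015composition}; the above is the self-contained version of that argument.)
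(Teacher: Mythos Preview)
The paper does not give a proof of this lemma at all; it merely cites Theorem~2.5 of \cite{kairouz2015composition}. Your argument is the standard one and is correct: solve the $2\times 2$ linear system to get $m=\tfrac{\alpha p-q}{\alpha-1}$, $n=\tfrac{\alpha q-p}{\alpha-1}$, then verify normalization and nonnegativity.

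You also catch a genuine imprecision in the statement. First, the displayed decomposition of $Q$ in the paper has a typo (the coefficients $\tfrac{1}{\alpha+1},\tfrac{1}{\alpha+1}$ do not sum to $1$); the intended line is $Q=\tfrac{1}{\alpha+1}M+\tfrac{\alpha}{\alpha+1}N$, which is what you prove. Second, and more substantively, you correctly observe that the single hypothesis $D_\alpha(P\|Q)=0$ only yields $n\ge 0$; to get $m\ge 0$ one also needs $D_\alpha(Q\|P)=0$. The cited result in \cite{kairouz2015composition} is phrased for an $\varepsilon$-DP mechanism, where the two-sided bound is automatic, and indeed every invocation of the lemma in the proof of Lemma~\ref{lem:shuffle-metric-part} is to a pair $\calA(x),\calA(x')$ with the symmetric metric-DP guarantee. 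So your caveat is exactly right and does not affect the downstream argument.
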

With these results in order, we are ready to complete the proof.

\subsubsection{Completing the proof of Lemma~\ref{lem:shuffle-metric-part}}
Using the definition of $d_\calX$-DP and the fact that $d_\calX \leq 1$, we have 
\begin{gather*}
    D_{\exp(\epsilon_0 d)}(\calA(x_1^0) \| \calA(x_1^1)) = 0 \\ D_{\exp(\epsilon_0)}(\calA(x_1^0) \| \calA(x_j)) = 0 ~~~ \forall j \geq 2 \\
    D_{\exp(\epsilon_0)}(\calA(x_1^1) \| \calA(x_j)) = 0 ~~~ \forall j \geq 2.
\end{gather*}
Applying Lemma~\ref{lem:dp-decomp} to the first equation, we obtain
\begin{gather}
    \calA(x_1^0) = (1-\beta) Q(x_1^0) + \beta Q(x_1^1) \label{eq:first-decomp-1} \\
    \calA(x_1^1) = \beta Q(x_1^0) + (1-\beta) Q(x_1^1)  \label{eq:first-decomp-2}
\end{gather}
where $\beta = \frac{1}{1+\exp(\epsilon_0 d)}$. Applying the lemma to the second and third sets of equations, we obtain
\begin{gather}
    \calA(x_1^0) = (1-\gamma) R(x_1^0, x_j) + \gamma R'(x_1^0, x_j) ~~~ \forall j \geq 2\label{eq:pair-decomp-1}\\
    \calA(x_j) = \gamma R(x_1^0, x_j) + (1-\gamma) R'(x_1^0, x_j) ~~~ \forall j \geq 2\label{eq:pair-decomp-2}\\
        \calA(x_1^1) = (1-\gamma) R(x_1^1, x_j) + \gamma R'(x_1^1, x_j) ~~~ \forall j \geq 2\label{eq:pair-decomp-3}\\
    \calA(x_j) = \gamma R(x_1^1, x_j) + (1-\gamma) R'(x_1^1, x_j) ~~~ \forall j \geq 2.\label{eq:pair-decomp-4}
\end{gather}
where $\gamma = \frac{1}{1 + \exp(\epsilon_0)}$. Subtracting~\ref{eq:pair-decomp-1} and~\ref{eq:pair-decomp-2}, we obtain that
\begin{gather}
    \calA(x_j) = \frac{\gamma}{1-\gamma} \calA(x_1^0) + \frac{1-2\gamma}{1-\gamma}R'(x_1^0, x_j) ~~~ \forall j \geq 2, \label{eq:pair-decomp-5}
\end{gather}
and likewise~\ref{eq:pair-decomp-3} and~\ref{eq:pair-decomp-4} imply
\begin{gather}
    \calA(x_j) = \frac{\gamma}{1-\gamma} \calA(x_1^1) + \frac{1-2\gamma}{1-\gamma}R'(x_1^1, x_j) ~~~ \forall j \geq 2. \label{eq:pair-decomp-6}
\end{gather}
Taking the average of~\ref{eq:pair-decomp-5} and~\ref{eq:pair-decomp-6}, we obtain 
\begin{gather}
    \calA(x_j) = \frac{\gamma}{2(1-\gamma)} \calA(x_1^0) + \frac{\gamma}{2(1-\gamma)} \calA(x_1^1) + \frac{1-2\gamma}{1-\gamma} Q(x_j) ~~~ \forall j \geq 2,
\end{gather}
where $Q(x_j) = \frac{1}{2} R'(x_1^0, x_j) + \frac{1}{2} R'(x_1^1, x_j)$. 
Now, equations~\ref{eq:first-decomp-1} and~\ref{eq:first-decomp-2} imply that
\[
\calA(x_1^0) + \calA(x_1^1) = Q(x_1^0) + Q(x_1^1).
\]
This implies
\begin{gather}
    \calA(x_j) = \frac{\gamma}{2(1-\gamma)} Q(x_1^0) + \frac{\gamma}{2(1-\gamma)} Q(x_1^1) + \frac{1-2\gamma}{1-\gamma}Q(x_j) ~~~ \forall j \geq 2.
\end{gather}
Applying Lemma~\ref{lem:shuffle-post-lem}, there exists a function $S$ such that 
\begin{gather*}
    \mathsf{Shuffle}(\calA(x_1^0), \calA(x_2), \ldots, \calA(x_m)) = S(A+1-\Delta, C-A+\Delta)\\
    \mathsf{Shuffle}(\calA(x_1^1), \calA(x_2), \ldots, \calA(x_m)) = S(A+\Delta, C-A+1-\Delta),
\end{gather*}
where $C \sim Bin(m-1, \frac{\gamma}{1-\gamma}) = Bin(m-1, e^{-{\epsilon_0}})$, $A \sim Bin(C, \frac{1}{2})$, and $\Delta \sim Bernoulli(\beta)$.
By the post-processing inequality, we have for any $\alpha \geq 1$ that
\begin{multline*}
D_\alpha (\mathsf{Shuffle}(\calA(x_1^0), \calA(x_2), \ldots, \calA(x_s)) \| \mathsf{Shuffle}(\calA(x_1^1), \calA(x_2), \\ \ldots, \calA(x_s))) \leq 
D_{\alpha}((A+1-\Delta, C-A+\Delta) \| (A+\Delta, C-A+1-\Delta)).
\end{multline*}
Observe we can write 
\begin{gather*}
    (A+1-\Delta, C-A+\Delta) = (1-\beta) (A+1, C-A) + \beta (A, C-A+1) \\
    (A+\Delta, C-A+1-\Delta) = \beta (A+1, C-A) + (1-\beta) (A, C-A+1).
\end{gather*}
Define $X = (A+1, C-A)$ and $Y = (A, C-A+1)$. We can rewrite the above as
\begin{gather*}
    (A+1-\Delta, C-A+\Delta) = 2\beta \frac{X+Y}{2} + (1-2\beta)X \\
    (A+\Delta, C-A+1-\Delta) = 2\beta \frac{X+Y}{2} + (1-2\beta)Y.
\end{gather*}
Applying Lemma~\ref{lem:adv-conv}, we have
\begin{multline*}
D_{\alpha'}((A+1-\Delta, C-A+\Delta) \| (A+\Delta, C-A+1-\Delta)) \\ \leq (1-2\beta) D_\alpha(X \| (1-\eta) (\tfrac{X+Y}{2}) + \eta Y),
\end{multline*}
where $\alpha' = 1 + (1-2\beta)(\alpha-1)$ and $\eta = \frac{\alpha'}{\alpha}$.
By convexity, the RHS above is at most
\[
D_{\alpha'}((A+1-\Delta, C-A+\Delta) \| (A+\Delta, C-A+1-\Delta)) \leq (1-2\beta) D_\alpha(X\|Y).
\]
Now, we finally set $\alpha = 1 + \frac{8 \sqrt{\exp(-\epsilon_0)\ln (4/\delta)}}{\sqrt{m}} + \frac{8\exp(-\epsilon_0)}{m}$. Lemma~\ref{lem:bin-divergence} (using the assumption that $\epsilon_0 \leq \ln (\frac{m}{16 \ln (2/\delta)})$) implies $D_{\alpha}(X\|Y) \leq \delta$. From this, we obtain our desired result that 
\begin{multline*}
D_{\alpha'} (\mathsf{Shuffle}(\calA(x_1^0), \calA(x_2), \ldots, \calA(x_m)) \| \mathsf{Shuffle}(\calA(x_1^1), \calA(x_2), \\ \ldots, \calA(x_m))) \leq (1-2\beta)D_{\alpha}(X\|Y) \leq \delta,
\end{multline*}
where 
\[
\alpha' = 1 + \frac{e^{\epsilon_0 d} - 1}{e^{\epsilon_0 d}+1}\left( \frac{8 \sqrt{e^{\epsilon_0}\ln (4/\delta)}}{\sqrt{m}} + \frac{8e^{\epsilon_0}}{m}\right).
\]

\subsection{Proof of Theorem \ref{thm:eps-amp}}
\epsamp*

    First, consider the local model.
    Fix any two itemsets $K = \{x_1, \ldots, x_m\}$ and $K' = \{x_1, \ldots, x_m'\}$ such that $\emd(\tK, \tK') \leq w$. By Lemma~\ref{lem:bvn}, there exists a permutation $\pi : [m] \rightarrow [m]$ such that
    \[
        \sum_{i=1}^m d_\calX(x_i, x_{\pi(i)}') = mw.
    \]
    Let 
    \begin{gather}
        \tL = \mathrm{Shuffle}(\calA(x_1), \ldots, \calA(x_m)) \\
        \tL' = \mathrm{Shuffle}(\calA(x_{\pi(i)}'), \ldots, \calA(x_{\pi(m)}')).
    \end{gather}
    By Theorem~\ref{thm:shuffle-metric-full}, we know that $D_{\exp(\alpha(w))}(\tL\| \tL') \leq \delta e^{\alpha(w)}$, where $\alpha(w) = h(m; m, mw)$. The final privacy parameters for a fixed $w$ will be $\frac{\alpha(w)}w$ and $\delta e^{\alpha(w)}$; the worst-case privacy parameters are thus $\sup_{w \in [0,1]} \frac{\alpha(w)}{w}$ and $\sup_{w \in [0,1]} \delta e^{\alpha(w)}$. Since $\alpha(w)$ is an increasing function, the latter term reduces to $\delta e^{\alpha(w)}$. 

    In the bounded central model, the same logic applies, except that $\tL, \tL'$ have size $mn$, differ in only $m$ coordinates, and     
    \[
        \sum_{i=1}^{mn} d_\calX(x_i, x_{\pi(i)}') = mw.
    \]
    We apply Theorem~\ref{thm:shuffle-metric-full} to obtain $D_{\exp(\alpha(w))}(\tL\| \tL') \leq \delta e^{\alpha(w)}$, where $\alpha(w) = h(mn; m, mw)$, and we complete the proof similarly.

\section{Omitted Proofs from Section~\ref{sec:reduct}}\label{app:reduct-proofs}
\subsection{Proof of Lemma \ref{lem:coupling-dist}}
\smoothproj*
    For $i = 1, \ldots, s$, define $X_i = d_\calX(x_i, y_i)$, and observe that $\emd(\tL, \tL') \leq \frac{1}{s}(X_1 + \cdots + X_s)$. Now, let $\mu$ denote $\emd(\tK, \tK')$. Observe each $X_i$ is i.i.d. and satisfies $\E[X_i] = 
    \mu$ and $0 \leq X_i \leq 1$. Due to the last two facts, we have $\E[X_i^2] \leq \mu$. By Bernstein's inequality, we have, for all $t \geq 0$,
    \[
        \Pr\left[X_1 + \cdots + X_s - s\mu  \geq t\right] \leq e^{-t^2 / 2(v + bt / 3)},
    \]
    where $v = \sum_{i=1}^s \E[X_i^2] \leq s\mu$ and $b = 1$. By setting 
    \[
        t = \max\{ \sqrt{4 s \mu \ln (1/\delta)}, \tfrac{4}{3} \ln (1/\delta) \},
    \]
    we ensure that the probability is at most $\delta$. We have
    \[
        s \mu + t \leq s\mu + 2 \sqrt{s \mu \ln(1/\delta)} + \tfrac{4}{3} \ln(1/\delta) \leq (1+\sqrt{2})s\mu + (\tfrac 4 3 + \sqrt{2}) \ln \tfrac{1}{\delta}. 
    \]
    Finally,
    \begin{multline*}
        \Pr[\emd(\tL, \tL') \geq (1+\sqrt{2}) \mu + \tfrac{3}{s} \ln \tfrac{1}{\delta}] \leq \\ \Pr[X_1 + \cdots + X_s \geq (1+\sqrt{2}) s\mu + 3 \ln \tfrac{1}{\delta}] \leq \delta.
    \end{multline*}

\subsection{Proof of Theorem \ref{thm:emd-dp-central}}
\reductpriv*
First, we will consider the local model.
Let $K, K'$ denote two datasets such that $\emd(\tK, \tK') \leq r$. Let $L$, $L'$ denote the set of $s$ samples when $K$ (resp. $K'$) is used. Our goal is to show that $D_{\exp(\epsilon)}(\calM(L) \| \calM(L')) \leq \delta$. Observe we may define the objects $\mathbf{L}, \mathbf{L}' \in \Delta^{\calX^{s}}$ to be the probability distributions of $L, L'$ (which lie in $\calX^{s}$). By Lemma~\ref{lem:convex-coupling}, for any coupling $C \in \calC(\mathbf{L}, \mathbf{L}')$, we have
\begin{align*}
D_{\exp(\epsilon)}(\calM(L) \| \calM(L'))&\leq \E_{(L, L') \sim C} [ D_{\exp(\epsilon)}(\calM(L) \| \calM(L')) ].
\end{align*}
 Let $A$ denote the event that we have $\emd(\tL, \tL') \leq (1+\sqrt{2})r + \frac{3}{s} \ln \tfrac{1}{\delta}$. When $A$ holds, then $D_{\exp(\epsilon)} (\calM(L) \| \calM(L')) \leq \delta$ by assumption. When this does not hold, then trivially $D_{\exp(\epsilon)} (\calM(L) \| \calM(L')) \leq 1$. Conditioning on the above expectation, we have
\begin{align*}
\E_{(L, L') \sim C} [ D_{\exp(\epsilon)}(\calM(L) \| \calM(L'))] &\leq \delta \Pr[A] + \Pr[\overline{A}] \\
&\leq \delta + \Pr[\overline{A}].
\end{align*}

Now, let $C^* \in \Delta^{\calX \times \calX}$ denote the optimal coupling between $\tK, \tK'$. We will take $C = (C^*)^s \in \Delta^{\calX^{s} \times \calX^{s}}$, the $s$-fold Kronecker product of $C^*$. Observe this is indeed a coupling between $\mathbf{L}, \mathbf{L'}$, and each coordinate of $(L, L') \sim C$ is simply a sample from $C^*$. Thus, the event $A$ above is equivalent to
\[
    \Pr[A] = \Pr_{(L, L') \sim (C^*)^s}[\emd(\tL, \tL') \leq (1+\sqrt{2})r + \frac{3}{s} \ln \frac{1}{\delta}],
\]
where the notation $(L, L') \sim (C^*)^s$ indicates that $L = \{x_1, \ldots, x_s\}$ and $L = \{y_1, \ldots, y_s\}$, and each $(x_i, y_i) \sim C^*$.
By Lemma~\ref{lem:coupling-dist}, we know that $\Pr[A] \geq 1-\delta$, and thus the above expectation is at most $2\delta$. This proof may be generalized easily to the central model.

\section{Omitted Proofs from Section~\ref{sec:utility}}\label{app:uti-proofs}

\subsection{Proof of Lemma~\ref{lem:lq-local}}
\lqacc*
As the sensitivity of the query is bounded by $\|F\|_2$, is easy to show (e.g.~\cite{dwork2014algorithmic}) that adding $d$-dimensional Gaussian noise with width $\|F\|_2\frac{r\sqrt{1.25 \ln \frac{1}{\delta}}}{\alpha}$ in each coordinate will satisfy $(\frac{\alpha}{r}, \delta)$ local $\emd$-DP. The standard deviation in each coordinate of $\hat{q}$ is thus $\|F\|_2\frac{r\sqrt{1.25 \ln \frac{1}{\delta}}}{\alpha \sqrt{n}}$, and this gives the desired expected error.

\subsection{Proof of Theorem~\ref{thm:general-hist-utility}}
\freqacc*
First, we will introduce notation.
    For a cluster label $b \in \calB$, let $\calX[b] \subseteq \calX$ denote the elements of $\calX$ in cluster $b$. Define $\tF[b] \in \R^{\calB \times \calC}$ to be the indices of $\tF$ in $\calX[b]$ (so that indices outside $\calX[b]$ are zeroed out). Define $\tK[b]$ similarly, and observe that $\tF[b], \tK[b]$ are not normalized. 
    
    For any estimate $\tF$, consider the following transportation plan from $\tF$ to $\tK$: For each $b \in \calB$, transfer $\tF[b]$ to $\tK[b]$ arbitrarily, and put any excess weight in the bin $(b, c')$ for an arbitrary $c' \in \calC$. The cost incurred by this is at most $r\|\tF[b] - \tK[b]\|_1 + r|\mu(\tF[b]) - \mu(\tK[b])|$, where $\mu(\cdot)$ denotes total mass of its argument. Finally, equalize the weights in the coordinates $\{(b, c') : b \in \calB\}$. The cost incurred for this step is at most $(1-r)\sum_{b \in \calB} |\mu(\tF[b]) - \mu(\tK[b])|$. Thus, the total cost is
    \begin{multline*}
        \sum_{b \in \calB} r\|\tF[b] - \tK[b]\|_1 + |\mu(\tF[b]) - \mu(\tK[b])| \\ = r \|\tF - \tK\|_1 + \sum_{b \in \calB} |\mu(\tF[b]) - \mu(\tK[b])|.
    \end{multline*}
    Observe that the term $\sum_{b \in \calB} |\mu(\tF[b]) - \mu(\tK[b])|$ is simply the $\ell_1$ distance between $\tF P$ and $\tK P$, where $P \in \R^{(\calB \times \calC) \times \calB}$ is the matrix that maps a vector to its sum along each coordinate in $\calB$. Thus, we may form the the upper bound
    \begin{align}
        \E[\emd(\tF, \tK)] &\leq r \E[\|\tF - \tK\|_1] + \E[\|(\tF - \tK)P\|_1] \nonumber \\
        &\leq r \E[\sqrt{st} \|\tF - \tK\|_2] + \E[\sqrt{s} \|(\tF - \tK)P\|_2] \nonumber \\
        &\leq r \sqrt{st\E[ \|\tF - \tK\|_2^2]} + \sqrt{s\E[ \|(\tF - \tK)P\|_2^2]}. \label{eq:emd-ub-2}
    \end{align}

    Now, we will bound \eqref{eq:emd-ub-2} given this estimator. In the following, let $A_x$ denote the $x$th row of the matrix $A$. Observe that
    \begin{align*}
        \tF - \tK &= 
        \frac{1}{mn} \sum_{i=1}^{mn} z_i B - \tK AB \\
        &= \frac{1}{mn} \sum_{i=1}^{mn} z_i B - \frac{1}{mn} \sum_{i=1}^{mn} e_{k_i} A B \\
        &= \frac{1}{mn} \sum_{i=1}^{mn} z_i B - \frac{1}{mn} \sum_{i=1}^{mn} A_{k_i} B \\
        &= \frac{1}{mn} \sum_{i=1}^{mn} (z_i-A_{k_i}) B \\
    \end{align*}
    Define $w_i = z_i - A_{k_i}$, and notice that $\E[w_i] = \E[z_i] - A_{k_i} = 0$. Thus,
    \begin{align*}
        \E[\|\tF - \tK\|_2^2] &= \E[(\tF - \tK) (\tF - \tK)^T] \\
        &= \left(\frac{1}{mn}\right)^2 \E\left[\left(\sum_{i=1}^{mn} w_i B \right) \left(\sum_{i=1}^{mn} B^T w_i^T\right)\right] \\
        &= \left(\frac{1}{mn}\right)^2 \sum_{i,j=1}^{mn} \E[w_i B B^T w_j^T] \\
        &= \left(\frac{1}{mn}\right)^2 \sum_{i=1}^{mn} \E[w_i B B^T w_i^T],
    \end{align*}
    where the last step holds because the $w_i$ are independent. Now, we have
    \begin{align*}
        \E[w_iB B^T w_i^T] &= \E[z_i B B^T z_i^T] - \E[A_{k_i} B B^TA_{k_i}^T] \\
        &= \E[z_i B B^T z_i^T] - e_{k_i}e_{k_i}^T \\
        &\leq \|B^T\|_{1,2}^2 - 1.
    \end{align*}
    Putting it all together, we have
    \[
        \E[\|\tF - \tK\|_2^2] \leq \frac{\|B^T\|_{1,2}^2 - 1}{mn} 
    \]
    To control the term $\|(\tF - \tK)P\|_2^2$ in~\eqref{eq:emd-ub-2}, using similar steps, we may write
    \[
        \E[\|(\tF - \tK)P\|_2^2] \leq \left(\frac{1}{mn}\right)^2 \sum_{i=1}^{mn} \E[w_i B P P^T B^T w_i^T].
    \]
    Similarly, for any $i$ we have
    \begin{align*}
        \E[w_i B P P^T B^T w_i^T] &= \E[z_i B P P^T B^T z_i^T] - \E[A_{k_i}B P P^TBA_{k_i}^T] \\
        &\leq \|P^TB^T\|_{1,2}^2 - 1,
    \end{align*}
    and this implies 
    \[
        \E[\|(\tF - \tK)P\|_2^2] \leq \frac{\|P^TB^T\|_{1,2}^2 - 1}{mn}.
    \]
    Substituting into~\eqref{eq:emd-ub-2}, we obtain the desired bound.

\subsection{Proof of Theorem~\ref{thm:krr-utility}}\label{app:proof-krr-utility}
\krracc*
    For positive constants $a, b, c$, the matrix $A$ is given by
    \[
        A = a I_\calX + (b I_\calB + c \textbf{1}_\calB) \otimes \textbf{1}_{\calC},
    \]
    where
    \begin{align*}
        a &= \frac{e^{\alpha_0} - e^{(1-r)\alpha_0}}{e^{\alpha_0} + (t-1)e^{(1-r)\alpha_0} + (s-1)t} \\
        b &= \frac{e^{(1-r)\alpha_0} - 1}{e^{\alpha_0} + (t-1)e^{(1-r)\alpha_0} + (s-1)t} \\
        c &= \frac{1}{e^{\alpha_0} + (t-1)e^{(1-r)\alpha_0} + (s-1)t}.
    \end{align*}
    The matrix $A$ is actually invertible, and 
    \[
        A^{-1} =a' I_\calX + (b' I_\calB + c' \textbf{1}_\calB) \otimes \textbf{1}_{\calC},
    \]
    where 
    \begin{align*}
        a' &= \frac{e^{\alpha_0} + (t-1)e^{(1-r)\alpha_0} + (s-1)t}{e^{\alpha_0} - e^{(1-r)\alpha_0}} \\
        b' &= -\frac{(e^{(1-r)\alpha_0}-1)(e^{\alpha_0} + (t-1)e^{(1-r)\alpha_0} + (s-1)t)}{(e^{\alpha_0} - e^{(1-r)\alpha_0})(e^{\alpha_0} + (t-1) e^{(1-r)\alpha_0}-t)} \\
        c' &= -\frac{1}{e^{\alpha_0} + (t-1) e^{(1-r)\alpha_0} - t}.
    \end{align*}
    It is easy to show the identity that $a' + tb' + stc' = 1$.
    Each row of $A^{-1}$ looks like one copy of $a' + b' + c'$, $t-1$ copies of $b' + c'$, and $(s-1)t$ copies of $c'$. Thus,
    \begin{align*}
        &\|(A^{-1})^T\|_{1 \rightarrow 2}^2-1 \\ &= (a' + b' + c')^2 + (t-1) (b' +  c')^2 + (s-1)t (c')^2 -1 \\
        &\;= (1-(t-1)b' - (st-1)c' )^2 + (t-1)(b')^2 \\
        &\qquad + 2(t-1) b'c' + (t-1)(c')^2 + (s-1)t (c')^2 - 1\\
        &\;= -2(t-1)b' - 2 (st-1)c' + 2(t-1)(st-1)c'b' \\
        &\qquad+ (t-1)^2(b')^2 + (st-1)^2 (c')^2 + (t-1)(b')^2 \\
        &\qquad + 2(t-1) b'c' + (t-1)(c')^2 + (s-1)t (c')^2 \\
        &\leq (tb')^2 + 2st^2 b' c' + (stc')^2 - 2tb' - 2stc' \\
        &\leq (tb' + stc')^2 - 2(tb' + stc') \\
        &= (a')^2 - 1.
    \end{align*}
    Substituting, we obtain
    \begin{align*}
        (a')^2-1 &\leq \left(\frac{te^{(1-r)\alpha_0} + (s-1)t}{e^{\alpha_0}-e^{(1-r)\alpha_0}}\right)^2 + 2 \left(\frac{te^{(1-r)\alpha_0} + (s-1)t}{e^{\alpha_0}-e^{(1-r)\alpha_0}}\right) \\
        &\leq \frac{t^2 e^{2\alpha_0} + 2(s-1)t^2e^{\alpha_0} + (s-1)^2t^2 + 2te^{2\alpha_0} + 2(s-1)te^{\alpha_0}}{(e^{\alpha_0} - e^{(1-r)\alpha_0})^2} \\
        &\leq \left(t \frac{e^{\alpha_0} + s}{e^{\alpha_0} - e^{(1-r)\alpha_0}}\right)^2
    \end{align*}
    Next, it's easy to see that
    \begin{align*}
        A^{-1} P &= \left(a' I_\calX + ((b' I_\calB + c'\textbf{1}_\calB) \otimes \textbf{1}_\calC)\right)(I_\calB \otimes 1_\calC) \\
        &= a' I_\calB \otimes 1_\calC + (b' I_\calB + c'\textbf{1}_\calB) \otimes t 1_\calC
    \end{align*}
    Each row of the latter consists of one copy of $a' + tb' + tc'$ and $s-1$ copies of $tc'$. This gives us
    \begin{align*}
    \|(A^{-1}P)^T\|_{1 \rightarrow 2}^2 - 1 &= (a' + tb' + tc')^2 + (s-1) (tc')^2 - 1 \\
    &= (1 - (s-1)tc')^2 + (s-1)(t c')^2 - 1 \\
    &= s(s-1)(t c')^2 - 2(s-1)(tc') \\
    &\leq (stc')^2 - 2(stc').
    \end{align*}
    Substituting, we obtain
    \begin{align*}
        (stc')^2 - 2(stc') &= \frac{st(st + 2(e^{\alpha_0} + (t-1) e^{(1-r)\alpha_0} - t))}{(e^{\alpha_0} + (t-1) e^{(1-r)\alpha_0} - t)^2}\\
        &\leq \frac{st^2(s + 2(e^{\alpha_0} - 1))}{(e^{\alpha_0} + (t-1) e^{(1-r)\alpha_0} - t)^2}
    \end{align*}
    Applying Theorem~\ref{thm:general-hist-utility}, we obtain
    \begin{align*}
        &\E[\emd(\tF, \tK)] \\
        &\;\leq r \sqrt{\frac{st((a')^2-1)}{mn}} + \sqrt{\frac{s^2t(st(c')^2 - 2c')}{mn}}  \\
        &\;\leq r \sqrt{\frac{st^3}{mn}}\left(\frac{e^{\alpha_0} + s}{e^{\alpha_0} - e^{(1-r)\alpha_0}}\right) + \sqrt{\frac{s^2 t^2}{mn}}\left( \frac{\sqrt{s + 2(e^{\alpha_0} - 1)}}{e^{\alpha_0} + (t-1)e^{(1-r)\alpha_0}-t}\right),
    \end{align*}
    finishing the claim.
To obtain an asymptotic bound (with budget $\alpha = \epsilon / r$), we plug in~\eqref{eq:eps-amp}, which says that we may set
\[ 
    \alpha_0 = \begin{cases} \frac{\alpha}{32\sqrt {m \ln(4m\exp(\alpha) / \delta)}} & \text{ if $\alpha \leq 32 \sqrt{m \ln (4m\exp(\alpha) / \delta)}$} \\ 2\ln \left( \frac{\epsilon}{16r\sqrt{m \ln(4m\exp(\alpha) / \delta)}} \right) & 32 r\sqrt{m \ln (4m\exp(\alpha) / \delta)} \leq \epsilon \leq rm \end{cases}.
\]
In the first case, we have
\begin{gather*}
    \frac{e^{\alpha_0} + s}{e^{\alpha_0} - e^{(1-r)\alpha_0}} \leq \frac{s}{r\alpha_0} \\
    \frac{\sqrt{s + 2(e^{\alpha_0} - 1)}}{e^{\alpha_0} + (t-1)e^{(1-r)\alpha_0}-t} \leq \frac{2\sqrt{s}}{\alpha_0 t},
\end{gather*}
and this implies
\begin{align*}
    \E[\emd(\tK, \tF)] &\leq \sqrt{\frac{s^3t^3}{mn}}\frac{1}{\alpha_0} + \sqrt{\frac{s^3t^2}{mn}}\frac{2}{\alpha_0} \\
    &\leq \frac{64r(st)^{3/2}\sqrt{\ln(4m \exp(\tfrac{\epsilon}{r}) / \delta)}}{\epsilon \sqrt{n}}.
\end{align*}
In the second, we have
\begin{gather*}
    \frac{e^{\alpha_0} + s}{e^{\alpha_0} - e^{(1-r)\alpha_0}} = \frac{1 + s / e^{\alpha_0}}{1 - e^{-r\alpha_0}} \leq 2\frac{1 + s e^{-\alpha_0}}{\min \{ 1, r \alpha_0\} } \\
    \frac{\sqrt{s + 2(e^{\alpha_0} - 1)}}{e^{\alpha_0} + (t-1)e^{(1-r)\alpha_0}-t} \leq \frac{\sqrt{2(s + e^{\alpha_0})}}{e^{\alpha_0}}.
\end{gather*}
This implies
\begin{align*}
    &\E[\emd(\tK, \tF)] \\
    &\;\leq 2\left(1 + \tfrac{1}{r\alpha_0}\right)(1 + se^{-\alpha_0})r \sqrt{\frac{st^3}{mn}} + 2\left( e^{-\alpha_0}\sqrt{s} + e^{-\alpha_0/2}\right)\sqrt{\frac{s^2t^2}{mn}} \\
    &\;\leq 2(1 + se^{-\alpha_0}) \sqrt{\frac{st^3}{mn}} + 2\left( e^{-\alpha_0}\sqrt{s} + e^{-\alpha_0/2}\right)\sqrt{\frac{s^2t^2}{mn}} \\
    &\;\leq 2(1 + \sqrt{s}e^{-\alpha_0/2}+ se^{-\alpha_0}) \sqrt{\frac{st^3}{mn}} \\
    &\;\leq 4(1 + se^{-\alpha_0}) \sqrt{\frac{st^3}{mn}}\\
    &\;\leq 4\sqrt{\frac{st^3}{mn}} + 1024\frac{r^2 \sqrt{m s^3t^3}}{\epsilon^2\sqrt{n}}\ln(4m \exp(\epsilon/r)/\delta) \\
    &\;\leq 4\sqrt{\frac{st^3}{mn}} + 32\frac{r \sqrt{s^3t^3}}{\epsilon\sqrt{n}}\sqrt{\ln(4m \exp(\epsilon/r)/\delta)}.
\end{align*}
In both cases, the desired bound has been shown.

\subsection{Proof of Lemma~\ref{lem:lap-freq}}
    We use the bound that $\emd(\tK, \tF) \leq \|\tK - \tF\|_1$. In each coordinate, the expected error introduced by the Laplace noise is at most $O(\frac{1}{n\epsilon})$, and thus $\E[\|\tK - \tF\|_1] \leq O(\frac{k}{n\epsilon})$. Normalizing will only reduce this error.

\subsection{Proof of Corollary~\ref{coro:krr-utility-central}}
\krracccent*
Our mechanism will simply combine the itemsets into one large itemset $K$ with $mn$ elements (and one global user), and then apply the algorithm of Theorem~\ref{thm:krr-utility}. By Theorem~\ref{thm:eps-amp}, the privacy budget is $(\alpha, \delta)$, where 
\begin{gather*}
    \alpha_0 = \begin{cases} \frac{\alpha \sqrt{n}}{32\sqrt {m \ln(4m e^{\alpha}/ \delta)}} & \text{ if $\alpha \sqrt{n} \leq 32 \sqrt{m \ln (4m e^{\alpha}/ \delta)}$} \\ 2\ln \left( \frac{\alpha \sqrt{n}}{16\sqrt{m  \ln(4m e^{\alpha} / \delta)}} \right) & 32 \sqrt{m \ln (4m e^{\alpha}/ \delta)} < \alpha \sqrt{n} < m \sqrt{n} \end{cases}
\end{gather*}
 Following the proof in Section~\ref{app:proof-krr-utility}, 
    (and setting $\alpha = \frac{\epsilon}{r}$), we can show that
    \[
        \E[\emd(\tK, \tF)] \leq 4\sqrt{\frac{st^3}{mn}} + 64\frac{r \sqrt{s^3t^3}}{\epsilon n}\sqrt{\ln(4m \exp(\epsilon/r)/\delta)}.
    \]

\end{document}